\newtheorem{theorem}{Theorem}[section]
\newtheorem{lemma}[theorem]{Lemma}
\newtheorem{example}[theorem]{Example}
\newtheorem{corollary}[theorem]{Corollary}
\newtheorem{remark}[theorem]{Remark}
\newtheorem{Prop}[theorem]{Proposition}
\newcommand{\N}{\mathbb{N}}
\newcommand{\Z}{\mathbb{Z}}
\newcommand{\R}{\mathbb{R}}
\newcommand{\C}{\mathbb{C}}
\newcommand{\1}{\mathbbm 1}
\newcommand{\ep}{\epsilon}
\newcommand{\bs}{\backslash}
\newcommand{\ra}{\rightarrow}
\def\tr{\operatorname{tr}}
\def\Ran{\operatorname{Ran}}
\newcommand{\ten}{\otimes}
\newcommand{\mc}{\mathcal}
\newcommand{\inv}{^{-1}}
\newcommand{\bm}{\boldmath}
\newcommand{\pa}{\mathcal{P}_{\textit{ar}}}
\newcommand{\X}{\mathbf{X}}
\newcommand{\Y}{\mathbf{Y}}
\newcommand{\cdott}{\displaystyle\raisebox{-7.2pt}{\scalebox{3.8}{\ensuremath{\cdot}}}}
\begin{document}

\title{On the nonlinearity of quantum dynamical entropy}
\author{George Androulakis} 
\address{Department of Mathematics, University of South Carolina,
	Columbia, SC 29208 U.S.A.}
\email{giorgis@math.sc.edu}	
\email{dw7@math.sc.edu}

\author{Duncan Wright}
\thanks{\textit{2000 Mathematics Subject Classification.} Primary: 46L55, 94A17 ; Secondary: 37M25, 60G99, 82C10.}
\thanks{\textit{Key words and phrases.} Quantum Dynamical Entropy; Quantum Random Walk.}
\thanks{This work is part of the second author's PhD thesis.}
\thanks{The second author was partially supported by a SPARC Graduate Research Grant from the Office of the Vice President for Research at the University of South Carolina.} 
\maketitle

\begin{abstract}
Linearity of a dynamical entropy means that the dynamical entropy of the $n$-fold composition of a dynamical map with itself is equal to $n$ times the dynamical entropy of the map for every positive integer $n$. 
We show that the quantum dynamical entropy introduced by S\l{}omczy\'nski and \.Zyczkowski is nonlinear in the time interval between successive measurements of a quantum dynamical system. This is in contrast to Kolmogorov-Sinai dynamical entropy for classical dynamical systems, which is linear in time. We also compute the exact values of quantum dynamical entropy for the Hadamard walk with varying L\"uders-von Neumann instruments and partitions. 
\end{abstract}


\section{Introduction}

Entropy is a crucial concept in thermodynamics, dynamical systems and information theory. 
It was first introduced mathematically by Boltzmann near the end of the 19th century as a tool to measure disorder for the positions and velocities of gas molecules \cite{Boltzmann66}. 
Almost eighty years later, Shannon became the father of the new field of information theory when he produced his groundbreaking works where he used entropy as a measure of information transfer between two sources \cite{Shannon48a, Shannon48b}. 
Dynamical entropy in classical systems can be seen from two distinct viewpoints: The information theoretic viewpoint (see e.g. \cite{CT91}) which uses entropy rate of stochastic processes and the dynamical systems viewpoint (see e.g. \cite{Downarowicz11}) which uses the Kolmogorov-Sinai (KS) dynamical entropy. 
We prove that the connection between entropy rate and KS entropy is seen through the symbolic dynamics of a stochastic process, which is a dynamical system with KS entropy equal to the entropy rate of the original stochastic process (see Subsection~\ref{symbolic dynamics}). 
On the other hand, entropy rate and KS entropy are inherently different as the former is probabilistic in nature and the latter is deterministic (see Subsection~\ref{differences}). 

There have been many successful attempts to generalize KS entropy to a quantum dynamical entropy in \cite{CNT87, AF94, SZ94, AOW97, KOW99} and more. 
The Connes-Narnhofer-Thirring (CNT) \cite{CNT87}, Alicki-Lindblad-Fannes (ALF) \cite{AF94}, Accardi-Ohya-Watanabe (AOW) \cite{AOW97} and Kossakowski-Ohya-Watanabe (KOW) \cite{KOW99} entropies have had the most attention in the literature as they can be computed exactly for several examples of quantum dynamical systems. 
However we will investigate the S\l{}omczy\'nski-\.Zyczkowski (SZ) quantum dynamical entropy \cite{SZ94} which uses a semi-classical approach and was developed using the general notions of measurements, instruments, phase space and state space developed by Edwards \cite{Edwards70} and Davies and Lewis \cite{DL70, Davies76}.   
In contrast to both the CNT and ALF entropies, SZ dynamical entropy can obtain nonzero values for quantum systems with finite-dimensional Hilbert spaces. 
Quantum algorithms are a natural example where this property is desirable. 
To demonstrate the applicability of SZ dynamical entropy for quantum algorithms, we will introduce unitary quantum random walks which have been shown to be universal for quantum computation \cite{LCET10} and give exact computations of the SZ dynamical entropy for the Hadamard walk measured with L\"uders-von Neumann instruments. 

The paper is organized as follows: 
In Section~\ref{classical systems} we recall the definitions of entropy rate and KS entropy. 
In Subsection~\ref{symbolic dynamics} we introduce symbolic dynamics and establish the connection between the two notions of entropy and in Subsection~\ref{differences} we show how they two differ. 
In Section~\ref{measurements} we recall the general notions of measurements through state space, phase space, observables and instruments. 
In Section~\ref{Dyn Ent of UQRWs} we recall the notions of SZ dynamical entropy and show that it is nonlinear in time intervals between successive measurements, in contrast to KS entropy for classical dynamical systems. 
In Section~\ref{Hadamard walk section} we recall unitary quantum random walks and, in particular, the Hadamard walk. 
Lastly, in Section~\ref{UQRW dynamical entropy} we give exact calculations of the SZ dynamical entropy of the Hadamard walk with varying instruments, partitions and time intervals between successive measurements. These calculations verify the nonlinearity of the SZ dynamical entropy via an explicit example (see Theorem~\ref{coarser instrument}). 


\section{Entropy in Classical Systems}\label{classical systems}

\subsection{Entropy from the Dynamical Systems Point of View}\label{dynamical systems}

Let $(\Omega, \Sigma)$ be a measurable space. 
Define $\pa(\Omega)$ to be the collection of all finite or countably infinite measurable partitions of $\Omega$. 
Define a partial ordering on $\pa(\Omega)$ such that, for any $\mc C,\mc D\in \pa(\Omega)$, $\mc D\le \mc C$ whenever, for every $D\in \mc D$ there exists $\mc C_D\subseteq \mc C$ such that $D=\cup \mc C_D$. 
If $\mc D\le \mc C$ we say that \textbf{\bm{$\mc C$} is finer than \bm{$\mc D$}} or that \textbf{\bm{$\mc D$} is coarser than \bm{$\mc C$.}} 

Whenever $\Omega$ is finite or countably infinite and $\Sigma=\mc P(\Omega)$, where $\mc P(\Omega)$ is the power set of $\Omega$, we call $(\Omega,\Sigma)$ (or simply $\Omega$) a \textbf{discrete space}. 
In that case, we will refer to the partition of $\Omega$ into singletons  $\{\{\omega\}\}_{\omega\in \Omega}$ as the \textbf{atomic partition}. 
Whenever $\Omega$ is a discrete space it is clear that the atomic partition $\mc A$ of $\Omega$ is countable and measurable; i.e. $\mc A\in\pa(\Omega)$. 
Furthermore, in this case, $\mc A$ is the finest partition in $\pa(\Omega)$; i.e. $\mc C\le \mc A$ for any $\mc C\in \pa(\Omega)$.  

For any $\mc C, \mc D\in \pa(\Omega)$, the join (or least upper bound) of $\mc C$ and $\mc D$ is given by the partition $\mc C\vee \mc D$ which contains all sets of the form $C\cap D$ for all $C\in\mc C$ and $D\in \mc D$.
Given a finite collection of partitions $\{\mc C_k\}_{k=1}^n\subseteq \pa(\Omega)$, we define the join $\vee_{k=1}^n \mc C_k$ to be the partition containing exactly the sets of the form $\cap_{k=1}^n C_{k}$, where $C_{k}\in \mc C_k$ for all $1\le k\le n$. 

Fix a probability space $(\Omega,\Sigma, \mu)$. 
Given any partition $\mc C\in \pa(\Omega)$ we define the \textbf{entropy of \bm{$\mc C$}} by 
\begin{equation*}
H_\mu(\mc C):=\sum_{C\in \mc C} \eta(\mu(C)),
\end{equation*}

\noindent where $\eta:[0,\infty)\ra [0,\infty)$ is given by $\eta(x)=-x\ln x$, for $x>0$ and we agree that $\eta(0)=0$. 
When there is no confusion about the probability measure in question, we will simply write $H(\mc C)$ instead of $H_\mu(\mc C)$. 

\begin{remark}[\text{\cite[Page 23]{Downarowicz11}}]\label{subadditivity}
	It is well known that $\eta$ is countably subadditive; i.e. $\eta(\sum_n a_n)\le \sum_n \eta(a_n)$ for any nonnegative sequence $\{a_n\}_n$.
	This gives that for any probability space $(\Omega,\Sigma,\mu)$ and any two partitions $\mc C,\mc D\in \pa(\Omega)$ satisfying $\mc D\le \mc C$, we have that $H(\mc D)\le H(\mc C)$. 
\end{remark} 

Fix a probability space $(\Omega,\Sigma,\mu)$. 
Recall that, for any two sets $C,D\in \Sigma$, the \textbf{conditional probability of \bm{$C$} given \bm{$D$}} is given by $\mu(C|D):=\mu(C\cap D)/\mu(D)$. 
Given two partitions, $\mc C,\mc D\in\pa(\Omega)$, the \textbf{conditional entropy of \bm{$\mc C$} given \bm{$\mc D$}} is given by 
\begin{equation}\label{conditional entropy 3}
H(\mc C|\mc D):= \sum_{D\in\mc D} \mu(D)\sum_{C\in\mc C} \eta(\mu(C|D)) =-\sum_{\substack{C\in\mc C\\ D\in\mc D}} \mu(C\cap D)\ln(\mu(C|D)).
\end{equation} 

\noindent The so-called chain rule follows. 

\begin{theorem}[Chain Rule, \text{\cite[Equation 1.4.3]{Downarowicz11}}]\label{chain rule} 
Let $(\Omega,\Sigma,\mu)$ be a probability space and $\mc C,\mc D\in\pa(\Omega)$. 
Then
\begin{equation*}
H(\mc C\vee \mc D)=H(\mc D) + H(\mc C|\mc D). 
\end{equation*}

\noindent More generally, given a finite collection of partitions $\mc C_0,\ldots, \mc C_n\in\pa(\Omega)$, we have 
\begin{equation*}
H(\vee_{k=0}^n \mc C_k)=H(\mc C_0) + \sum_{k=1}^{n} H(\mc C_k|\vee_{\ell=0}^{k-1}\mc C_\ell). 
\end{equation*}
\end{theorem}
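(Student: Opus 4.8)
The plan is to prove the Chain Rule directly from the definitions of $H(\mc C\vee\mc D)$ and $H(\mc C|\mc D)$, and then bootstrap the general statement by induction on $n$. First I would observe that every cell of the join $\mc C\vee\mc D$ is of the form $C\cap D$ with $C\in\mc C$, $D\in\mc D$, so that
\begin{equation*}
H(\mc C\vee\mc D)=\sum_{C\in\mc C,\,D\in\mc D}\eta(\mu(C\cap D))=-\sum_{C\in\mc C,\,D\in\mc D}\mu(C\cap D)\ln\mu(C\cap D).
\end{equation*}
The key algebraic step is to use $\mu(C\cap D)=\mu(D)\mu(C|D)$ on the sets where $\mu(D)>0$ (cells with $\mu(D)=0$ contribute nothing to either side), which splits the logarithm: $\ln\mu(C\cap D)=\ln\mu(D)+\ln\mu(C|D)$. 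Substituting gives
\begin{equation*}
H(\mc C\vee\mc D)=-\sum_{C,D}\mu(C\cap D)\ln\mu(D)-\sum_{C,D}\mu(C\cap D)\ln\mu(C|D).
\end{equation*}
For the first sum I would carry out the inner summation over $C$ first, using that $\mc C$ partitions $\Omega$ so $\sum_{C\in\mc C}\mu(C\cap D)=\mu(D)$; this collapses it to $-\sum_{D\in\mc D}\mu(D)\ln\mu(D)=H(\mc D)$. The second sum is exactly the right-hand expression in the definition~\eqref{conditional entropy 3} of $H(\mc C|\mc D)$, so the two-partition case follows.

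For the general statement I would induct on $n$. The base case $n=0$ is the trivial identity $H(\mc C_0)=H(\mc C_0)$ (or one can take $n=1$, which is precisely the two-partition case with $\mc C=\mc C_1$, $\mc D=\mc C_0$). For the inductive step, assume the formula holds for $n-1$, write $\vee_{k=0}^{n}\mc C_k=\mc C_n\vee\bigl(\vee_{k=0}^{n-1}\mc C_k\bigr)$, and apply the two-partition chain rule with $\mc C=\mc C_n$ and $\mc D=\vee_{k=0}^{n-1}\mc C_k$ to get $H(\vee_{k=0}^n\mc C_k)=H(\vee_{k=0}^{n-1}\mc C_k)+H(\mc C_n\mid\vee_{\ell=0}^{n-1}\mc C_\ell)$; then expand $H(\vee_{k=0}^{n-1}\mc C_k)$ by the inductive hypothesis and relabel to obtain the stated telescoped sum.

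The only genuine subtlety — not really an obstacle — is the careful handling of zero-measure cells so that all manipulations with $\mu(C|D)$ and $\ln$ are legitimate; this is handled by the convention $\eta(0)=0$ and by restricting every sum to cells of positive measure, noting that the discarded terms are zero on both sides. Everything else is bookkeeping: interchanging the order of the finite-or-countable double sums (justified since all summands in $H$ are nonnegative, or by absolute convergence) and the fact that partitions sum to the whole space. I expect the write-up to be short, with the main care going into stating the positive-measure restriction cleanly before splitting the logarithm.
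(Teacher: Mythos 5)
Your proof is correct and is the standard argument: the paper itself gives no proof of this theorem, citing it to Downarowicz, and your decomposition of $\eta(\mu(C\cap D))$ via $\ln\mu(C\cap D)=\ln\mu(D)+\ln\mu(C|D)$ into the two nonnegative sums $H(\mc D)$ and $H(\mc C|\mc D)$, followed by induction for the general case, is exactly the textbook route. Your attention to zero-measure cells and to the nonnegativity of both summands (which justifies the rearrangement and also makes the identity valid in $[0,\infty]$ for countable partitions) covers the only real subtleties.
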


\noindent Also, from the definition of conditional entropy  (Equation~\eqref{conditional entropy 3}) 
and the countable subadditivity of $\eta$ 
(Remark~\ref{subadditivity})
we have, for any $\mc B, \mc C,\mc D\in \pa(\Omega)$ satisfying $\mc B\le \mc D$, that 
\begin{equation}\label{conditional subadditivity}
0\le H(\mc C|\mc D)\le H(\mc C|\mc B). 
\end{equation}

\noindent See \cite[Section 1.4]{Downarowicz11} for more details on conditional entropy of partitions. 
The following theorem will be used throughout the article. 
In the proof we will use the well known C\'esaro mean Theorem which states that, for any sequence of real numbers $\{a_n\}_{n=0}^\infty$ converging to some element, $a\in\R\cup \{\infty\}$, the sequence $\{b_n\}_{n=0}^\infty$ given by $b_n=\frac{1}{n}\sum_{k=0}^{n-1} a_k$, for each $n\in\N$, also converges to $a$. 

\begin{theorem}\label{entropy equivalence}
Let $(\Omega,\Sigma,\mu)$ be a probability space and $\{\mc C_n\}_{n=0}^\infty\subseteq \pa(\Omega)$ be a sequence of partitions. 
If $\lim_{n\ra\infty} H(\mc C_n|\vee_{k=0}^{n-1} \mc C_k)=a$, then 
$\lim_{n\ra\infty}\frac{1}{n} H(\vee_{k=0}^{n-1}\mc C_k)=a$. 
\end{theorem}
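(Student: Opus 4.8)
The plan is to apply the chain rule (Theorem~\ref{chain rule}) to rewrite $H(\vee_{k=0}^{n-1}\mc C_k)$ as a telescoping sum of conditional entropies, and then invoke the Cés\`aro mean theorem. Concretely, by the general form of the chain rule,
\[
H(\vee_{k=0}^{n-1}\mc C_k) = H(\mc C_0) + \sum_{k=1}^{n-1} H(\mc C_k \mid \vee_{\ell=0}^{k-1}\mc C_\ell),
\]
so if we set $a_0 := H(\mc C_0)$ and $a_k := H(\mc C_k \mid \vee_{\ell=0}^{k-1}\mc C_\ell)$ for $k\ge 1$, then $\frac{1}{n}H(\vee_{k=0}^{n-1}\mc C_k) = \frac{1}{n}\sum_{k=0}^{n-1} a_k$, which is exactly the Cés\`aro average $b_n$ of the sequence $\{a_k\}$.

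By hypothesis $a_k \to a$ as $k\to\infty$. The Cés\`aro mean theorem as stated in the excerpt applies to sequences converging to an element of $\R\cup\{\infty\}$, and by Equation~\eqref{conditional subadditivity} (or directly from the definition) each $a_k \ge 0$, so $a \in [0,\infty]$ and the hypothesis of that theorem is met. Therefore $b_n = \frac{1}{n}H(\vee_{k=0}^{n-1}\mc C_k) \to a$, which is the desired conclusion.

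One point that should be addressed for rigor: the chain rule as stated applies to \emph{finite} joins of partitions in $\pa(\Omega)$, and one should note that $\vee_{k=0}^{n-1}\mc C_k \in \pa(\Omega)$ for each finite $n$ (a finite join of finite-or-countable measurable partitions is again finite-or-countable and measurable), so every term $H(\vee_{k=0}^{n-1}\mc C_k)$ and every conditional entropy appearing is well-defined. There is no real obstacle here: the theorem is essentially a repackaging of the chain rule plus a classical summability fact, and the only thing to be careful about is the bookkeeping of indices (the sum runs $k=0$ to $n-1$, the join has $n$ partitions, and the $k=0$ term is the unconditioned entropy $H(\mc C_0)$, which we simply absorb as the first term of the averaged sequence). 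The potential subtlety of the $a=\infty$ case is already handled by the stated form of the Cés\`aro theorem, so nothing extra is needed.
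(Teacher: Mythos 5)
Your proposal is correct and is essentially identical to the paper's own proof: both set $a_0=H(\mc C_0)$, $a_k=H(\mc C_k\mid\vee_{\ell=0}^{k-1}\mc C_\ell)$, apply the chain rule (Theorem~\ref{chain rule}) to identify $\frac{1}{n}H(\vee_{k=0}^{n-1}\mc C_k)$ with the C\'esaro average of $\{a_k\}$, and conclude by the C\'esaro mean theorem (which the paper also states for limits in $\R\cup\{\infty\}$). Your additional remarks on well-definedness and the infinite-limit case are fine but not needed beyond what the paper already records.
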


\begin{proof}
Set $a_0=H(\mc C_0)$ and $a_n= H(\mc C_n|\vee_{k=0}^{n-1} \mc C_k)$, for each $n\in\N$. 
Then, by assumption, $a_n$ converges to $a\in\R\cup \{\infty\}$. 
For each $n\in\N$, set $b_n=\frac{1}{n}\sum_{k=0}^{n-1}a_k$. 
Then, by Theorem~\ref{chain rule}, $b_n=\frac{1}{n}H(\vee_{k=0}^{n-1}\mc C_k)$ which converges to $a$ by the C\'esaro mean Theorem. 
\end{proof}

Next we wish to define the Kolmogorov-Sinai (KS) dynamical entropy. 
In order to do so, we must first introduce some dynamics on the probability space $(\Omega,\Sigma,\mu)$.  
This role will be played by a measurable map $f:\Omega\ra\Omega$. 
We call the quadruple $(\Omega,\Sigma,\mu,f)$ a \textbf{dynamical system} (DS). 
Furthermore, whenever $\mu(A)=\mu(f\inv(A))$ for all $A\in \Sigma$, we say that \textbf{\bm{$\mu$} is \bm{$f$}-invariant} and call the DS $(\Omega,\Sigma,\mu,f)$ \textbf{stationary.} 

Fix a DS $(\Omega,\Sigma,\mu,f)$ and a partition $\mc C\in \pa(\Omega)$. 
For each $k\in \N_0$, define the $k^\text{th}$-preimage of $\mc C$ under $f$ by $f^{-k}(\mc C):=\{f^{-k}(C)\}_{C\in \mc C}$, where $f^0$ denotes the identity map. 
Note that, for each $\mc C\in \pa(\Omega)$, $f\inv (\mc C)\in \pa(\Omega)$ and hence $f^{-k}(\mc C)\in \pa(\Omega)$ for every $k\in\N_0$.
The \textbf{Kolmogorov-Sinai (KS) entropy of \bm{$(\Omega,\Sigma,\mu,f)$} with respect to \bm{$\mc C$}} is given by 
\begin{equation}\label{invariant state limits}
h^{KS}(f,\mc C)=\lim_{n\ra\infty} \frac{1}{n}H(\vee_{k=0}^{n-1} f^{-k}(\mc C)),
\end{equation}

\noindent whenever this limit exists.

\begin{remark}\label{pullbacks}
	Given a DS $(\Omega,\Sigma,\mu,f)$ and a partition $\mc C\in\pa(\Omega)$, it is clear that $\vee_{k=0}^{n-1} f^{-k}(\mc C)$ consists exactly of sets of the form $f^{-(n-1)}(A_{n-1})\cap \cdots \cap f\inv (A_1)\cap A_0$ for all $A_0, \ldots, A_{n-1}\in\mc C$. 
\end{remark} 

From Theorem~\ref{entropy equivalence}, we have   
\begin{equation}\label{KS entropy 2}
h^{KS}(f,\mc C)=\lim_{n\ra\infty} H(f^{-n}(\mc C)|\vee_{k=0}^{n-1} f^{-k}(\mc C)),
\end{equation}

\noindent whenever this limit exists. 

\begin{corollary}\label{stationary DS}
	Let $(\Omega,\Sigma,\mu,f)$ be a stationary DS and $\mc C\in\pa(\Omega)$. 
	Then the limit in Equation~\eqref{KS entropy 2}, and hence the limit in Equation~\eqref{invariant state limits}, exists and $h^{KS}(f,\mc C)=\lim_{n\ra\infty} H(f^{-n}(\mc C)|\vee_{k=0}^{n-1} f^{-k}(\mc C))$.
\end{corollary}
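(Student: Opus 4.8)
The plan is to show that the sequence $a_n := H\big(f^{-n}(\mc C)\,\big|\,\vee_{k=0}^{n-1} f^{-k}(\mc C)\big)$, $n\ge 1$, is non-increasing. Since Equation~\eqref{conditional subadditivity} gives $a_n\ge 0$ for every $n$, a non-increasing sequence bounded below converges; this is precisely the assertion that the limit in Equation~\eqref{KS entropy 2} exists. Once that is established, Theorem~\ref{entropy equivalence}, applied with the choice $\mc C_n = f^{-n}(\mc C)$, yields that $\tfrac{1}{n}H\big(\vee_{k=0}^{n-1} f^{-k}(\mc C)\big)$ converges to the same value $a:=\lim_n a_n$, so the limit in Equation~\eqref{invariant state limits} exists and $h^{KS}(f,\mc C)=a=\lim_{n\ra\infty} H\big(f^{-n}(\mc C)\,\big|\,\vee_{k=0}^{n-1} f^{-k}(\mc C)\big)$, which is the displayed formula.

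The two ingredients for the monotonicity of $\{a_n\}$ are: (i) the inequality $H(\mc C\,|\,\mc D)\le H(\mc C\,|\,\mc B)$ whenever $\mc B\le \mc D$, recorded in Equation~\eqref{conditional subadditivity}; and (ii) an $f^{-1}$-invariance statement following from stationarity. For (ii), since $\mu(f^{-1}(A))=\mu(A)$ for every $A\in\Sigma$ and since taking preimages under $f$ commutes with finite intersections, every summand in the defining formula Equation~\eqref{conditional entropy 3} is unchanged when $\mc C$ and $\mc D$ are replaced by $f^{-1}(\mc C)$ and $f^{-1}(\mc D)$; hence $H\big(f^{-1}(\mc C)\,\big|\,f^{-1}(\mc D)\big)=H(\mc C\,|\,\mc D)$. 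Likewise, preimages commute with joins, so $f^{-1}\big(\vee_{k=0}^{n-1} f^{-k}(\mc C)\big)=\vee_{k=1}^{n} f^{-k}(\mc C)$.

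With these in hand I would write $\vee_{k=0}^{n} f^{-k}(\mc C)=\mc C\vee f^{-1}\big(\vee_{k=0}^{n-1} f^{-k}(\mc C)\big)$ and $f^{-(n+1)}(\mc C)=f^{-1}(f^{-n}(\mc C))$, and then apply (i) with $\mc B=f^{-1}\big(\vee_{k=0}^{n-1} f^{-k}(\mc C)\big)\le \mc C\vee f^{-1}\big(\vee_{k=0}^{n-1} f^{-k}(\mc C)\big)=\mc D$, followed by (ii):
\begin{align*}
a_{n+1} &= H\big(f^{-1}(f^{-n}(\mc C))\,\big|\,\mc C\vee f^{-1}(\vee_{k=0}^{n-1} f^{-k}(\mc C))\big) \\
&\le H\big(f^{-1}(f^{-n}(\mc C))\,\big|\,f^{-1}(\vee_{k=0}^{n-1} f^{-k}(\mc C))\big) = H\big(f^{-n}(\mc C)\,\big|\,\vee_{k=0}^{n-1} f^{-k}(\mc C)\big) = a_n.
\end{align*}
This gives the claimed monotonicity and completes the argument. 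I expect the only point requiring care is the bookkeeping in step (ii) — verifying that stationarity really does make the whole conditional-entropy expression $f^{-1}$-invariant, together with reading the empty join ($n=0$) as the trivial partition $\{\Omega\}$; everything else is a direct assembly of the results already established above.
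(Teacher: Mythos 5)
Your proof is correct and follows essentially the same route as the paper: both arguments show the sequence $a_n=H(f^{-n}(\mc C)\,|\,\vee_{k=0}^{n-1}f^{-k}(\mc C))$ is non-increasing by first coarsening the conditioning partition via Equation~\eqref{conditional subadditivity} and then using stationarity to shift all indices down by one, after which Theorem~\ref{entropy equivalence} finishes the job. The only cosmetic difference is that you phrase the coarsening as dropping the $\mc C$ factor from $\mc C\vee f^{-1}(\vee_{k=0}^{n-1}f^{-k}(\mc C))$, while the paper equivalently drops the $k=0$ term from $\vee_{k=0}^{n-1}f^{-k}(\mc C)$.
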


\begin{proof}
	For each $n\in\N$ with $n\ge 2$, we have 
	\begin{align*}
	H(f^{-n}(\mc C)|\vee_{k=0}^{n-1} f^{-k}(\mc C)) &\le  
	H(f^{-n}(\mc C)|\vee_{k=1}^{n-1} f^{-k}(\mc C)) 
	\quad\text{by \eqref{conditional subadditivity}} \\
	&= H(f^{-(n-1)}(\mc C)|\vee_{k=0}^{n-2} f^{-k}(\mc C))
	\quad\text{since $(\Omega,\Sigma,\mu,f)$ is stationary.}
	\end{align*}
	
	\noindent Therefore $H(f^{-n}(\mc C)|\vee_{k=0}^{n-1} f^{-k}(\mc C))$ is a decreasing sequence which is bounded below by zero and hence converges. 
	By Theorem~\ref{entropy equivalence}, $$h^{KS}(f,\mc C)=\lim_{n\ra\infty} H(f^{-n}(\mc C)|\vee_{k=0}^{n-1} f^{-k}(\mc C)).$$
\end{proof}

\begin{remark}
	In the literature it is common to only refer to $(\Omega,\Sigma,\mu,f)$ as a DS whenever $\mu$ is $f$-invariant. 
	Although this convention has its benefits, as evidenced by Corollary~\ref{stationary DS}, we find it restrictive and do not adopt it here.  
\end{remark}

Finally, the \textbf{KS entropy of \bm{$(\Omega,\Sigma,\mu,f)$}} is given by 
\begin{equation}\label{KS entropy}
h^{KS}(f)=\sup_{\substack{\mc C\in \pa(\Omega)\\ H(\mc C)<\infty}}h^{KS}(f,\mc C).
\end{equation}

\begin{remark}\label{only finite partitions}
	Fix a dynamical system $(\Omega,\Sigma,\mu, f)$. 
	In many instances, KS entropy is taken as the $\sup$ over only finite partitions. 
	However, the two definitions are equivalent (see \cite[Page 102]{Downarowicz11}).
	Furthermore, it is remarked in \cite[Page 61]{Downarowicz11} that the restriction of the $\sup$ in Equation~\eqref{KS entropy} to include only those partitions, $\mc C$, satisfying $H(\mc C)<\infty$ is natural because otherwise it is possible to obtain infinite KS entropy for the identity transformation. 
	This is due to the fact that $H(f,\mc C)=\infty$ whenever $H(\mc C)=\infty$.
\end{remark}

For a more detailed exposition on dynamical entropy and classical dynamical systems (with invariant measures), we refer the reader to the book of Walters \cite{Walters82}. 
For extensions of the results of Walters to include infinite partitions with finite entropy, the reader is referred to the book of Downarowicz \cite{Downarowicz11}.

\subsection{Entropy from the Information Theoretic Point of View}\label{information theoretic}

Next we look at entropy of random variables. 
Let $(\Omega,\Sigma,\mu)$ be a probability space  and let $(E,\mc E)$ be a measurable space. 
An \textbf{\bm{$(\Omega,E)$} random variable} $X$ is a measurable map $X:\Omega\ra E$. 
For any $S\in\mc E$, the probability that $X$ takes values in $S$ is given by $\mu(X\in S):=\mu(X\inv (S))$. 
We say that the random variable \textbf{\bm{$X$} is discrete} whenever its range $E$ is a discrete space. 
In that case, $X$ is determined by its \textbf{probability mass function} (pmf) $p_X:E\ra [0,1]$ given by $p_X(x)=\mu(X=x)$ for each $x\in E$. 
We will simply write $p(x)$ as opposed to $p_X(x)$ when there is no confusion about the random variable. 
Given a discrete random variable $X$, the collection 
\begin{equation*}
\mc C_X:=\{X\inv (\{x\})\}_{x\in E}
\end{equation*}

\noindent is a countable measurable partition of $\Omega$; i.e. $\mc C_X\in\pa(\Omega)$. 
The \textbf{entropy} of a discrete random variable $X$ is given by the entropy of $\mc C_{X}$ and is related to its pmf by the equation 
\begin{equation}\label{static entropy}
H_\mu(X):=H_\mu(\mc C_X)=\sum_{x\in E} \eta(p(x)).
\end{equation}

\noindent When there is no confusion about the probability measure in question, we will simply write $H(X)$ instead of $H_\mu(X)$. 

Given a finite collection, $(X_k)_{k=0}^n$, of $(\Omega, E)$ discrete random variables  the \textbf{joint pmf of \bm{$(X_0,\ldots, X_n)$}} is given by $p_{X_0,\ldots, X_n}(x_0,\ldots, x_n)=\mu(X_0=x_0, \ldots, X_n=x_n)$ for all $x_0,\ldots, x_n\in E$. 
Furthermore, $(X_0,\ldots, X_n)$ is a discrete $(\Omega,E^{n+1})$ random variable and $\mc C_{(X_0,\ldots, X_n)}=\vee_{k=0}^n \mc C_{X_k}$. 
Therefore the entropy of $(X_0,\ldots, X_n)$ is given by Equation~\eqref{static entropy} and is related to its joint pmf by the equation 
\begin{equation}\label{static entropy of process}
H(X_0,\ldots,X_n):= H(\vee_{k=0}^n \mc C_{X_k})=\sum_{\substack{x_k\in E \\ 0\le k\le n}} \eta(p_{X_0,\ldots, X_n}(x_0,\ldots, x_n)). 
\end{equation}

\noindent The \textbf{conditional pmf of \bm{$X_n$} given \bm{$(X_0,\ldots, X_{n-1})$}} is given by 
\begin{equation*}
\begin{split}
p_{X_n|(X_0,\ldots, X_{n-1})}(x_n|x_0,\ldots, x_{n-1}):&= 
\mu(X_n=x_n|X_0=x_0,\ldots, X_{n-1}=x_{n-1}) \\
&= \frac{p_{X_0,\ldots, X_n}(x_0,\ldots, x_n)}{p_{X_0,\ldots, X_{n-1}}(x_0,\ldots, x_{n-1})}
\quad\text{for all }x_0,\ldots, x_n\in E.
\end{split}
\end{equation*}

\noindent Whenever there is no confusion about the random variables in question we simply write $p(x_0,\ldots, x_n)$ for the joint pmf and $p(x_n|x_0,\ldots,x_{n-1})$ for the conditional pmf. 
The \textbf{conditional entropy of \bm{$X_n$} given \bm{$(X_0,\ldots, X_{n-1})$}} is given by the conditional entropy of $C_{X_n}$ given $\vee_{k=0}^{n-1} C_{X_k}$ and is related to their conditional pmf by the equation 
\begin{equation}\label{conditional entropy}
\begin{split}
H(X_n|X_0,\ldots, X_{n-1}):&= H(\mc C_{X_n}|\vee_{k=0}^{n-1}\mc C_{X_k}) \\
&= \sum_{\substack{x_k\in E \\ 0\le k\le n-1}} p(x_0,\ldots, x_{n-1}) \sum_{x_n\in E}\eta(p(x_n|x_0,\ldots, x_{n-1})).
\end{split} 
\end{equation}

\noindent See \cite[Section 2.2]{CT91} 
for more details on conditional entropy of random variables. 

\begin{remark}
The entropy and conditional entropy of non-discrete random variables can be defined similarly to Equations~\eqref{static entropy} and \eqref{conditional entropy}, respectively, by using integration and probability distribution functions instead of sums and pmfs. 
However, we are mainly interested in discrete random variables here. 
\end{remark}

Next we turn to entropy rate of stochastic processes. 
If $(\Omega,\Sigma,\mu)$ is a probability space and $(E,\mc E)$ is a measurable space, then a \textbf{(discrete time) \bm{$(\Omega, E)$} stochastic process} is an indexed sequence of $(\Omega, E)$ random variables. 
Throughout this paper we will only consider discrete time stochastic processes and the sequences will all be indexed by $\N_0:=\N\cup\{0\}$, where the index is meant to represent time. 
Whenever $(E,\mc E)$ is a discrete space the stochastic process, $\X=(X_n)_{n=0}^\infty$, is determined by its joint pmf, denoted by $p_{\X}$, and given by $p_{\X}(x_0,\ldots, x_n)=p_{X_0,\ldots, X_n}(x_0,\ldots, x_n)$ for each $n\in\N_0$ and $x_0,\ldots, x_n\in E$. 
Given a stochastic process of discrete random variables, the entropy of a finite initial subsequence is given by Equation~\eqref{static entropy of process} and the conditional entropy of the $n^{\text{th}}$ term given all the previous ones is given by Equation~\eqref{conditional entropy}. 

Fix a discrete $(\Omega,E)$ stochastic process $\X$. 
Then, from the classical random walk perspective, we interpret $p_{X_0}(x)$ as the probability that a random walker inhabits the site $x$ initially at time 0 and $p_{X_n}(x)$ as the probability that a random walker inhabits the site $x$ at time $n$, for any $x\in E$ and $n\in \N$. 
Furthermore, for any $x_0,\ldots, x_n\in E$ and $n\in\N$, we interpret $p_(x_0,\ldots, x_n)$ as the probability that a random walker takes the path $x_0\ra x_1\ra \cdots \ra x_n$ at times $0,1,\ldots, n$. 

A stochastic process $(X_n)_{n=0}^\infty$ of discrete random variables is called \textbf{stationary} whenever its joint pmf is invariant with respect to shifts of the time index; i.e.
\begin{equation*}
\mu(X_0=x_0,\ldots, X_n=x_n)=\mu(X_l=x_0,\ldots, X_{n+l}=x_n), 
\end{equation*}

\noindent for all $n,l\in\N_0$ and $x_0,\ldots,x_{n}\in E$.
In the literature, a stationary stochastic process is sometimes referred to as being time invariant (see e.g. \cite[Page 61]{CT91}).

A simple example of a stochastic process is one in which each random variable depends only on the one proceeding it in the sequence; i.e. 
\begin{equation}\label{Markov process}
\mu(X_{n+1}=x_{n+1}|X_0=x_0,\ldots, X_n=x_n) = \mu(X_{n+1}=x_{n+1}|X_n=x_n),
\end{equation}

\noindent for all $n\in\N_0$ and $x_0,\ldots,x_{n+1}\in E$. 
A stochastic process of discrete random variables which satisfies Equation~\eqref{Markov process} is referred to as a \textbf{Markov process.} 
In particular, we are interested in those discrete Markov processes, $\X=(X_n)_{n=0}^\infty$, whose conditional pmfs do not vary with time; i.e. 
\begin{equation}\label{Markov process 2}
\mu(X_1=x|X_0=y)=\mu(X_{n+1}=x|X_n=y)\text{ for all $x,y\in E$, and $n\in\N_0$.}
\end{equation}

\noindent In this case, we will set $p_{x, y}:= \mu(X_1=x|X_0=y)$ and define the $|E|\times |E|$ matrix $P$ 
to have $(x,y)$-entry given by $p_{x, y}$, for all $x,y\in E$. 
Then $P$ is a \textbf{transition (column-stochastic) matrix}; i.e. for all $x,y\in E$, $0\le p_{x, y}\le 1$ and, for all $y\in E$, $\sum_{x\in E}p_{x, y}=1$. 
From the classical random walk perspective, the $(x,y)$-entry of $P$, $p_{x, y}$, is interpreted as the conditional probability that a random walker will move in one step from site $y$ to site $x$. 

The \textbf{entropy rate of a stochastic process \bm{$\X=(X_n)_{n=0}^\infty$}} is given by 
\begin{equation}\label{entropy rate}
H(\X):=\lim_{n\ra\infty}\frac{1}{n} H(X_0,\ldots, X_{n-1}),
\end{equation}

\noindent whenever this limit exists. 

Another quantity, which is often equal to the entropy rate, is given by 
\begin{equation}\label{entropy rate 2}
H'(\X):=\lim_{n\ra\infty} H(X_{n}|X_0,\ldots, X_{n-1}),
\end{equation}

\noindent whenever this limit exists. 
The two quantities $H(\X)$ and $H'(\X)$ correspond to two different interpretations of entropy rate. 
The first is interpreted as the average entropy of the first $n$ random variables and the second as the entropy of the last random variable given the past. 
The following result shows the relationship between $H(\X)$ and $H'(\X)$. 

\begin{corollary}\label{two entropy rates}
	Let $\X=(X_n)_{n=0}^\infty$ be a stochastic process. 
	If the limit in Equation~\eqref{entropy rate 2} exists, 
	then the limit in Equation~\eqref{entropy rate} also exists and $H(\X)=H'(\X)$.
\end{corollary}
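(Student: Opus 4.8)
The plan is to deduce this immediately from Theorem~\ref{entropy equivalence} by translating the statement about random variables into one about the induced partitions. First I would set $\mc C_n:=\mc C_{X_n}\in\pa(\Omega)$ for each $n\in\N_0$. By the discussion preceding the corollary --- specifically the identity $\mc C_{(X_0,\ldots,X_n)}=\vee_{k=0}^n\mc C_{X_k}$ together with Equations~\eqref{static entropy of process} and \eqref{conditional entropy} --- one has, for every $n\in\N$, that $H(\vee_{k=0}^{n-1}\mc C_k)=H(X_0,\ldots,X_{n-1})$ and $H(\mc C_n\mid\vee_{k=0}^{n-1}\mc C_k)=H(X_n\mid X_0,\ldots,X_{n-1})$.

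Next, the hypothesis that the limit in Equation~\eqref{entropy rate 2} exists says precisely that $\lim_{n\ra\infty}H(\mc C_n\mid\vee_{k=0}^{n-1}\mc C_k)=H'(\X)=:a\in\R\cup\{\infty\}$. Applying Theorem~\ref{entropy equivalence} to the probability space $(\Omega,\Sigma,\mu)$ and the sequence of partitions $\{\mc C_n\}_{n=0}^\infty$ then yields $\lim_{n\ra\infty}\frac1n H(\vee_{k=0}^{n-1}\mc C_k)=a$. Rewriting the left-hand side via the first identity above gives $\lim_{n\ra\infty}\frac1n H(X_0,\ldots,X_{n-1})=a=H'(\X)$; that is, the limit defining $H(\X)$ in Equation~\eqref{entropy rate} exists and equals $H'(\X)$, as claimed.

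There is essentially no obstacle: the argument is a bookkeeping reduction to Theorem~\ref{entropy equivalence}. The only point that warrants a word of care is that the common value $a$ may be $+\infty$; this is permitted in the statement of Theorem~\ref{entropy equivalence} (whose proof invokes the C\'esaro mean theorem for sequences converging in $\R\cup\{\infty\}$) and in the definitions of $H(\X)$ and $H'(\X)$, so the reasoning goes through verbatim in that case as well.
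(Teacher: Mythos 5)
Your proof is correct and follows exactly the paper's argument: the paper likewise observes that, via the identities $H(X_0,\ldots,X_{n-1})=H(\vee_{k=0}^{n-1}\mc C_{X_k})$ and $H(X_n|X_0,\ldots,X_{n-1})=H(\mc C_{X_n}|\vee_{k=0}^{n-1}\mc C_{X_k})$ from Equations~\eqref{static entropy of process} and \eqref{conditional entropy}, the corollary is just a restatement of Theorem~\ref{entropy equivalence}. Your extra remark about the case $a=+\infty$ is a harmless (and correct) elaboration.
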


\begin{proof}
By the definitions of $H(X_0,\ldots,X_n)$ and $H(X_n|X_0,\ldots,X_{n-1})$ in Equations~\eqref{static entropy of process} and \eqref{conditional entropy}, respectively, this is simply a restatement of Theorem~\ref{entropy equivalence}. 
\end{proof}

%

The following is another corollary for stationary stochastic processes which is also proved in \cite[Theorem 4.2.2]{CT91}. 

\begin{corollary}\label{stationary process}
Let $\X$ be a stationary stochastic process. 
Then the limits in Equations~\eqref{entropy rate} and \eqref{entropy rate 2} both exist and $H(\X)=H'(\X)$. 
\end{corollary}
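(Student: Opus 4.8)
The plan is to reduce Corollary~\ref{stationary process} to Corollary~\ref{two entropy rates} by showing that, for a stationary stochastic process, the limit in Equation~\eqref{entropy rate 2} exists. Once that is known, Corollary~\ref{two entropy rates} immediately gives existence of the limit in Equation~\eqref{entropy rate} and the equality $H(\X)=H'(\X)$, so all the work is in establishing convergence of $H(X_n|X_0,\ldots,X_{n-1})$.

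First I would translate the sequence $H(X_n|X_0,\ldots,X_{n-1})$ into the language of partitions, using the identity $H(X_n|X_0,\ldots,X_{n-1})=H(\mc C_{X_n}|\vee_{k=0}^{n-1}\mc C_{X_k})$ from Equation~\eqref{conditional entropy}. The key point is that a stationary process is precisely one whose joint pmf is shift-invariant, so that $\mu(X_{n}=x_n,X_{n-1}=x_{n-1},\ldots)$ equals the corresponding probability with all indices shifted down by one. From this I would deduce the two facts needed to run the monotone-bounded argument, exactly as in the proof of Corollary~\ref{stationary DS}: (i) conditioning on more variables cannot increase entropy, i.e. $H(X_n|X_0,\ldots,X_{n-1})\le H(X_n|X_1,\ldots,X_{n-1})$, which is the instance $\mc B\le\mc D$ of Equation~\eqref{conditional subadditivity} with $\mc B=\vee_{k=1}^{n-1}\mc C_{X_k}$ and $\mc D=\vee_{k=0}^{n-1}\mc C_{X_k}$; and (ii) stationarity gives $H(X_n|X_1,\ldots,X_{n-1})=H(X_{n-1}|X_0,\ldots,X_{n-2})$, since shifting every index down by one leaves all the joint pmfs appearing in Equation~\eqref{conditional entropy} unchanged. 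Combining (i) and (ii) shows that $a_n:=H(X_n|X_0,\ldots,X_{n-1})$ is a nonincreasing sequence; it is bounded below by $0$ by the left-hand inequality in Equation~\eqref{conditional subadditivity}, hence it converges, so the limit in Equation~\eqref{entropy rate 2} exists.

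Having established that, I would close the proof by invoking Corollary~\ref{two entropy rates}: since the limit in Equation~\eqref{entropy rate 2} exists, the limit in Equation~\eqref{entropy rate} exists as well and $H(\X)=H'(\X)$. This is essentially the information-theoretic mirror of the dynamical-systems argument already carried out for $h^{KS}$, with $C_{X_n}$ playing the role of $f^{-n}(\mc C)$ and time-shift-invariance of the joint pmf playing the role of $f$-invariance of $\mu$.

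The only mild obstacle is bookkeeping: I must be careful that the shift-invariance of the joint pmf genuinely yields the equality $H(X_n|X_1,\ldots,X_{n-1})=H(X_{n-1}|X_0,\ldots,X_{n-2})$ at the level of the double sum in Equation~\eqref{conditional entropy}, i.e. that both the marginal weights $p(x_1,\ldots,x_{n-1})$ and the conditional probabilities $p(x_n|x_1,\ldots,x_{n-1})$ transform correctly under a uniform index shift. This is immediate from the definition of stationarity but should be spelled out, since it is the one place where the hypothesis is actually used; everything else is a verbatim repetition of the monotone-and-bounded reasoning in Corollary~\ref{stationary DS} together with a citation to Corollary~\ref{two entropy rates}.
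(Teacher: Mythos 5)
Your proposal is correct and follows essentially the same route as the paper's proof: both establish that $H(X_n|X_0,\ldots,X_{n-1})$ is nonincreasing via Equation~\eqref{conditional subadditivity} together with stationarity (mirroring Corollary~\ref{stationary DS}), hence convergent, and then conclude via Theorem~\ref{entropy equivalence} (of which your cited Corollary~\ref{two entropy rates} is merely a restatement). The only difference is that you spell out the bookkeeping of the index shift at the level of the joint pmfs, which the paper leaves implicit.
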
 

\begin{proof}
The proof is similar to the proof of Corollary~\ref{stationary DS}. 
For each $n\in\N$ with $n\ge 2$, we have 
\begin{align*}
H(X_n|X_0,\ldots, X_{n-1}) &\le  
H(X_n|X_1,\ldots, X_{n-1}) 
\quad\text{by \eqref{conditional subadditivity}} \\
&= H(X_{n-1}|X_0,\ldots, X_{n-2})
\quad\text{since $\X$ is stationary.}
\end{align*}

\noindent Therefore $H(X_n|X_0,\ldots, X_{n-1})$ is a decreasing sequence which is bounded below by zero and hence converges. 
By Equations~\eqref{static entropy of process} and \eqref{conditional entropy}, and Theorem~\ref{entropy equivalence}, $H(\X)=H'(\X)$.
\end{proof}

Next we look at the entropy rate of discrete Markov processes governed by a transition matrix; i.e. the discrete Markov processes satisfying Equation~\eqref{Markov process 2}.  
In this case, given a discrete measurable space $(E,\mc P(E))$, we will represent a probability measure $\mu$ on $(E,\mc P(E))$ as the column vector $\mu=\{\mu_e\}_{e\in E}$, where $\mu_e:=\mu(\{e\})$ for each $e\in E$, which we will refer as a \textbf{probability vector}. 
Then, given a transition matrix $P$ on $E$, we define $P\mu$ by matrix multiplication. 
We say that \textbf{\bm{$\mu$} is \bm{$P$}-invariant} whenever $P\mu=\mu$. 
In particular, whenever $\X$ is an $(\Omega,E)$ Markov process governed by the transition matrix $P$, we take the initial measure $\mu$ to be $p_{X_0}$. 
In this case, notice that $\X$ is stationary if and only if $p_{X_0}$ is $P$-invariant. 
The following theorem gives a simplification of the entropy rate for Markov processes governed by a transition matrix.  

\begin{theorem}\label{markov entropy rate}
Let $\X$ be a discrete $(\Omega,E)$ Markov process governed by the transition matrix $P$ 
and set $\mu=p_{X_0}$. 
Then 
\begin{equation*}
H(\X)=\lim_{n\ra\infty} \sum_{y\in E} (P^n\mu)_y\sum_{x\in E}\eta(p_{x, y}),
\end{equation*}
	
\noindent whenever the limit exists. 
Moreover, if $\X$ is stationary, then $$H(\X)=\sum_{y\in E}\mu_y\sum_{x\in E}\eta(p_{x, y}).$$ 
\end{theorem}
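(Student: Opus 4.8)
The plan is to compute the conditional entropy $H(X_n \mid X_0, \ldots, X_{n-1})$ for a Markov process and then invoke Corollary~\ref{two entropy rates}. First I would use the Markov property to simplify the conditional pmf: by Equation~\eqref{Markov process} and iterating, $p(x_n \mid x_0, \ldots, x_{n-1}) = p(x_n \mid x_{n-1}) = p_{x_n, x_{n-1}}$. Substituting this into Equation~\eqref{conditional entropy} and collapsing the sum over $x_0, \ldots, x_{n-2}$ (which leaves the marginal pmf $p_{X_{n-1}}$), I get
\begin{equation*}
H(X_n \mid X_0, \ldots, X_{n-1}) = \sum_{y \in E} p_{X_{n-1}}(y) \sum_{x \in E} \eta(p_{x,y}).
\end{equation*}
Then I would identify $p_{X_{n-1}}$ as the probability vector $P^{n-1}\mu$: since $\mu = p_{X_0}$ and the process is governed by $P$, one shows by induction that $p_{X_{k}} = P^{k}\mu$, using the law of total probability $p_{X_{k}}(x) = \sum_{y} p_{x,y}\, p_{X_{k-1}}(y)$, which is exactly one step of the matrix multiplication $P(P^{k-1}\mu)$.

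Combining these, $H(X_n \mid X_0, \ldots, X_{n-1}) = \sum_{y \in E} (P^{n-1}\mu)_y \sum_{x \in E} \eta(p_{x,y})$. Now apply Corollary~\ref{two entropy rates}: if the limit as $n \to \infty$ of the right-hand side exists, then the limit in Equation~\eqref{entropy rate 2} exists and equals $H(\X) = H'(\X)$; reindexing $n-1 \mapsto n$ gives the stated formula
\begin{equation*}
H(\X) = \lim_{n \to \infty} \sum_{y \in E} (P^n\mu)_y \sum_{x \in E} \eta(p_{x,y}).
\end{equation*}

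For the stationary case, $p_{X_0} = \mu$ is $P$-invariant, so $P^n\mu = \mu$ for all $n$, and the sequence is constant; hence the limit trivially exists and equals $\sum_{y \in E} \mu_y \sum_{x \in E} \eta(p_{x,y})$. (Alternatively one can cite Corollary~\ref{stationary process} directly to guarantee existence of the limit, then evaluate the constant sequence.)

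The main obstacle is a minor technical one: handling the case $|E| = \infty$ (a countably infinite state space), where one must justify interchanging the limit/sum operations and confirm that the marginal-collapsing step $\sum_{x_0, \ldots, x_{n-2}} p(x_0, \ldots, x_{n-1}) = p_{X_{n-1}}(x_{n-1})$ is valid for countable sums — this is just Tonelli for nonnegative terms, so it is routine but should be mentioned. One should also be slightly careful that the formula is asserted only "whenever the limit exists," so no convergence claim needs to be proved in the non-stationary case; the entire content is the algebraic identification of the $n$-th term of the sequence in Equation~\eqref{entropy rate 2} with the Markov expression, after which everything follows from the already-established Corollary~\ref{two entropy rates}.
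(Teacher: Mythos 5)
Your proposal is correct and follows essentially the same route as the paper's proof: compute $H(X_n\mid X_0,\ldots,X_{n-1})$ via the Markov property, identify the marginal $p_{X_{n-1}}$ with $P^{n-1}\mu$ by matrix multiplication, and invoke the equivalence of the two entropy-rate limits (Corollary~\ref{two entropy rates}), with the stationary case reducing to a constant sequence. Your added remarks about Tonelli for countable $E$ and the index shift are sound but do not change the argument.
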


\begin{proof}
	Since $\X$ is a Markov process governed by the transition matrix $P$ we have 
	\begin{equation*}
	H(X_{n+1}|X_0,\ldots,X_{n})=H(X_{n+1}|X_{n})=
	\sum_{y\in E}p_{X_{n}}(y)
	\sum_{x\in E}\eta(p_{x, y})
	\quad\text{for each $n\in\N$,} 
	\end{equation*}
	
	\noindent where the second equality follows from Equation~\eqref{conditional entropy}. 
	Then from the definition of matrix multiplication, for each $e_n\in E$, we have  
	\begin{equation*}
	p_{X_n}(e_n)=\sum_{\substack{e_k\in E \\ 0\le k\le n-1}} 
	p_{X_0}(e_0)\prod_{k=1}^n p_{e_k, e_{k-1}}=(P^n\mu)_{e_n}.
	\end{equation*}
	
	\noindent Therefore 
	$$
	H(\X)=\lim_{n\ra\infty} \sum_{y\in E} (P^n\mu)_y\sum_{x\in E}\eta(p_{x, y}),
	$$
	
	\noindent whenever the limit exists. 
	The moreover statement is immediate because $P^n\mu=\mu$ for all $n\in\N_0$ whenever $\X$ is stationary.  
\end{proof}

\begin{remark}
	Certain cases of Theorem~\ref{markov entropy rate} appear frequently in the literature, but to the best of our knowledge we have not seen it presented in the generality of above. 
	For instance, it can be seen for the case where $\mu$ is $P$-invariant in \cite[Theorem 4.2.4]{CT91} or \cite[Theorem 4.26]{Walters82}. 
\end{remark}

In the literature, given a transition matrix $P$ on a discrete measurable space $(E,\mc P(E))$ with a unique invariant probability vector $\mu$, it is common to set 
\begin{equation}
\label{entropy of P}
H(P):= \sum_{y\in E}\mu_y\sum_{x\in E}\eta(p_{x, y}),
\end{equation}

\noindent and refer to $H(P)$ as the \textbf{entropy of \bm{$P$}.} 
As it is shown in Theorem~\ref{markov entropy rate}, the entropy, $H(P)$, of $P$ is equal to the entropy rate, $H(\X)$, of any stationary Markov process, $\X$, governed by the transition matrix $P$ such that $p_{X_0}=\mu$.


\subsection{The connection between entropy rate and KS entropy}\label{symbolic dynamics}

Let $(\Omega,\Sigma,\mu)$ be a probability space, $(E,\mc E)$ a measurable space and $\X=(X_n)_{n=0}^\infty$ an $(\Omega,E)$ stochastic process. 
Consider the measurable space $(E^*,\mc E^*)$, where $E^*:=E^{\N_0}$ and $\mc E^*:=\sigma(\cup_{n=0}^\infty \mc E^{n+1})$. 
For all $n\in\N_0$, collection of integer times $0\le t_0<t_1<\cdots<t_n$ and $A_0,\ldots, A_n\in \mc E$, we define the \textbf{cylinder set} 
\begin{equation*}
C\left(\begin{smallmatrix} A_0 & \cdots & A_{n} \\ t_0 & \cdots & t_n\end{smallmatrix}\right) := \{ x=(x_i)_{i\in\N_0}\in E^* : x_{t_k}\in A_k\text{ for } k\in \{0,\ldots, n\}\}.
\end{equation*} 

\noindent For $\mc C\in\pa(E)$, we say $C\left(\begin{smallmatrix} A_0 & \cdots & A_{n} \\ t_0 & \cdots & t_n\end{smallmatrix}\right)$ is a \textbf{\bm{$\mc C$}-cylinder set} if $A_0,\ldots, A_n\in\mc C$.
Also, we define the partition, $\widehat{\mc C}\in \pa(E^*)$, by $$\widehat{\mc C}:=\{C\left(\begin{smallmatrix} A \\ 0\end{smallmatrix}\right)\}_{A\in \mc C}$$ and the set  $$\widehat{\pa(E)}:=\{\widehat{\mc C}: \mc C\in\pa(E)\}\subset \pa(E^{*}).$$ 

Notice that the collection of all cylinder sets in $E^*$ is a $\pi$-system which generates the $\sigma$-algebra $\mc E^*$. 
Therefore, any measure on $(E^*,\mc E^*)$ is uniquely defined by its values on the cylinder sets. 
We define the process-dependent measure, $\mu^{\X}$, on the cylinder sets by   
\begin{equation}\label{symbolic dynamics measure}
\mu^{\X} (C\left(\begin{smallmatrix} A_0 & \cdots & A_{n} \\ t_0 & \cdots & t_n\end{smallmatrix}\right))=\mu(\cap_{k=0}^n (X_{t_k}\in A_k)),
\end{equation}

\noindent for all $n\in\N_0$, $0\le t_0<\cdots<t_n$, and $A_0,\ldots, A_n\in \mc E$. 
We interpret $\mu^{\X} (C\left(\begin{smallmatrix} A_0 & \cdots & A_{n} \\ t_0 & \cdots & t_n\end{smallmatrix}\right))$ as the probability that a random walker, governed by the stochastic process $\X$, is in the set $A_0$ at time $t_0$ and takes the path $A_0\ra A_1\ra \cdots\ra A_n$ at times $t_0,\ldots,t_n$. 

\begin{remark}
Fix an $(\Omega, E)$ stochastic process $\X$, $A_0,\ldots, A_n\in \mc E$, integer times $0\le t_0<\cdots < t_n$ and $t\in \N\backslash \{t_0,\ldots, t_n\}$ such that $t_i<t<t_{i+1}$ for some $0\le i < n$. 
Then, since $\mu(\cap_{k=0}^n (X_{t_k}\in A_k))=\mu(\cap_{k=0}^n (X_{t_k}\in A_k)\cap (X_t\in E))$, we have that 
\begin{equation*}
\mu^{\X} (C\left(\begin{smallmatrix} A_0 & \cdots & A_{n} \\ t_0 & \cdots & t_n\end{smallmatrix}\right))=
\mu^{\X} (C\left(\begin{smallmatrix} A_0 & \cdots& A_i & E & A_{i+1}& \cdots & A_{n} \\ t_0&\cdots & t_i&t&t_{i+1}&\cdots& t_n\end{smallmatrix}\right)). 
\end{equation*}

\noindent Similarly, if $t<t_0$ or $t>t_n$. 
Therefore the measure $\mu^{\X}$ is well defined. 
\end{remark}

We define the shift map $s: E^*\ra E^*$ by $s(x)=y$ where $y_i=x_{i+1}$, for each $i\in\N_0$, and refer to the quadruple, $(E^*, \mc E^*, \mu^{\X}, s)$, as the \textbf{symbolic dynamics of \bm{$\X$}.} 
Notice that $(E^*, \mc E^*, \mu^{\X}, s)$ is a DS and thus its partition dependent and independent KS entropies are given by Equations~\eqref{invariant state limits} and \eqref{KS entropy}, respectively. 

Of particular interest is the KS entropy of $(E^*, \mc E^*, \mu^{\X}, s)$ with respect to the partitions in $\widehat{\pa(E)}$. 
For each $\mc C\in \pa(E)$, define the $(\Omega,\mc C)$ stochastic process $\X_{\mc C}=(X^{\mc C}_n)_{n=0}^\infty$ where, for each $n\in\N_0$ and $\omega\in\Omega$, 
$X^{\mc C}_n(\omega)=A$ whenever $X_n(\omega)\in A$; i.e. $X_n^{\mc C}=i_{\mc C}\circ X_n$, where $i_{\mc C}:E\ra \mc C$ is the natural map that assigns to each $e\in E$ the unique $A\in \mc C$ such that $e\in A$. 
Since the values of $\X_{\mc A}$ are singletons, it is clear that $\X$ can be identified with $\X_{\mc A}$ whenever $\mc A$ is the atomic partition of the discrete space $E$. 
The following proposition shows that the KS entropy of $(E^*, \Sigma^*, \mu^{\X}, s)$ with respect to $\widehat{\mc C}$ and the entropy rate of $\X_{\mc C}$ are equal. 

\begin{Prop}\label{relationship 1}
	Let $(\Omega,\Sigma,\mu)$ be a probability space, $(E,\mc E)$ a measurable space, $\X$ an $(\Omega,E)$ stochastic process and $(E^*, \Sigma^*, \mu^{\X}, s)$ the symbolic dynamics of $\X$. 
	Then for each $\mc C\in\pa(E)$, $H(\X_{\mc C})=h^{KS}(s,\widehat{\mc C})$. 
	In particular, whenever $E$ is a discrete space, $H(\X)=h^{KS}(s,\widehat{\mc A})$, where $\mc A$ is the atomic partition of $E$. 
\end{Prop}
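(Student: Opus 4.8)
The plan is to reduce the equality to a termwise identity between the finite-stage entropies of the symbolic system and of the process $\X_{\mc C}$, and then pass to the limit using Equations~\eqref{invariant state limits} and \eqref{entropy rate}. Concretely, I will show that for every $n\in\N$,
$$H_{\mu^{\X}}\!\left(\vee_{k=0}^{n-1} s^{-k}(\widehat{\mc C})\right)=H\!\left(X_0^{\mc C},\ldots,X_{n-1}^{\mc C}\right),$$
after which dividing by $n$ and letting $n\to\infty$ gives the claim.

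First I would compute the preimages of $\widehat{\mc C}$ under iterates of the shift. For $k\in\N_0$ and $A\in\mc C$, unwinding the definition of $s$ gives $s^{-k}\!\left(C\left(\begin{smallmatrix} A \\ 0\end{smallmatrix}\right)\right)=\{x\in E^*:\, s^k(x)_0\in A\}=\{x\in E^*:\, x_k\in A\}=C\left(\begin{smallmatrix} A \\ k\end{smallmatrix}\right)$, so $s^{-k}(\widehat{\mc C})=\{C\left(\begin{smallmatrix} A \\ k\end{smallmatrix}\right)\}_{A\in\mc C}$, which is indeed a partition of $E^*$. By Remark~\ref{pullbacks}, $\vee_{k=0}^{n-1} s^{-k}(\widehat{\mc C})$ then consists exactly of the $\mc C$-cylinder sets $C\left(\begin{smallmatrix} A_0 & \cdots & A_{n-1} \\ 0 & \cdots & n-1\end{smallmatrix}\right)$ with $A_0,\ldots,A_{n-1}\in\mc C$. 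Next I would identify the $\mu^{\X}$-mass of such a cylinder with a joint pmf value of $\X_{\mc C}$: by Equation~\eqref{symbolic dynamics measure}, $\mu^{\X}\!\left(C\left(\begin{smallmatrix} A_0 & \cdots & A_{n-1} \\ 0 & \cdots & n-1\end{smallmatrix}\right)\right)=\mu\!\left(\cap_{k=0}^{n-1}(X_k\in A_k)\right)$, and since $X_k^{\mc C}=i_{\mc C}\circ X_k$ with $i_{\mc C}$ the natural assignment map, the event $(X_k\in A_k)$ coincides with $(X_k^{\mc C}=A_k)$ for $A_k\in\mc C$; hence this equals $p_{X_0^{\mc C},\ldots,X_{n-1}^{\mc C}}(A_0,\ldots,A_{n-1})$. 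Summing $\eta$ over all such cylinders and comparing with Equation~\eqref{static entropy of process} yields the displayed identity for every $n$. Dividing by $n$ and letting $n\to\infty$, the left-hand sequence is the one defining $h^{KS}(s,\widehat{\mc C})$ in Equation~\eqref{invariant state limits} and the right-hand sequence is the one defining $H(\X_{\mc C})$ in Equation~\eqref{entropy rate}; since the two sequences are equal term by term, one limit exists precisely when the other does and the values agree, giving $H(\X_{\mc C})=h^{KS}(s,\widehat{\mc C})$. For the ``in particular'' clause, when $E$ is discrete the process $\X_{\mc A}$ is identified with $\X$ as noted just before the statement, so $H(\X)=H(\X_{\mc A})=h^{KS}(s,\widehat{\mc A})$.

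I do not expect a serious obstacle: the argument is a dictionary between the cylinder-set bookkeeping of the symbolic dynamical system and the pmf bookkeeping of the stochastic process. The one point deserving care is the verification that $\vee_{k=0}^{n-1} s^{-k}(\widehat{\mc C})$ is exactly the partition of $E^*$ into $\mc C$-cylinders over the times $0,\ldots,n-1$ — that it omits no such cylinder and introduces no spurious blocks — which is handled cleanly by Remark~\ref{pullbacks} together with the elementary computation of $s^{-k}$ above; everything else is the definitions of $H_\mu$, $\mu^{\X}$, and $\X_{\mc C}$.
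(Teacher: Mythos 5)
Your proposal is correct and follows essentially the same route as the paper's proof: identify $\vee_{k=0}^{n-1} s^{-k}(\widehat{\mc C})$ with the collection of $\mc C$-cylinders over times $0,\ldots,n-1$, match $\mu^{\X}$-masses of cylinders with joint pmf values of $\X_{\mc C}$ via Equation~\eqref{symbolic dynamics measure}, and pass to the limit. The only difference is that you spell out the computation of $s^{-k}(\widehat{\mc C})$ explicitly, which the paper leaves implicit.
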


\begin{proof}
Notice that, for each $n\in\N_0$, the collection of all $\mc C$-cylinder sets of initial length $n+1$ is given by $\vee_{k=0}^{n}s^{-k}(\widehat{\mc C})$; i.e. 
\begin{equation}\label{C cylinder sets}
\vee_{k=0}^{n}s^{-k}(\widehat{\mc C})= 
\{C\left(\begin{smallmatrix} A_{0} & \cdots & A_{n} \\ 0 & \cdots & n\end{smallmatrix}\right) 
: A_0,\ldots, A_n\in \mc C \}; 
\end{equation}

\noindent Thus, for each $n\in\N_0$, we see that 
\begin{align*}
H(\vee_{k=0}^{n}s^{-k}(\widehat{\mc C})) &= 
\sum_{\substack{A_k\in \mc C \\ 0\le k\le n}} \eta(\mu^{\X}(C\left(\begin{smallmatrix} A_{0} & \cdots & A_{n} \\ 0 & \cdots & n\end{smallmatrix}\right)))
\quad\text{by \eqref{C cylinder sets}} \\
&= \sum_{\substack{A_k\in \mc C \\ 0\le k\le n}}
\eta(\mu(\cap_{k=0}^n (X_k\in A_{k})))
\quad\text{by \eqref{symbolic dynamics measure}} \\
&=H(X^{\mc C}_0,\ldots, X^{\mc C}_n)
\quad\text{by definition of $\X_{\mc C}$.}
\end{align*}

\noindent Hence, from Equations~\eqref{entropy rate} and \eqref{invariant state limits}, $H(\X_{\mc C})=h^{KS}(s,\widehat{\mc C})$ for each $\mc C\in\pa(E)$. 
Whenever $E$ is a discrete space and $\mc A$ is the atomic partition on $E$, since $\X$ can be identified with $\X_{\mc A}$, it follows that $H(\X)=H(\X_{\mc A})=h^{KS}(s,\widehat{\mc A})$. 
\end{proof}

An important tool for computing the KS entropy of a DS is the Kolmogorov-Sinai Theorem. 
First, given a DS $(\Omega,\Sigma,\mu, f)$ and a partition $\mc C\in\pa(\Omega)$, we say that \textbf{\bm{$\mc C$} is a generating partition for \bm{$(\Omega,\Sigma,\mu, f)$}} if 
\begin{equation*}
\sigma(\cup_{n=0}^\infty \vee_{k=0}^n f^{-k}(\mc C) ) =\Sigma.
\end{equation*}

\noindent Notice that the definition of a generating partition does not depend on $\mu$, but for simplicity of notation we keep the full DS. 

\begin{theorem}[Kolmogorov-Sinai Theorem]\label{KS Theorem} 
Let $(\Omega,\Sigma,\mu, f)$ be a DS and $\mc C,\mc D\in\pa(\Omega)$. 
If $\sigma(\mc D)\subseteq \sigma(\cup_{n=0}^\infty \vee_{k=0}^n f^{-k}(\mc C) )$, then 
$$
h^{KS}(f,\mc C) \ge h^{KS}(f,\mc D).
$$
	
\noindent In particular, if $\mc C$ is a generating partition and $H(\mc C)<\infty$ then $h^{KS}(f)=h^{KS}(f,\mc C)$.
\end{theorem}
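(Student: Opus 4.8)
The plan is to prove the two parts separately, deriving the ``in particular'' statement from the main inequality. For the main inequality, suppose $\sigma(\mc D) \subseteq \sigma(\cup_{n=0}^\infty \vee_{k=0}^n f^{-k}(\mc C))$. The idea is that each cell of $\mc D$, being measurable with respect to the $\sigma$-algebra generated by the $\mc C$-towers $\vee_{k=0}^n f^{-k}(\mc C)$, can be approximated arbitrarily well by finite unions of cells of $\vee_{k=0}^{N}f^{-k}(\mc C)$ for $N$ large. The standard route is a continuity argument for the entropy and conditional entropy functionals under such approximation: one shows that for any $\ep>0$ there is $N$ with $H(\mc D \mid \vee_{k=0}^{N} f^{-k}(\mc C)) < \ep$, using that conditional entropy is small when the conditioned partition nearly determines $\mc D$, together with the countable subadditivity of $\eta$ from Remark~\ref{subadditivity}. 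The main obstacle is precisely making this approximation quantitative in a self-contained way; if the paper's conventions allow, I would instead invoke the cleaner monotone-limit formulation, but the honest route is the $\ep$-approximation lemma.

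Granting that approximation step, I would argue as follows. Write $\mc E_N = \vee_{k=0}^{N} f^{-k}(\mc C)$. First, I claim $h^{KS}(f,\mc C) = h^{KS}(f,\mc E_N)$ for every fixed $N$: indeed $\vee_{k=0}^{n-1} f^{-k}(\mc E_N) = \vee_{k=0}^{n-1+N} f^{-k}(\mc C)$, so
\begin{equation*}
\frac{1}{n} H(\vee_{k=0}^{n-1} f^{-k}(\mc E_N)) = \frac{n+N}{n}\cdot \frac{1}{n+N} H(\vee_{k=0}^{n+N-1} f^{-k}(\mc C)),
\end{equation*}
and the right side has the same limit as $n\to\infty$ since $(n+N)/n \to 1$ and the averaged entropies converge. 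Second, I would use the standard subadditivity/monotonicity estimate $h^{KS}(f,\mc D) \le h^{KS}(f,\mc E_N) + H(\mc D \mid \mc E_N)$, which follows from Theorem~\ref{chain rule}, inequality~\eqref{conditional subadditivity}, and the fact that $H(\mc D \vee \mc E_N) \ge H(\mc E_N)$ together with the observation that at each level the extra contribution of $\mc D$ is controlled by $H(\mc D \mid \mc E_N)$; more carefully, $\frac1n H(\vee_{k=0}^{n-1} f^{-k}(\mc D \vee \mc E_N)) \le \frac1n H(\vee_{k=0}^{n-1} f^{-k}(\mc E_N)) + H(\mc D\mid \mc E_N)$ by the chain rule applied coordinatewise and stationarity-type bounds, and $h^{KS}(f,\mc D) \le h^{KS}(f, \mc D\vee \mc E_N)$ by Remark~\ref{subadditivity}. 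Combining with the first claim gives $h^{KS}(f,\mc D) \le h^{KS}(f,\mc C) + H(\mc D\mid \mc E_N)$, and letting $N\to\infty$ so that $H(\mc D\mid \mc E_N)\to 0$ by the approximation step yields $h^{KS}(f,\mc D)\le h^{KS}(f,\mc C)$.

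For the ``in particular'' clause, assume $\mc C$ is generating, i.e. $\sigma(\cup_n \vee_{k=0}^n f^{-k}(\mc C)) = \Sigma$, and $H(\mc C)<\infty$. Then for every $\mc D \in \pa(\Omega)$ we trivially have $\sigma(\mc D)\subseteq \Sigma = \sigma(\cup_n \vee_{k=0}^n f^{-k}(\mc C))$, so the main inequality gives $h^{KS}(f,\mc D) \le h^{KS}(f,\mc C)$ for all such $\mc D$. Taking the supremum over all $\mc D$ with $H(\mc D)<\infty$ in the definition~\eqref{KS entropy} yields $h^{KS}(f) \le h^{KS}(f,\mc C)$; the reverse inequality $h^{KS}(f)\ge h^{KS}(f,\mc C)$ is immediate from the definition of the supremum since $\mc C$ itself is an admissible partition (here we use $H(\mc C)<\infty$). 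Hence $h^{KS}(f) = h^{KS}(f,\mc C)$.

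I expect the approximation lemma --- that $H(\mc D \mid \vee_{k=0}^N f^{-k}(\mc C)) \to 0$ as $N\to\infty$ when $\mc D$ is measurable with respect to the generated $\sigma$-algebra --- to be the main obstacle, since it is the one place where measure-theoretic (rather than purely combinatorial) input is genuinely needed; everything else reduces to the chain rule, monotonicity of $H$ under refinement, subadditivity of $\eta$, and the Ces\`aro argument already packaged in Theorem~\ref{entropy equivalence}. If the paper wishes to keep the exposition light, the cleanest presentation would state that lemma explicitly (it is classical; see e.g. the references to Walters and Downarowicz already cited) and cite it, rather than reproving it.
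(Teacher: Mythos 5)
The paper does not actually prove this theorem; it defers entirely to \cite[Theorem 4.2.2]{Downarowicz11}, so there is no in-house argument to compare yours against. Your outline is the standard proof from that reference (and from Walters): replace $\mc C$ by $\mc E_N=\vee_{k=0}^N f^{-k}(\mc C)$ without changing the partition-dependent entropy, bound $h^{KS}(f,\mc D)$ by $h^{KS}(f,\mc E_N)+H(\mc D\mid\mc E_N)$, and let $N\to\infty$. The ``in particular'' clause is handled correctly and completely.

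Two concrete gaps remain beyond the approximation lemma you already flag. First, the estimate $\frac{1}{n} H(\vee_{k=0}^{n-1}f^{-k}(\mc D\vee\mc E_N))\le \frac{1}{n} H(\vee_{k=0}^{n-1}f^{-k}(\mc E_N))+H(\mc D\mid\mc E_N)$ is obtained from the chain rule by bounding $\frac{1}{n}\sum_{k=0}^{n-1}H(f^{-k}(\mc D)\mid f^{-k}(\mc E_N))$ by $H(\mc D\mid\mc E_N)$, and the identity $H(f^{-k}(\mc D)\mid f^{-k}(\mc E_N))=H(\mc D\mid\mc E_N)$ holds only when $\mu$ is $f$-invariant: in general the left-hand side is the conditional entropy of $\mc D$ given $\mc E_N$ computed with respect to the pushforward measure $\mu\circ f^{-k}$, which need not be controlled by the corresponding quantity for $\mu$. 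The paper's definition of a DS deliberately does not require $f$-invariance (see the remark following Corollary~\ref{stationary DS}), so your ``stationarity-type bounds'' invoke a hypothesis that is not available as stated; the cited version of the theorem in Downarowicz is for measure-preserving systems, and the hypothesis should either be added or the argument restricted to that setting. Second, the convergence $H(\mc D\mid\mc E_N)\to 0$ for $\mc D$ measurable with respect to $\sigma(\cup_N\mc E_N)$ requires $H(\mc D)<\infty$ when the partitions are countably infinite; this is harmless for the ``in particular'' clause, where only finite-entropy partitions enter the supremum in \eqref{KS entropy}, but the main inequality is asserted for arbitrary $\mc D\in\pa(\Omega)$. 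Since that lemma is the only genuinely measure-theoretic input, citing it rather than reproving it is reasonable --- which is, in effect, what the paper itself does for the entire theorem.
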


A proof of Theorem~\ref{KS Theorem} can be found in \cite[Theorem 4.2.2]{Downarowicz11}.

\begin{corollary}\label{KS theorem 2}
	Let $(\Omega,\Sigma,\mu)$ be a probability space, $(E,\mc E)$ a discrete measurable space, $\mc A$ the atomic partition of $E$, $\X$ an $(\Omega,E)$ stochastic process and $(E^*,\Sigma^*,\mu^{\X},s)$ the symbolic dynamics of $\X$. 
	Then $H(\X)=h^{KS}(s)=h^{KS}(s,\widehat{\mc A})$ whenever $X_0$ has finite entropy.  
\end{corollary}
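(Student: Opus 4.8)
The plan is to combine Proposition~\ref{relationship 1} with the Kolmogorov-Sinai Theorem (Theorem~\ref{KS Theorem}), using the atomic partition $\mc A$ of the discrete space $E$ as the generating partition for the symbolic dynamics $(E^*,\Sigma^*,\mu^{\X},s)$. From Proposition~\ref{relationship 1} we already have $H(\X)=h^{KS}(s,\widehat{\mc A})$, so it suffices to establish the two equalities $h^{KS}(s,\widehat{\mc A})=h^{KS}(s)$, which is exactly the conclusion of the Kolmogorov-Sinai Theorem once we check its two hypotheses: that $\widehat{\mc A}$ is a generating partition for the symbolic dynamics, and that $H(\widehat{\mc A})<\infty$.

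First I would verify that $\widehat{\mc A}$ generates $\Sigma^*=\mc E^*$. By definition $\mc E^*=\sigma(\cup_{n=0}^\infty \mc E^{n+1})$, and since $E$ is discrete, $\mc E=\mc P(E)$ is generated by the singletons; hence every cylinder set $C\left(\begin{smallmatrix} A_0 & \cdots & A_n \\ t_0 & \cdots & t_n\end{smallmatrix}\right)$ lies in $\sigma$ of the $\mc A$-cylinder sets (intersect, over $k$, the countable unions of the singleton-cylinders at time $t_k$). By Equation~\eqref{C cylinder sets} (with the observation after Equation~\eqref{symbolic dynamics measure} that inserting an $E$-slot at an intermediate time changes nothing, so arbitrary time-indexed $\mc A$-cylinders sit inside $\sigma(\cup_n\vee_{k=0}^n s^{-k}(\widehat{\mc A}))$), we get $\sigma(\cup_{n=0}^\infty \vee_{k=0}^n s^{-k}(\widehat{\mc A}))=\mc E^*=\Sigma^*$, so $\widehat{\mc A}$ is generating.

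Next I would check $H(\widehat{\mc A})<\infty$. By definition $\widehat{\mc A}=\{C\left(\begin{smallmatrix} \{e\} \\ 0\end{smallmatrix}\right)\}_{e\in E}$, and $\mu^{\X}(C\left(\begin{smallmatrix} \{e\} \\ 0\end{smallmatrix}\right))=\mu(X_0=e)=p_{X_0}(e)$ by Equation~\eqref{symbolic dynamics measure}; hence $H_{\mu^{\X}}(\widehat{\mc A})=\sum_{e\in E}\eta(p_{X_0}(e))=H(X_0)$, which is finite by hypothesis. Now Theorem~\ref{KS Theorem} applies directly and yields $h^{KS}(s)=h^{KS}(s,\widehat{\mc A})$. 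Chaining this with Proposition~\ref{relationship 1} gives $H(\X)=h^{KS}(s,\widehat{\mc A})=h^{KS}(s)$, as claimed.

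I do not anticipate a serious obstacle here; the statement is essentially a packaging of Proposition~\ref{relationship 1} and the Kolmogorov-Sinai Theorem. The one point requiring a little care is the generating-partition verification: one must argue that the $\sigma$-algebra generated by the $\widehat{\mc A}$-cylinders of the form in Equation~\eqref{C cylinder sets} (i.e.\ consecutive times $0,1,\dots,n$) already contains all cylinder sets at arbitrary increasing integer times, which follows from the well-definedness remark (padding with $E$) together with discreteness of $E$ to pass from singletons to arbitrary subsets. Everything else is a direct substitution into definitions.
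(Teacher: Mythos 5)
Your proof is correct and follows essentially the same route as the paper: identify $\widehat{\mc A}$ as a generating partition via Equation~\eqref{C cylinder sets}, note $H(\widehat{\mc A})=H(X_0)<\infty$, and chain Proposition~\ref{relationship 1} with Theorem~\ref{KS Theorem}. The only difference is that you spell out the generating-partition verification (padding with $E$-slots and using discreteness to pass from singletons to arbitrary subsets), which the paper leaves implicit.
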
 

\begin{proof}
	From Equation~\eqref{C cylinder sets} we see that $\widehat{\mc A}$ is a generating partition for $(E^*, \Sigma^*, \mu^{\X}, s)$. 
	Then Proposition~\ref{relationship 1} and Theorem~\ref{KS Theorem} give that $h^{KS}(s)=h^{KS}(s,\widehat{\mc A})=H(\X)$, whenever $\widehat{\mc A}$ has finite entropy. 
	Noticing that  $H(\widehat{\mc A})= H(\mc A_{X_0})=H(X_0)<\infty$, the result follows.  
\end{proof}

\begin{remark}
Notice that the results in Proposition~\ref{relationship 1} and Corollary~\ref{KS theorem 2} look nearly identical except that the condition $H(X_0)<\infty$ has been added to the latter. 
This assumption is necessary due to the fact that $h^{KS}(s,\widehat{\mc A})$ is defined regardless of whether $H(\widehat{\mc A})$ is finite or infinite, but is only considered in the supremum of Equation~\eqref{KS entropy} when $H(\widehat{\mc A})$ is finite.
\end{remark}
  
A good resource for symbolic dynamics of Markov processes is \cite{Kitchens98}.


\subsection{Differences between entropy rate and KS entropy}\label{differences}

In this section we give the differences between entropy rate and KS entropy. 
The first thing to notice is that dynamics of a stochastic process are probabilistic in nature, whereas the dynamics of a DS are deterministic in nature. 
This fact will be exploited to establish the differences in the two entropies in this section and again in Section~\ref{Dyn Ent of UQRWs} to establish differences between quantum dynamical entropy and KS entropy. 
The following two propositions give properties of KS entropy whose analogous statements do not hold true for entropy rate. 
The first proposition will use the well known fact (see e.g. \cite[Equation~(1.3.2)]{Downarowicz11}) that for any probability space $(\Omega,\Sigma,\mu)$ and partition $\mc C\in\pa (\Omega)$, we have 
\begin{equation}\label{upper bound}
H(\mc C)\le \ln |\mc C| \le \ln |\Omega|.
\end{equation}

\begin{Prop}\label{finite deterministic system}
Let $(\Omega,\Sigma, \mu, f)$ be a DS such that $|\Omega|<\infty$. 
Then $h^{KS}(f)=0$. 
\end{Prop}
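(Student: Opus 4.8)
The plan is to reduce the claim to a purely combinatorial bound on the number of atoms in the joined partitions $\vee_{k=0}^{n-1} f^{-k}(\mc C)$, using the upper bound in Equation~\eqref{upper bound} together with the finiteness of $\Omega$. First I would fix an arbitrary $\mc C\in\pa(\Omega)$ with $H(\mc C)<\infty$; since $|\Omega|<\infty$, every partition of $\Omega$ is automatically finite with at most $|\Omega|$ cells, so the constraint $H(\mc C)<\infty$ is vacuous and $H(\mc C)\le \ln|\Omega|$.

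Next I would observe that for every $n\in\N$ the partition $\vee_{k=0}^{n-1} f^{-k}(\mc C)$ is again a partition of the finite set $\Omega$, hence has at most $|\Omega|$ nonempty cells. Applying Equation~\eqref{upper bound} to this partition gives
\begin{equation*}
H\left(\vee_{k=0}^{n-1} f^{-k}(\mc C)\right)\le \ln\left|\vee_{k=0}^{n-1} f^{-k}(\mc C)\right|\le \ln|\Omega|.
\end{equation*}
Therefore, dividing by $n$ and letting $n\to\infty$,
\begin{equation*}
h^{KS}(f,\mc C)=\lim_{n\ra\infty}\frac{1}{n}H\left(\vee_{k=0}^{n-1} f^{-k}(\mc C)\right)\le \lim_{n\ra\infty}\frac{\ln|\Omega|}{n}=0,
\end{equation*}
and since entropy and conditional entropy are nonnegative (Remark~\ref{subadditivity} and Equation~\eqref{conditional subadditivity}), in fact $h^{KS}(f,\mc C)=0$. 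One should note here that the limit defining $h^{KS}(f,\mc C)$ exists: the sequence $H(\vee_{k=0}^{n-1} f^{-k}(\mc C))$ is nondecreasing in $n$ (by Remark~\ref{subadditivity}, since finer partitions have larger entropy) and bounded above by $\ln|\Omega|$, and a bounded nondecreasing sequence divided by $n$ tends to $0$; alternatively this is immediate from the squeeze $0\le \frac1n H(\vee_{k=0}^{n-1} f^{-k}(\mc C))\le \frac{\ln|\Omega|}{n}$.

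Finally I would take the supremum over all admissible $\mc C$ in the definition of $h^{KS}(f)$ from Equation~\eqref{KS entropy}: since $h^{KS}(f,\mc C)=0$ for every such $\mc C$, we conclude $h^{KS}(f)=0$. I do not anticipate a serious obstacle here; the only subtlety is making sure the partition-dependent limit exists before taking the supremum, which is handled by the boundedness argument above, and being slightly careful that ``$|\Omega|<\infty$'' forces all partitions to be finite so that Equation~\eqref{upper bound} applies uniformly with the same bound $\ln|\Omega|$ for all $n$.
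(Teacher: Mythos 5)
Your proof is correct and follows essentially the same route as the paper: both apply the bound $H(\mc D)\le\ln|\mc D|\le\ln|\Omega|$ from Equation~\eqref{upper bound} to $\vee_{k=0}^{n-1}f^{-k}(\mc C)$, divide by $n$, and take the supremum over partitions. Your extra remarks on existence of the limit (the squeeze argument) are a harmless elaboration of what the paper leaves implicit.
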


\begin{proof}
For any partition $\mc C\in\pa(\Omega)$, Equation~\eqref{upper bound} gives that $H(\vee_{k=0}^{n-1}f^{-k}(\mc C))\le\ln |\Omega|$, for each $n\in\N$.  
Therefore $h^{KS}(f,\mc C)=0$ for every $\mc C\in\pa(\Omega)$ and thus $h^{KS}(f)=0$.
\end{proof}


\begin{Prop}[\text{\cite[Fact 4.1.14]{Downarowicz11}}]\label{KS entropy is linear}
Let $(\Omega,\Sigma, \mu, f)$ be a DS. 
Then the KS entropy of $f$ is linear in time; i.e. 
\begin{equation*}
h^{KS}(f^n)=nh^{KS}(f), 
\quad\text{for all $n\in\N_0$.}
\end{equation*} 
\end{Prop}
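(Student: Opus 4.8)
The plan is to prove this by reducing the KS entropy of $f^n$ with respect to an arbitrary partition to the KS entropy of $f$ with respect to a suitably enlarged partition, and then invoking the Kolmogorov--Sinai machinery together with a counting argument on the joins. The key observation is that the orbit of a partition $\mc C$ under $f^n$ is, after grouping consecutive blocks of $n$ iterates, the same as the orbit of the enlarged partition $\mc C_n := \vee_{k=0}^{n-1} f^{-k}(\mc C)$ under $f^n$. Concretely, for every $m \in \N$ one has the identity
\begin{equation*}
\vee_{j=0}^{m-1} (f^n)^{-j}(\mc C_n) = \vee_{k=0}^{mn-1} f^{-k}(\mc C),
\end{equation*}
since $(f^n)^{-j} = f^{-nj}$ and $f^{-nj}(\mc C_n) = \vee_{i=0}^{n-1} f^{-(nj+i)}(\mc C)$, and letting $j$ range over $0,\ldots,m-1$ and $i$ over $0,\ldots,n-1$ sweeps out exactly the exponents $0,\ldots,mn-1$. (The case $n=0$ is trivial since $f^0 = \mathrm{id}$ has $h^{KS} = 0$ by Proposition~\ref{finite deterministic system}'s argument specialized, or more simply because every join collapses; I would dispose of it at the outset, so assume $n \ge 1$.)

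First I would establish the displayed join identity, then divide by $mn$ and take $m \to \infty$ in Equation~\eqref{invariant state limits}: the left side gives $\frac{1}{n} h^{KS}(f^n, \mc C_n)$ and the right side gives $h^{KS}(f, \mc C)$, so $h^{KS}(f^n, \mc C_n) = n\, h^{KS}(f, \mc C)$ whenever the relevant limits exist. Here I must be slightly careful about existence of the limits defining these partition-dependent entropies, since the paper does not assume $\mu$ is $f$-invariant; the cleanest route is to note that the limit in \eqref{invariant state limits} along the full sequence exists iff it exists along the subsequence of multiples of $n$ with the values related by the factor, using that $H(\vee_{k=0}^{mn-1} f^{-k}(\mc C))$ is monotone in the upper index and that consecutive terms differ by a bounded-in-ratio amount — or, more carefully, just phrase the statement as an equality in $[0,\infty]$ of the $\limsup$'s (or restrict attention, as is standard, to the case where the limits exist, e.g. the stationary case of Corollary~\ref{stationary DS}).

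Next I would pass to the supremum over partitions. Taking $\sup$ over $\mc C$ with $H(\mc C) < \infty$ on both sides of $h^{KS}(f^n,\mc C_n) = n\, h^{KS}(f,\mc C)$ gives $h^{KS}(f^n) \ge n\, h^{KS}(f)$, once I check that $H(\mc C_n) < \infty$ when $H(\mc C) < \infty$ (this follows from subadditivity of $\eta$, Remark~\ref{subadditivity}, applied to the join, together with the chain rule Theorem~\ref{chain rule} which bounds $H(\mc C_n) \le \sum_{k=0}^{n-1} H(f^{-k}(\mc C))$, each summand being finite). For the reverse inequality, given any partition $\mc D$ with $H(\mc D) < \infty$, observe $\sigma(\mc D) \subseteq \sigma(\cup_{m} \vee_{j=0}^m (f^n)^{-j}(\mc D))$ trivially, and more to the point that $\mc D \le \vee_{k=0}^{n-1} f^{-k}(\mc D) =: \mc D_n$ with $\sigma(\mc D) \subseteq \sigma(\cup_m \vee_{k=0}^m f^{-k}(\mc D))$; apply the Kolmogorov--Sinai Theorem~\ref{KS Theorem} in the form $h^{KS}(f^n, \mc D_n) \ge h^{KS}(f^n, \mc D)$ (since $\sigma(\mc D) \subseteq \sigma(\cup_j \vee_{i=0}^j (f^n)^{-i}(\mc D_n))$, again by the join identity) to get $n\, h^{KS}(f, \mc D) = h^{KS}(f^n, \mc D_n) \ge h^{KS}(f^n, \mc D)$, hence $n\, h^{KS}(f) \ge h^{KS}(f^n)$.

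The main obstacle I anticipate is bookkeeping around the existence of the partition-dependent limits in the non-stationary generality the paper allows, and making sure the Kolmogorov--Sinai Theorem is applied with its hypothesis ($\sigma$-algebra inclusion) verified in the right direction for both inequalities; the combinatorial join identity itself is routine. A secondary subtlety is the $n=0$ edge case and confirming the finite-entropy condition propagates to $\mc C_n$, but both are quick. I would write the argument assuming the limits exist (citing that this is guaranteed in the stationary case by Corollary~\ref{stationary DS}, which is the setting of interest), flagging that in general the identity holds at the level of $\limsup$.
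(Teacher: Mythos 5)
Your argument is correct and is essentially the standard proof of this fact --- the paper itself supplies no proof, citing \cite[Fact 4.1.14]{Downarowicz11}, whose argument is exactly your join identity $\vee_{j=0}^{m-1}(f^n)^{-j}(\mc C_n)=\vee_{k=0}^{mn-1}f^{-k}(\mc C)$ followed by the two-sided comparison of the suprema over finite-entropy partitions. Your flag about the existence of the partition-dependent limits in the paper's non-stationary generality is a legitimate point the paper glosses over; the cited source works with invariant measures, where Corollary~\ref{stationary DS} guarantees convergence, so restricting to that setting (or phrasing the identity via $\limsup$) as you propose is the right repair.
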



The example of a stationary Markov process governed by the unbiased random walk on a cycle (which is defined below) is enough to show that entropy rate does not have the analogous properties of KS entropy given in Propositions~\ref{finite deterministic system} and \ref{KS entropy is linear}. 
Let $V=\{0,\ldots, N-1\}$, for some $N\in \N$ with $N\ge 3$, let $\mu$ be the uniform distribution on $V$ and consider the discrete probability space $(V,\mc P(V),\mu)$. 
The unbiased random walk on the $N$-cycle, $V$, is governed by the transition matrix $P$ with entries $p_{v+1,v}=p_{v-1,v}=1/2$, where addition is done modulo $N$, for all $v\in V$, and $p_{u,v}=0$ if $u\neq v\pm 1$.  

\begin{Prop}\label{nonzero entropy finite system}
Let $(V,\mc P(V),\mu)$ be the discrete probability space with $V=\{0,\ldots, N-1\}$, for some $N\in \N$ with $N$ odd and $N\ge 3$, $\mu$ be the uniform distribution on $V$ and $P$ be the transition matrix governing the unbiased random walk on $V$. 
Then $H(P)=\ln 2$ and $H(P^2)=\frac{3}{2}\ln 2$. 
\end{Prop}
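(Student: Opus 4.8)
The plan is to compute both quantities directly from Theorem~\ref{markov entropy rate} (the moreover statement) using the fact that the uniform distribution $\mu$ is $P$-invariant, and then to identify $P^2$ as the transition matrix of a Markov process whose invariant distribution is again $\mu$.

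First I would verify that $\mu$ is $P$-invariant: since every column of $P$ sums to $1$ and $\mu$ assigns weight $1/N$ to each vertex, $(P\mu)_v = \sum_u p_{v,u}\mu_u = \frac1N\sum_u p_{v,u} = \frac1N\cdot 2\cdot\frac12 = \frac1N$, using that exactly the two neighbours $v\pm1$ of $v$ contribute. Hence the associated Markov process with $p_{X_0}=\mu$ is stationary, and Theorem~\ref{markov entropy rate} gives $H(P)=\sum_{v\in V}\mu_v\sum_{u\in V}\eta(p_{u,v})$. For each fixed $v$ the inner sum is $\eta(1/2)+\eta(1/2)=2\cdot\frac12\ln 2=\ln 2$, so $H(P)=\sum_v \frac1N\ln 2=\ln 2$.

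For $H(P^2)$ I would first compute the entries of $P^2$: $(P^2)_{u,v}=\sum_w p_{u,w}p_{w,v}$, where $w$ must be a neighbour of $v$ and $u$ a neighbour of $w$. Starting from $v$ one reaches $v+2$ (via $v+1$, one way), $v-2$ (via $v-1$, one way), and $v$ itself (via $v+1$ or $v-1$, two ways). So $(P^2)_{v\pm2,v}=\frac12\cdot\frac12=\frac14$ and $(P^2)_{v,v}=2\cdot\frac14=\frac12$; here is where oddness of $N$ matters, since it guarantees $v+2$, $v-2$ and $v$ are three distinct vertices (for $N=3$ one has $v+2=v-1$ etc., but they are still distinct, and $v\pm 2\ne v$; the case $N=4$ would collapse $v+2=v-2$, which is why $N$ odd is assumed—I should double-check the small cases but expect $N$ odd suffices). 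Then $P^2$ is again column-stochastic with column sums $\frac14+\frac14+\frac12=1$, and $\mu$ is $P^2$-invariant (either directly, or because $P\mu=\mu\Rightarrow P^2\mu=\mu$). Applying Theorem~\ref{markov entropy rate} to the stationary Markov process governed by $P^2$ gives $H(P^2)=\sum_v\mu_v\big(\eta(\tfrac14)+\eta(\tfrac14)+\eta(\tfrac12)\big)$. The inner sum is $2\cdot\frac14\ln 4+\frac12\ln 2=\frac12\cdot 2\ln 2+\frac12\ln 2=\frac32\ln 2$, so $H(P^2)=\frac32\ln 2$.

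The only subtle point—and the one I would be most careful about—is justifying that $P^2$ really is the transition matrix of the two-step Markov process and that its invariant vector is $\mu$, together with the combinatorial bookkeeping of which vertices are reachable in two steps without accidental coincidences; this is exactly where the hypothesis that $N$ is odd (and $\ge 3$) enters, ensuring the three two-step destinations $v$, $v\pm 2$ are genuinely distinct so that the probabilities $\frac12,\frac14,\frac14$ are not to be further combined. Everything else is a direct substitution into Theorem~\ref{markov entropy rate} and evaluation of $\eta$.
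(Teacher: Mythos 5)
Your proposal is correct and follows essentially the same route as the paper: verify that the uniform distribution is invariant for $P$ (and hence for $P^2$), compute the entries of $P^2$ using that $N$ odd keeps $v$, $v+2$, $v-2$ distinct, and substitute into the stationary entropy-rate formula of Theorem~\ref{markov entropy rate}. The arithmetic $2\eta(\tfrac12)=\ln 2$ and $2\eta(\tfrac14)+\eta(\tfrac12)=\tfrac32\ln 2$ matches the paper exactly.
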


\begin{proof}
Clearly $\mu$ is the unique probability measure that is  $P$-invariant. 
Therefore Equation~\eqref{entropy of P} gives that 
\begin{equation*}
H(P)=\sum_{v\in V}\mu_v \sum_{u\in V}\eta(p_{u, v})
=\sum_{v\in V}\frac{1}{N}2\eta(\frac{1}{2})= \ln 2.
\end{equation*} 
\noindent Also notice that, for all $v\in V$, $P^2$ has entries $p^{(2)}_{v\pm 2,v}=\frac{1}{4}$, $p^{(2)}_{v,v}=\frac{1}{2}$ and $p^{(2)}_{u,v}=0$ in all other cases, where addition is done modulo $N$. 
Again $\mu$ is the unique probability measure that is $P^2$-invariant and Equation~\eqref{entropy of P} gives that 
\begin{equation*}
H(P^2)=\sum_{v\in V}\mu_v \sum_{u\in V}\eta(p^{(2)}_{u, v}) 
= \sum_{v\in V} \frac{1}{N}(2\eta(\frac{1}{4})+\eta(\frac{1}{2}))=\frac{3}{2}\ln 2.
\end{equation*} 
\end{proof}

Proposition~\ref{nonzero entropy finite system} and Corollary~\ref{KS theorem 2} establish that the KS entropy of the symbolic dynamics of a stochastic process with range in a finite measurable space need not be zero, whereas Proposition~\ref{finite deterministic system} states that the KS entropy of a finite DS must be 0. 
Propositions~\ref{nonzero entropy finite system} and \ref{finite deterministic system} do not contradict Proposition~\ref{relationship 1} since the cardinality of $E^*$, in the symbolic dynamics of a stochastic process, is not finite unless the range, $E$, of the stochastic process is a singleton. 
Also, Proposition~\ref{nonzero entropy finite system} says that entropy rate is not linear in time whereas Proposition~\ref{KS entropy is linear} says that KS entropy is linear in time. 
Again these two propositions are not contradictory. 
We will elaborate a bit further for clarity. 
In what follows, we will denote the KS entropy of a DS $(\Omega,\Sigma,\mu, f)$ by $h^{KS}(f,\mu)$ instead of $h^{KS}(f)$ to distinguish between different measures. 
We will denote the partition dependent KS entropy similarly. 

Let $(V,\mc P(V),\mu)$ be the finite discrete probability space with $V=\{0,\ldots, N-1\}$, for some $N\in\N$ with $N$ odd and $N\ge 3$, $\mu$ be the uniform distribution on $V$, $P$ be the transition matrix governing the unbiased random walk on the $N$-cycle, $V$, $\X=(X_n)_{n=0}^\infty$ be any stationary Markov process governed by the transition matrix $P$, $\mc A_V$ the atomic partition of $V$ and $(V^*,\mc P(V)^*,\mu^{\X},s_1)$ be the symbolic dynamics of $\X$, where $s_1$ denotes the shift map on $V^*$.  
Since $\widehat{\mc A_V}$ is a generating partition for $(V^*,\mc P(V)^*,\mu^{\X},s_1)$, Corollary~\ref{KS theorem 2} shows that $H(\X)=h^{KS}(s_1,\mu^{\X})=h^{KS}(s_1,\mu^{\X}, \widehat{\mc A_V})$. 
Since $\widehat{\mc A_V}\vee s_1\inv(\widehat{\mc A_V})$ is a generating partition for $(V^*,\mc P(V)^*,\mu^{\X},s_1^2)$ with finite entropy, the KS Theorem gives that $h^{KS}(s_1^2,\mu^{\X})=h^{KS}(s_1^2,\mu^{\X}, \widehat{\mc A_V}\vee s_1\inv(\widehat{\mc A_V}))$. 
Next, consider the stationary Markov process $\Y=((X_{2n},X_{2n+1}))_{n=0}^\infty$ and let $((V\times V)^*,\mc P(V\times V)^*,\mu^{\Y},s_2)$ be the symbolic dynamics of $\Y$ and $\mc A_{V\times V}$ be the atomic partition of $V\times V$, where $s_2$ denotes the shift map on $(V\times V)^*$. 
Since $\widehat{\mc A_{V\times V}}$ is a generating partition for $((V\times V)^*,\mc P(V\times V)^*,\mu^{\Y},s_2)$ with finite entropy, Corollary~\ref{KS theorem 2} gives that $H(\Y)=h^{KS}(s_2,\mu^{\Y})=h^{KS}(s_2,\mu^{\Y},\widehat{\mc A_{V\times V}})$. 
Notice that 
\begin{equation*}
\mu^{\Y}(C\left(\begin{smallmatrix} (e_0,e_1) & \cdots & (e_{2n},e_{2n+1}) \\ 0 & \cdots & n\end{smallmatrix}\right))=
\mu^{\X}(C\left(\begin{smallmatrix} e_0 & \cdots & e_{2n+1} \\ 0 & \cdots & 2n+1\end{smallmatrix}\right)),
\end{equation*}

\noindent for all $e_0,\ldots, e_{2n+1}\in E$. 
Thus 
$$H_{\mu^{\Y}}(\vee_{k=0}^n s_2^{-k}(\widehat{\mc A_{V\times V}})))=H_{\mu^{\X}}(\vee_{k=0}^n (s_1^2)^{-k}(\widehat{\mc A_V}\vee s\inv(\widehat{\mc A_V})) \text{ for all $n\in\N_0$}$$
\noindent and therefore 
$$H(\Y)=h^{KS}(s_2,\mu^{\Y},\widehat{\mc A_{V\times V}})=h^{KS}(s_1^2,\mu^{\X}, \widehat{\mc A_V}\vee s_1\inv(\widehat{\mc A_V}))=2H(\X).$$ 
\noindent In other words, the KS entropy of $(V^*,\mc P(V)^*,\mu^{\X},s_1^2)$ is equal to the KS entropy of $((V\times V)^*,\mc P(V\times V)^*,\mu^{\Y},s_2)$ and corresponds to the entropy rate of $\Y$.  

Next consider the stochastic process $\mathbf Z=(X_{2n})_{n=0}^\infty$ and let $(V^*,\mc P(V)^*,\mu^{\mathbf Z},s_1)$ be the symbolic dynamics of $\mathbf Z$. 
Then $\mathbf Z$ is the stationary and invariant Markov process governed by the transition matrix $P^2$ and, from Proposition~\ref{nonzero entropy finite system} and Corollary~\ref{KS theorem 2}, $H(\mathbf Z)=h^{KS}(s_1,\mu^{\mathbf Z})=h^{KS}(s_1,\mu^{\mathbf Z},\widehat{\mc A_V})=\frac{3}{2}\ln 2$. 
Thus Propositions~\ref{KS entropy is linear} and \ref{nonzero entropy finite system} are not contradictory as $2H(P)=2H(\X)=h^{KS}(s_1^2,\mu^{\X})$ corresponds to the entropy rate of $\Y$, whereas $H(P^2)=H(\mathbf Z)=h^{KS}(s_1,\mu^{\Z})$ corresponds to the entropy rate of $\mathbf Z$.


\section{Measurements}\label{measurements}

Here we recall the formalism of measurements, developed by Edwards \cite{Edwards70} and Davies and Lewis \cite{DL70, Davies76}. 
We follow mainly the notations of Davies and Lewis. 
We define phase space, state space, observables and instruments. 
This formalism is general enough that it holds valid for classical mechanics, Hilbert space quantum mechanics, and $C^*$-algebra quantum mechanics (even though we don't discuss the last one here). 

A \textbf{state space} is defined as a pair $(X,K)$, where
\begin{itemize}
	\item[(i)] $X$ is a real Banach space with norm $\|\cdot \|$,
	\item[(ii)] $K$ is a closed cone in $X$,
	\item[(iii)] if $u,v\in X$, then $\|u\|+\|v\|=\|u+v\|$, and
	\item[(iv)] if $u\in X$ and $\ep>0$, then there exists $u_1, u_2\in K$ such that $u=u_1-u_2$ and $\|u_1\|+\|u_2\|<\|u\|+\ep$.
\end{itemize} 

It can be shown that, for any state space $(X,K)$, there exists a unique positive linear functional $\tau: X\ra \R$ such that $\tau(u)\le\|u\|$, for $u\in X$, with equality when $u\in K$. We say that $u\in K$ is a \textbf{state} if $\tau(u)=1$. It should be remarked that all examples of state spaces presented here will satisfy a strengthening of (iv); namely,

\begin{itemize}
	\item[(iv$'$)] if $u\in X$, then there exists $u_1, u_2\in K$ such that $u=u_1-u_2$ and $\|u_1\|+\|u_2\|=\|u\|$.
\end{itemize} 

A \textbf{phase space} is defined as an arbitrary measurable space $(\Omega, \Sigma)$, where $\Omega$ represents the collection of possible outcomes of a measurement and is sometimes called the value space in the literature. 
We say that $x:\Sigma\ra X^*$ is an \textbf{observable} if, for every $E\in\Sigma$, $0\le x(E)\le \tau$ and $x(\Omega)=\tau$, where the partial ordering on $X^*$ is defined by $\phi\le \psi$ whenever $\phi(u)\le \psi(u)$ for all $u\in K$. 
Given a state $u\in K$, an observable $x$, and $E\in\Sigma$, we interpret $x(E)u$ as the probability that a system in state $u$ takes values in $E$ when observed with the observable $x$.

An \textbf{operation} is a positive, bounded linear operator $T: X\ra X$, such that $0\le \tau(Tu)\le \tau(u)$ for every $u\in K$. 
We denote by $\mc O:= \mc O(X)$ the set of all operations on $X$. 
Finally, we define an \textbf{instrument} as a map $\mc T:\Sigma \ra \mc O$ such that $\tau(\mc T(\Omega)u)=\tau(u)$, for all $u\in K$, and  
$\mc T(\cup_n E_n)=\sum_n \mc T(E_n)$, for any disjoint sequence of sets $\{E_n\}\subseteq \Sigma$, where convergence of the sum is in the strong operator topology.

Notice that for any instrument $\mc T$, one can define a unique observable $x_{\mc T}$ by setting $x_{\mc T}(E)u=\tau(\mc T(E)u)$ for $u\in X$ and $E\in\Sigma$. 
However, it is possible that two distinct instruments, $\mc T\ne \mc S$, give rise to the same observable, $x_{\mc T}=x_{\mc S}$. 
From the above correspondence, given an initial state $u\in K$ and $E\in\Sigma$, we can interpret $\mc T(E)u/x_{\mc T}(E)u\in K$ as the state of the system immediately after measuring the system in state $u$ with the instrument $\mc T$ and obtaining values in the set $E$. 

The next example illustrates measurements in the standard version of classical mechanics. 

\begin{example}[Classical Mechanics]\label{classical mechanics}
	Let $\Omega$ be a locally compact Hausdorff space and $\mc B$ be the Borel $\sigma$-algebra of $\Omega$. 
	Take $X$ to be the real Banach space of all countably additive, regular, real-valued Borel measures on $X$ and take the norm on $X$ to be the total variation norm and the closed cone, $K$, to be the set of non-negative measures in $X$. 
	It is clear that $(X,K)$ satisfies conditions (i)-(iii) and (iv$'$) of a state space by taking $u_1$ and $u_2$ to be the positive and negative parts, respectively, of $u \in X$ given by the Hahn decomposition. Furthermore, the linear functional $\tau$ is given by $\tau(\nu)=\int_\Omega d\nu=\nu(\Omega)$ for any $\nu\in X$.
	The phase space is given by $(\Omega,\mc B)$.
	We define the (classical) \textbf{sharp measurement instrument} $\mc T$ by 
	\begin{equation}\label{sharp measurement}
	\mc T(E)\nu(A)=\nu(A\cap E) \quad \text{for $\nu\in X$ and $A,E\in \mc B$}.
	\end{equation}
	
	\noindent The corresponding observable is given by
	$$
	x_{\mc T}(E)\nu=\tau(\mc T(E)\nu)=\nu(E)\quad \text{for }E\in\mc B \text{ and } \nu\in X.
	$$	
\end{example}

The next example illustrates measurements in the Hilbert space formalism of quantum mechanics with discrete phase space. 

\begin{example}[Hilbert Space Quantum Mechanics]\label{HSM1}
	Let $H$ be a Hilbert space. 
	Take $X=S_1^{sa}(H)$ to be the real Banach space of self-adjoint, trace class operators on $H$ equipped with the trace class norm and the closed cone, $K=S_1^+(H)$, to be collection of all the positive, trace class operators on $H$. 
	It is clear that the state space $(X,K)$ satisfies conditions (i)-(iii) and (iv$'$) of a state space by taking $u_1$ and $u_2$ to be the positive and negative parts, respectively, of $u \in X$ given by the functional calculus. 
	Furthermore, the linear functional $\tau$ is given by  the trace, $\tr$. 
	Let $(\Omega,\mc P(\Omega))$ be a discrete phase space and  $(B_i)_{i\in\Omega}\subseteq B(H)$ be a collection of bounded operators, indexed by $\Omega$, such that $\sum_{i\in \Omega} B_i^* B_i=\1$, where $\1$ is the identity operator on $H$.   
	We define the instrument $\mc T$, given by the family $(B_i)_{i\in\Omega}$, by 
	$$
	\mc T(E)\rho=\sum_{i\in E}B_i\rho B_i^*\quad\text{for each $\rho\in X$ and $E\in \mc P(\Omega)$,}
	$$
	
	\noindent where the sums are taken with respect to the strong operator topology if $\Omega$ is countably infinite. 
	When restricted to actions on $K$, $\mc T$ represents a positive operator valued measure. 
	The corresponding observable is given by
	$$
	x_{\mc T}(E)\rho =\sum_{i\in E}\tr(B_i \rho B_i^*)\quad\text{for each $\rho\in X$ and $E\in \mc P(\Omega)$.}
	$$
\end{example} 	
	
Whenever the family of bounded operators are pairwise orthogonal projections, we denote them by $(P_i)_{i\in\Omega}$ and note that  $\sum_{i\in\Omega}P_i^*P_i=\sum_{i\in\Omega}P_i=\1$. 
In this case, the corresponding instrument $\mc T$, for $(P_i)_{i\in\Omega}$, is called a \textbf{L\"uders-von Neumann instrument} and is given by 
\begin{equation*}
\mc T(E)\rho=\sum_{i\in E}P_i \rho P_i \quad \text{for $\rho \in X$ and $E\in\mc P(\Omega)$,}
\end{equation*}

\noindent where the sums are taken with respect to the strong operator topology if $\Omega$ is countably infinite. 	
\noindent It is worth noting that $\mc T$ is defined by the ``collapse of wave function formula." 
The corresponding observable is defined analogously. 
 
Whenever the family $(P_i)_{i\in\Omega}$ consists of orthogonal, rank-1 projections, the L\"uders-von Neumann instrument $\mc T$ is called a \textbf{coherent states instrument} (see \cite[Section IV]{SZ94}.) 
In this paper whenever we refer to a coherent states instrument we will always mean a L\"uders-von Neumann instrument given by a family of orthogonal, rank-1 projections as opposed to the more general definition given in \cite[Example~(M)]{SZ94}.


\section{Quantum Dynamical Entropy}\label{Dyn Ent of UQRWs}


There have been many successful attempts to generalize KS entropy to a quantum dynamical entropy in \cite{CNT87, AF94, SZ94, AOW97, KOW99} and more. 
In \cite{SZ94}, the authors use a semi-classical approach to develop a quantum dynamical entropy using the general notions of state space, phase space, observables and instruments introduced in Section~\ref{measurements}. 
Furthermore, the quantum dynamical entropy of \cite{SZ94} has the benefit that it is not guaranteed to be zero for finite systems, unlike the others. 
This is due to its symbolic dynamics approach.  

Let $(X,K)$ be a state space and $u\in K$ be a state. 
Let $(\Omega, \Sigma)$ be a phase space, $\mc T$ an instrument and $\Theta$ a $\tau$-preserving automorphism of $X$; i.e. $\tau(\Theta v)=\tau(v)$ for all $v\in X$. 
Let $(\Omega^*,\Sigma^*)$ be the measurable space defined in Section~\ref{symbolic dynamics}. 
We will define an instrument and state-dependent probability measure, $\mu^{(\Theta, \mc T, u)}$, on $(\Omega^*,\Sigma^*)$. 
First, we define the values of $\mu^{(\Theta, \mc T, u)}$ on the cylinder sets in $\Sigma^*$ with an initial interval of time sequences, $\{k\}_{k=0}^n$ for some $n\in\N_0$,  by  
\begin{equation}\label{general entropy definition}
\mu^{(\Theta,\mc T, u)}(C\left(\begin{smallmatrix} A_0 & \cdots & A_{n} \\ 0 & \cdots & n\end{smallmatrix}\right))
= \tau(\mc T(A_n)\circ \Theta\circ \cdots \circ \mc T(A_1)\circ \Theta\circ \mc T(A_0)u),
\end{equation}

\noindent for all $A_0,\ldots, A_n\in \Sigma$. 
Since the collection of cylinder sets with an initial interval of time sequences form a $\pi$-system which generates $\Sigma^*$, there is a unique extension of  $\mu^{(\Theta, \mc T, u)}$ to $(\Omega^*,\Sigma^*)$ by the $\pi$-$\lambda$ Theorem. 

Notice that, for a stochastic process $\X$, we defined the measure $\mu^{\X}$ first on cylinder sets with arbitrary time sequences (Equation~\eqref{symbolic dynamics measure}), whereas the measure, $\mu^{(\Theta,\mc T, u)}$ in Equation~\eqref{general entropy definition}, was defined first on the cylinder sets with an initial interval of time sequences. 
By defining $\mu^{(\Theta,\mc T, u)}$ in this way, we have that, for $A_0,A_2\in \Sigma$,
\begin{equation*}
\mu^{(\Theta,\mc T, u)}(C\left(\begin{smallmatrix} A_0 &  A_{2} \\ 0 & 2\end{smallmatrix}\right))
= \mu^{(\Theta,\mc T, u)}(C\left(\begin{smallmatrix} A_0 &\Omega &  A_{2} \\ 0 &1 & 2\end{smallmatrix}\right)) 
= \tau(\mc T(A_2)\circ\Theta\circ \mc T(\Omega)\circ \Theta\circ \mc T(A_0)u),
\end{equation*}

\noindent which is not necessarily equal to $\tau(\mc T(A_2)\circ\Theta^2 \circ \mc T(A_0)u)$. 
Therefore $\mu^{(\Theta,\mc T, u)}(C\left(\begin{smallmatrix} A_0 &  A_{2} \\ 0 & 2\end{smallmatrix}\right))$ is interpreted as the probability that a system in initial state $u$ will be measured at times $0,1, 2$ and record the measurement sequence $(A_0,A_2)$ at times $0$ and $2$. 
In other words, we must assume that the instrument $\mc T$ is interacting with the system at all integer times, regardless of whether or not we record a measurement. 

%

Define the $(\Omega^*,\Omega)$ stochastic process $\X^{(\Theta,\mc T, u)}=(X^{(\Theta,\mc T, u)}_n)_{n=0}^\infty$ by $X^{(\Theta,\mc T, u)}_n(x)=x_n$ for each $x=(x_m)_{m\in\N_0}\in \Omega^*$ and, for each $\mc C\in\pa(\Omega)$, define the $(\Omega^*,\mc C)$ stochastic process $\X^{(\Theta,\mc T, u)}_{\mc C}=(X^{(\Theta,\mc T, u,\mc C)}_n)_{n=0}^\infty$ by $X^{(\Theta,\mc T, u,\mc C)}_n=i_{\mc C}\circ X^{(\Theta,\mc T, u)}_n$, where $i_{\mc C}:\Omega\ra \mc C$ is the natural map that assigns to each $x\in \Omega$ the unique $A\in \mc C$ such that $x\in A$. 
Even though the formulas of $\X^{(\Theta,\mc T,u)}$ and $\X^{(\Theta,\mc T,u)}_{\mc C}$ do not depend on $\Theta$, $\mc T$ and $u$, the measure $\mu^{(\Theta,\mc T,u)}$ on their domain, $\Omega^*$, depends on $\Theta$, $\mc T$ and $u$.

We define the \textbf{S\l{}omczy\'nski-\.Zyczkowski (SZ) entropy of \bm{$(\Theta,\mc T, u)$}
with respect to \bm{$\mc C$}} to be the entropy rate of the stochastic process $\X^{(\Theta,\mc T, u)}_{\mc C}$. It is related to $\mu^{(\Theta,\mc T, u)}$ by the equation 
\begin{equation}\label{SZ entropy}
h^{SZ}(\Theta,\mc T, u,\mc C):= H(\X^{(\Theta,\mc T, u)}_{\mc C})=\lim_{n\ra\infty}\frac{1}{n} \sum_{\substack{ A_{k}\in \mc C \\ 0\le k\le n-1}}
\eta(\mu^{(\Theta,\mc T, u)}(C\left(\begin{smallmatrix} A_{0} & \cdots & A_{n-1} \\ 0 & \cdots & n-1\end{smallmatrix}\right))),
\end{equation}

\noindent whenever the limit exists. 
The second equality follows from Equations~\eqref{static entropy of process} and \eqref{entropy rate}. 

\begin{remark}\label{entropy equivalence 3}
Let $s$ be the shift transformation on $(\Omega^*,\Sigma^*,\mu^{(\Theta, \mc T, u)})$ so that $(\Omega^*,\Sigma^*,\mu^{(\Theta, \mc T, u)}, s)$ is a DS. 
From Proposition~\ref{relationship 1} and the definition of $\X^{(\Theta,\mc T, u)}_{\mc C}$ it is clear that $$h^{SZ}(\Theta,\mc T, u,\mc C)=H(\X^{(\Theta,\mc T, u)}_{\mc C})=h^{KS}(s,\widehat{\mc C}).$$ 
\end{remark}

Next we split the SZ entropy of $(\Theta,\mc T, u)$ with respect to $\mc C$ into two different causes for randomness. 
The first cause of randomness is that caused by the choice of instrument, is referred to as the \textbf{measurement SZ entropy} and is given by 
\begin{equation}\label{measurement entropy}
h^{SZ}_{\text{meas}}(\mc T, u, \mc C): = h^{SZ}(\1, \mc T, u, \mc C).
\end{equation}

\noindent The second cause of randomness is given by the dynamics; i.e. the automorphism $\Theta$, is referred to as the \textbf{dynamical SZ entropy}, and is given by the difference  
\begin{equation*}
h^{SZ}_{\text{dyn}}(\Theta, \mc T, u, \mc C): = h^{SZ}(\Theta, \mc T, u, \mc C)- h^{SZ}_{\text{meas}}(\mc T, u, \mc C).
\end{equation*}

Finally, we define the \textbf{dynamical SZ entropy of \bm{$(\Theta,\mc T, u)$}} by 
\begin{equation}\label{SZ entropy 2}
h^{SZ}_{\text{dyn}}(\Theta,\mc T,u):= \sup_{\substack{\mc C\in\pa(\Omega)}\\ H(\widehat{\mc C})<\infty} h^{SZ}_{\text{dyn}}(\Theta,\mc T,u, \mc C).
\end{equation}

The following lemma 
is claimed in \cite{SS17}. 
For completeness we provide the proof. 

\begin{lemma}\label{lvn gives 0 meas entropy}
Let $(\Omega,\mc P(\Omega))$, $(X,K)$ and $H$ be as in Example~\ref{HSM1} and let $\mc T$ be a L\"uders-von Neumann instrument. 
Then $h^{SZ}_{\text{meas}}(\mc T, \rho, \mc C)=0$ for any state $\rho\in K$ and any $\mc C\in \pa(\Omega)$ with finite entropy; i.e. $H(\widehat{\mc C})<\infty$. 
\end{lemma}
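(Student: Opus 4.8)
The plan is to show that when $\mc T$ is a L\"uders-von Neumann instrument with $\Theta = \1$, the process $\X^{(\1,\mc T,\rho)}_{\mc C}$ is actually a deterministic (degenerate) process after the first step, so its entropy rate vanishes. Concretely, write $\mc T(A)\sigma = \sum_{i\in A} P_i\sigma P_i$ for the rank-$k$ projections $(P_i)_{i\in\Omega}$. The key algebraic fact is idempotency of L\"uders operations: for any $A, B \in \mc P(\Omega)$ one has $\mc T(A)\circ \mc T(B) = \mc T(A\cap B)$, because $P_iP_j = \delta_{ij}P_i$. First I would record this identity and deduce from \eqref{general entropy definition} with $\Theta = \1$ that
\begin{equation*}
\mu^{(\1,\mc T,\rho)}(C\left(\begin{smallmatrix} A_0 & \cdots & A_{n} \\ 0 & \cdots & n\end{smallmatrix}\right)) = \tr(\mc T(A_n)\circ\cdots\circ \mc T(A_0)\rho) = \tr(\mc T(A_0\cap\cdots\cap A_n)\rho).
\end{equation*}

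Next I would translate this into a statement about the joint pmf of $\X^{(\1,\mc T,\rho)}_{\mc C}$. For a partition $\mc C\in\pa(\Omega)$ and sets $A_0,\ldots,A_{n-1}\in\mc C$, the intersection $A_0\cap\cdots\cap A_{n-1}$ is empty unless $A_0 = A_1 = \cdots = A_{n-1}$, in which case it equals that common set. Hence the joint pmf is supported on constant sequences: $p(A_0,\ldots,A_{n-1}) = \tr(\mc T(A_0)\rho)$ if all $A_k$ are equal and $0$ otherwise. Plugging into \eqref{static entropy of process} gives $H(X^{(\1,\mc T,\rho,\mc C)}_0,\ldots,X^{(\1,\mc T,\rho,\mc C)}_{n-1}) = \sum_{A\in\mc C}\eta(\tr(\mc T(A)\rho)) = H(\widehat{\mc C}_{\text{as probability}})$, a constant independent of $n$ (here I am using that $\mu^{(\1,\mc T,\rho)}(C\left(\begin{smallmatrix}A\\0\end{smallmatrix}\right)) = \tr(\mc T(A)\rho)$, so this constant is precisely $H$ of the first-coordinate partition, which is finite since $H(\widehat{\mc C})<\infty$). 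Dividing by $n$ and letting $n\to\infty$ yields $h^{SZ}_{\text{meas}}(\mc T,\rho,\mc C) = H(\X^{(\1,\mc T,\rho)}_{\mc C}) = 0$ directly from \eqref{SZ entropy} and \eqref{measurement entropy}.

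The main obstacle — really the only point requiring care — is verifying the idempotency identity $\mc T(A)\circ\mc T(B) = \mc T(A\cap B)$ cleanly, including the bookkeeping of the (possibly infinite) sums in the strong operator topology when $\Omega$ is countably infinite, and confirming that the finiteness hypothesis $H(\widehat{\mc C})<\infty$ is exactly what guarantees the constant $\sum_{A\in\mc C}\eta(\tr(\mc T(A)\rho))$ is finite (so that the $\frac1n$-limit is legitimately $0$ rather than an $\infty/\infty$ indeterminacy). I would handle the infinite-sum issue by noting that the operators $P_i\rho P_i$ are positive with $\sum_i\tr(P_i\rho P_i) = \tr\rho < \infty$, so all rearrangements converge absolutely in trace norm, and then $\mc T(A)\mc T(B)\rho = \sum_{i\in A}\sum_{j\in B} P_iP_j\rho P_jP_i = \sum_{i\in A\cap B}P_i\rho P_i$. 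Everything else is a routine substitution into the definitions already established in the excerpt.
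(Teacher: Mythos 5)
Your proof is correct and follows essentially the same route as the paper: both reduce to the observation that pairwise orthogonality of the $P_i$ forces $\mu^{(\1,\mc T,\rho)}$ to be supported on constant sequences of partition elements, so the joint entropy is the constant $H(\widehat{\mc C})<\infty$ and the $\frac1n$-limit vanishes. Your intermediate idempotency identity $\mc T(A)\circ\mc T(B)=\mc T(A\cap B)$ is a slightly cleaner (and, for arbitrary subsets rather than partition elements, more precise) way of packaging the same orthogonality fact the paper invokes directly.
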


\begin{proof}
Let $(P_i)_{i\in\Omega}$ be the family of pairwise orthogonal projections that governs $\mc T$ and fix a state $\rho\in K$.
Since the family, $(P_i)_{i\in\Omega}$, is pairwise orthogonal we have, for any $n\in\N_0$ and $A_{0},\ldots, A_{n}\in \mc P(\Omega)$, that 
\begin{equation*}
\mu^{(\1,\mc T, \rho)}(C\left(\begin{smallmatrix} A_{0} & \cdots & A_{n} \\ 0 & \cdots & n\end{smallmatrix}\right))
= \begin{cases} \sum_{a\in A_0} \tr(P_a\rho P_a)=\mu^{(\1,\mc T,\rho)}(\left(\begin{smallmatrix} A_{0} \\ 0\end{smallmatrix}\right)) & \text{if }A_{0}=\cdots=A_{n} \\ 0 & \text{else}\end{cases}
\end{equation*}

\noindent Therefore, for any $\mc C\in \pa(\Omega)$ with $H(\widehat{\mc C})<\infty$, we have  
\begin{equation*}
\begin{split}
h^{SZ}_{\text{meas}}(\mc T, \rho, \mc C)&=\lim_{n\ra\infty} \frac{1}{n} \sum_{A\in \mc C}\eta (\mu^{(\1,\mc T, \rho)}(C\left(\begin{smallmatrix} A  \\ 0\end{smallmatrix}\right))) 
\quad\text{by \eqref{SZ entropy} and \eqref{measurement entropy}} \\
&= \lim_{n\ra\infty}\frac{1}{n} H(\widehat{\mc C})=0
\quad\text{by the definition of $\widehat{\mc C}$.}
\end{split}
\end{equation*}

%
\end{proof}

\begin{remark}
It is natural to consider only the partitions $\mc C\in\pa(\Omega)$ with finite entropy in Lemma~\ref{lvn gives 0 meas entropy}, because these are the only partitions considered in Equation~\eqref{SZ entropy 2}.  
\end{remark}

Fix a discrete phase space $(\Omega,\Sigma)$ with $|\Omega|= N$.  
Then Lemma~\ref{lvn gives 0 meas entropy}, together with Equation~\eqref{upper bound}, implies that $h^{SZ}_{\text{dyn}}(\Theta, \mc T, \rho, \mc C)= h^{SZ}(\Theta, \mc T, \rho, \mc C)$ for any unitary transformation $\Theta$, partition $\mc C$, state $\rho$, and coherent states instrument $\mc T$. 
Next we show that the measurement SZ entropy for classical sharp measurement instruments is equal to 0 as well. 

\begin{lemma}\label{csm gives 0 meas entropy}
Let $(\Omega,\mc B)$, $(X,K)$, $\tau$ and $\mc T$ be as in Example~\ref{classical mechanics}. 
Then, for any state $\mu\in K$ and partition $\mc C\in\pa(\Omega)$ with finite entropy, we have $h^{SZ}_{\text{meas}}(\mc T, \mu, \mc C)=0$.  
\end{lemma}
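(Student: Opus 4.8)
The plan is to mimic the proof of Lemma~\ref{lvn gives 0 meas entropy} verbatim, replacing the L\"uders--von Neumann instrument with the classical sharp measurement instrument of Example~\ref{classical mechanics}, and observing that the ``collapse'' structure that made the multi-time cylinder probabilities vanish off the diagonal is present here as well. First I would unwind the definition of $h^{SZ}_{\text{meas}}(\mc T,\mu,\mc C)=h^{SZ}(\1,\mc T,\mu,\mc C)$ via Equation~\eqref{SZ entropy}, so that the quantity to compute is
\begin{equation*}
h^{SZ}_{\text{meas}}(\mc T,\mu,\mc C)=\lim_{n\ra\infty}\frac{1}{n}\sum_{\substack{A_k\in\mc C\\ 0\le k\le n-1}}\eta\big(\mu^{(\1,\mc T,\mu)}(C\left(\begin{smallmatrix}A_0&\cdots&A_{n-1}\\0&\cdots&n-1\end{smallmatrix}\right))\big),
\end{equation*}
whenever the limit exists, and the point is that each $\mu^{(\1,\mc T,\mu)}$-value of a cylinder set is supported on the ``constant'' cylinders $A_0=\cdots=A_{n-1}$.

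The key computation is that, using Equation~\eqref{general entropy definition} with $\Theta=\1$ and the sharp measurement formula $\mc T(E)\nu(A)=\nu(A\cap E)$ from Equation~\eqref{sharp measurement}, one gets, for any Borel sets $A_0,\ldots,A_n$,
\begin{equation*}
\mu^{(\1,\mc T,\mu)}(C\left(\begin{smallmatrix}A_0&\cdots&A_n\\0&\cdots&n\end{smallmatrix}\right))=\tau\big(\mc T(A_n)\circ\cdots\circ\mc T(A_0)\mu\big)=\mu(A_0\cap A_1\cap\cdots\cap A_n),
\end{equation*}
since successive applications of $\mc T(A_k)$ restrict the measure to $A_0\cap\cdots\cap A_k$ and $\tau$ just evaluates total mass. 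When the $A_k$ range over a partition $\mc C\in\pa(\Omega)$, the intersection $A_0\cap\cdots\cap A_n$ is empty unless $A_0=\cdots=A_n$, in which case it equals $\mu(A_0)$. Hence the sum over all $\mc C$-tuples collapses to the sum over constant tuples, giving $\sum_{A\in\mc C}\eta(\mu(A))=H(\widehat{\mc C})$ (noting $H(\widehat{\mc C})=H_\mu(\mc C)$ from the definition of $\widehat{\mc C}$), independently of $n$. Therefore the limit exists and equals $\lim_{n\ra\infty}\frac{1}{n}H(\widehat{\mc C})=0$ because $H(\widehat{\mc C})<\infty$ by hypothesis.

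I do not anticipate a serious obstacle; the only point that needs a line of care is justifying the iterated identity $\mc T(A_n)\circ\cdots\circ\mc T(A_0)\mu(A)=\mu(A\cap A_0\cap\cdots\cap A_n)$, which follows by an immediate induction on $n$ from Equation~\eqref{sharp measurement}, and then noting $\tau$ of this measure is its value at $A=\Omega$, namely $\mu(A_0\cap\cdots\cap A_n)$. A secondary small point is to remark that $\mc T$ is genuinely an instrument in the sense of Section~\ref{measurements} and that $\1$ (the identity on $X$) is a $\tau$-preserving automorphism, so that $h^{SZ}(\1,\mc T,\mu,\mc C)$ is defined; but this is already implicit in Example~\ref{classical mechanics} and in the definition of $h^{SZ}_{\text{meas}}$. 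So the whole proof is essentially: compute the cylinder measure, observe the diagonal collapse, and apply the $\frac{1}{n}$-averaging together with finiteness of $H(\widehat{\mc C})$.
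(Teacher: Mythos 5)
Your proof is correct and follows essentially the same route as the paper's: compute $\mu^{(\1,\mc T,\mu)}$ on cylinder sets via Equation~\eqref{general entropy definition} and \eqref{sharp measurement}, observe that it vanishes unless $A_0=\cdots=A_n$ (in which case it equals $\mu(A_0)$), and conclude that the $n$-fold sum is the constant $H(\mc C)<\infty$, so the C\'esaro limit is $0$. The paper states the diagonal-collapse identity without spelling out the induction you describe, but the argument is the same.
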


\begin{proof}
Fix a state $\mu\in K$ and a partition $\mc C\in\pa(\Omega)$ with finite entropy. 
Then, for any $n\in\N_0$ and $A_{0},\ldots, A_{n}\in \mc C$, we have that 
\begin{equation*}
\mu^{(\1,\mc T, \mu)}(C\left(\begin{smallmatrix} A_{0} & \cdots & A_{n} \\ 0 & \cdots & n\end{smallmatrix}\right))
= \begin{cases} \mu(A_0) & \text{if }A_{0}=\cdots=A_{n} \\ 0 & \text{else}\end{cases}
\end{equation*}

\noindent Therefore  
\begin{equation*}
\begin{split}
h^{SZ}_{\text{meas}}(\mc T, \mu, \mc C)&=\lim_{n\ra\infty} \frac{1}{n} \sum_{A\in \mc C}\eta (\mu(A)) \\
&= \lim_{n\ra\infty}\frac{1}{n} H(\mc C)=0.
\end{split}
\end{equation*}
\end{proof}

The following result is claimed without proof in \cite[Proposition~4(A)]{SZ94} and states that the KS and SZ entropies agree for classical dynamical systems with sharp instruments. 

\begin{Prop}
	Let $(\Omega,\mc B)$, $(X,K)$, $\tau$ and $\mc T$ be as in Example~\ref{classical mechanics}. 
	Let $\mu\in K$ be a state; i.e. a probability measure on $(\Omega,\mc B)$, and $f:\Omega\ra \Omega$ a measurable map so that $(\Omega,\mc B,\mu,f)$ is a DS. 
	Let $T_f: X\ra X$ be the automorphism known as the \textbf{Koopman operator} defined by 
	\begin{equation*}
	T_f(\nu)(A):= \nu(f\inv(A))
	\quad\text{for all }\nu\in X\text{ and }A\in \mc B. 
	\end{equation*}
	
	\noindent Then for each $\mc C\in \pa(\Omega)$, $h^{KS}(f,\mc C)=h^{SZ}_{\text{dyn}}(T_f,\mc T, \mu, \mc C)$. 
\end{Prop}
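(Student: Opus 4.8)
The plan is to show that the symbolic dynamics measure $\mu^{(T_f,\mc T,\mu)}$ associated to the sharp-measurement instrument and the Koopman operator coincides, on $\mc C$-cylinder sets with an initial interval of times, with the symbolic dynamics measure $\mu^{\X_{\mc C}}$ of the stochastic process $\X$ obtained by iterating $f$. Concretely, I would first unwind the right-hand side of Equation~\eqref{general entropy definition}: for $A_0,\dots,A_{n}\in\mc B$,
\begin{equation*}
\mu^{(T_f,\mc T,\mu)}(C\left(\begin{smallmatrix} A_0 & \cdots & A_{n} \\ 0 & \cdots & n\end{smallmatrix}\right))
= \tau\big(\mc T(A_n)\circ T_f\circ\cdots\circ \mc T(A_1)\circ T_f\circ \mc T(A_0)\mu\big).
\end{equation*}
Using the definition of the sharp measurement instrument, $\mc T(E)\nu(\cdot)=\nu(\cdot\cap E)$, and the Koopman operator, $T_f(\nu)(\cdot)=\nu(f\inv(\cdot))$, I would compute by induction on $n$ that the measure $\mc T(A_{k})\circ T_f\circ\cdots\circ\mc T(A_0)\mu$ is the measure $B\mapsto \mu\big(B\cap\bigcap_{j=0}^{k} f^{-j}(A_j)\big)$ — this is the key computational lemma. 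Applying $\tau$ (total mass) at the end then gives
\begin{equation*}
\mu^{(T_f,\mc T,\mu)}(C\left(\begin{smallmatrix} A_0 & \cdots & A_{n} \\ 0 & \cdots & n\end{smallmatrix}\right))
= \mu\big(\textstyle\bigcap_{k=0}^{n} f^{-k}(A_k)\big).
\end{equation*}

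Next I would identify the right-hand side with the symbolic dynamics of a stochastic process. Define the $(\Omega,\Omega)$ stochastic process $\X=(X_n)_{n=0}^\infty$ by $X_n=f^n$ (with $f^0$ the identity) on the probability space $(\Omega,\mc B,\mu)$. By Remark~\ref{pullbacks}, $\mu(\bigcap_{k=0}^n f^{-k}(A_k))=\mu^{\X}(C\left(\begin{smallmatrix} A_0 & \cdots & A_{n} \\ 0 & \cdots & n\end{smallmatrix}\right))$ for all $A_0,\dots,A_n\in\mc B$. Since both $\mu^{(T_f,\mc T,\mu)}$ and $\mu^{\X}$ are probability measures on $(\Omega^*,\Sigma^*)$ agreeing on the $\pi$-system of cylinder sets with an initial interval of times (which generates $\Sigma^*$), the $\pi$-$\lambda$ theorem gives $\mu^{(T_f,\mc T,\mu)}=\mu^{\X}$. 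Consequently, for any $\mc C\in\pa(\Omega)$, the stochastic process $\X^{(T_f,\mc T,\mu)}_{\mc C}$ and $\X_{\mc C}$ have the same joint pmf, so $h^{SZ}(T_f,\mc T,\mu,\mc C)=H(\X^{(T_f,\mc T,\mu)}_{\mc C})=H(\X_{\mc C})=h^{KS}(s,\widehat{\mc C})$ by Proposition~\ref{relationship 1}.

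To finish, I would relate $h^{KS}(s,\widehat{\mc C})$ for the symbolic dynamics of $\X=(f^n)_n$ back to $h^{KS}(f,\mc C)$. Using the computation above with $A_k\in\mc C$ for all $k$, one has $H_{\mu^{\X}}(\vee_{k=0}^{n} s^{-k}(\widehat{\mc C}))=H_\mu(\vee_{k=0}^{n} f^{-k}(\mc C))$ for every $n$, via Equation~\eqref{C cylinder sets} and Remark~\ref{pullbacks}; dividing by $n+1$ and letting $n\to\infty$ gives $h^{KS}(s,\widehat{\mc C})=h^{KS}(f,\mc C)$ directly from Equation~\eqref{invariant state limits}. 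Combining this with the previous paragraph yields $h^{KS}(f,\mc C)=h^{SZ}(T_f,\mc T,\mu,\mc C)$. Finally, by Lemma~\ref{csm gives 0 meas entropy}, $h^{SZ}_{\text{meas}}(\mc T,\mu,\mc C)=0$ whenever $\mc C$ has finite entropy, so $h^{SZ}_{\text{dyn}}(T_f,\mc T,\mu,\mc C)=h^{SZ}(T_f,\mc T,\mu,\mc C)=h^{KS}(f,\mc C)$; for partitions of infinite entropy both $h^{KS}(f,\mc C)$ and $h^{SZ}_{\text{dyn}}$ should be handled by the same identification of measures, or the statement restricted accordingly. The main obstacle I anticipate is the inductive computation of the iterated action of $\mc T(A_k)\circ T_f$ on measures and keeping the bookkeeping of the pullbacks $f^{-k}$ straight — once that identity is established, everything else is an assembly of results already in the paper (Proposition~\ref{relationship 1}, the $\pi$-$\lambda$ theorem, and Lemma~\ref{csm gives 0 meas entropy}).
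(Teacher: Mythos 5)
Your proposal is correct and follows essentially the same route as the paper: the key step in both is the inductive unwinding of $\mc T(A_n)\circ T_f\circ\cdots\circ\mc T(A_0)\mu$ to show that the cylinder-set probabilities equal $\mu\bigl(\bigcap_{k=0}^{n} f^{-k}(A_k)\bigr)$, after which Remark~\ref{pullbacks} identifies these with the atoms of $\vee_{k=0}^{n} f^{-k}(\mc C)$ and Lemma~\ref{csm gives 0 meas entropy} upgrades $h^{SZ}$ to $h^{SZ}_{\text{dyn}}$. Your detour through the symbolic dynamics of the process $X_n=f^n$ and Proposition~\ref{relationship 1} is just a more explicit packaging of the same identification, and your parenthetical caveat about infinite-entropy partitions is a point the paper's proof also glosses over.
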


\begin{proof} 
	Fix a partition $\mc C\in\pa(\Omega)$. 
	For all $n\in\N_0$ and $A_{0},\ldots,A_{n}\in \mc C$ we see that 
	\begin{align*}
	\mu^{(T_f,\mc T, \mu)}(C\left(\begin{smallmatrix} A_{0} & \cdots & A_{n} \\ 0 & \cdots & n\end{smallmatrix}\right))
	&= \tau(\mc T(A_{n})\circ T_f\circ \cdots\circ T_f\circ \mc T(A_{0}) \mu) 
	\quad\text{by \eqref{general entropy definition}} \\
	&=( \mc T(A_n)\circ T_f\circ \cdots\circ T_f\circ \mc T(A_{0})) \mu(X)
	\quad\text{by definition of $\tau$} \\
	&= ( \mc T(A_n)\circ T_f\circ \cdots\circ \mc T(A_{1})\circ T_f) \mu(A_0)
	\quad\text{by \eqref{sharp measurement}} \\
	&= ( \mc T(A_n)\circ T_f\circ \cdots\circ T_f\circ \mc T(A_{1})) \mu(f\inv (A_0)) 
	\quad\text{by \eqref{koopman operator}} \\
	&= \cdots \\
	&= \mu(A_{i_n}\cap f\inv(A_{i_1}) \cap \cdots \cap f^{-n}(A_{i_0})).
	\end{align*}
	
	\noindent Using Remark~\ref{pullbacks} and Lemma~\ref{csm gives 0 meas entropy}, we get $$h^{KS}(f,\mc C)=h^{SZ}(T_f,\mc T, \mu, \mc C)=h^{SZ}_{\text{dyn}}(T_f,\mc T, \mu, \mc C).$$ 
\end{proof}


Next we examine the properties of SZ entropy for coherent states instruments. 
Let $H$ be a Hilbert space and $(X,K)$ the state space defined in Example~\ref{HSM1}. 
Given a unitary operator, $U$, on $H$, the \textbf{unitary transformation, \bm{$\Theta:X\ra X$}, of \bm{$U$}} is given by 
\begin{equation*}
\Theta(\cdot)=U\cdot U^*.
\end{equation*}

\noindent The following lemma gives a simplification of Equation~\eqref{general entropy definition} for coherent states instruments. 
The moreover statement of the following lemma is mentioned in \cite[page 3]{SS17} without proof.

\begin{lemma}\label{cs probabilities} 
Let $H$, $(X,K)$, $\tau$ and $(\Omega,\mc P(\Omega))$ be as in Example~\ref{HSM1}. 
Let $(P_i)_{i\in\Omega}$ be a family of orthogonal, rank-1 projections on $H$ such that $\sum_{i\in\Omega} P_i=\1$, and let $a_i\in H$ such that $P_i=|a_i\rangle\langle a_i|$, for each $i\in\Omega$. 
Let $\mc T$ be the coherent states instrument for $(P_i)_{i\in\Omega}$, $U$ a unitary operator on $H$, $\Theta$ the unitary transformation of $U$ and $\rho\in K$ a state.
Then, for all $n\in\N_0$ and $A_{0},\ldots,A_{n}\in\mc P(\Omega)$, 
\begin{equation}\label{cs probabilities 2}
\mu^{(\Theta,\mc T, \rho)}(C\left(\begin{smallmatrix} A_{0} & \cdots & A_{n} \\ 0 & \cdots & n\end{smallmatrix}\right))
=\sum_{\substack{a_k\in A_{k}\\ 0\le k\le n}} 
\langle a_0|\rho|a_0\rangle\prod_{k=1}^n |\langle a_k|U|a_{k-1}\rangle|^2.
\end{equation}

\noindent Moreover, $\X^{(\Theta,\mc T,\rho)}$ is a Markov process governed by the transition matrix $P$ on $\Omega$ with $(i,j)$-entry given by $|\langle a_i|U|a_j\rangle|^2$, for all $i,j\in\Omega$. 
\end{lemma}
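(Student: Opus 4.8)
The plan is to unwind the definition in Equation~\eqref{general entropy definition} directly, using that each $P_i$ is the rank-one projection $|a_i\rangle\langle a_i|$. First I would observe that for a single set $A\in\mc P(\Omega)$, the operation $\mc T(A)$ acts on a state $\sigma\in X$ by $\mc T(A)\sigma=\sum_{a\in A}|a_a\rangle\langle a_a|\sigma|a_a\rangle\langle a_a| = \sum_{a\in A}\langle a_a|\sigma|a_a\rangle\,P_a$, so the output is a nonnegative combination of the rank-one projections $P_a$ with $a\in A$. Since $\Theta(P_a)=U|a_a\rangle\langle a_a|U^*$ and $\langle a_b|U|a_a\rangle\langle a_a|U^*|a_b\rangle=|\langle a_b|U|a_a\rangle|^2$, applying $\Theta$ and then $\mc T(A')$ to $P_a$ yields $\sum_{b\in A'}|\langle a_b|U|a_a\rangle|^2\,P_b$. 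Iterating this bookkeeping through the composition $\mc T(A_n)\circ\Theta\circ\cdots\circ\Theta\circ\mc T(A_0)\rho$ gives a sum over chains $a_0\in A_0,\dots,a_n\in A_n$ of the scalar $\langle a_0|\rho|a_0\rangle\prod_{k=1}^n|\langle a_k|U|a_{k-1}\rangle|^2$ times $P_{a_n}$. Taking $\tau=\tr$ of the result and using $\tr(P_{a_n})=1$ produces exactly Equation~\eqref{cs probabilities 2}; this is essentially a finite induction on $n$, with the base case $n=0$ being $\tau(\mc T(A_0)\rho)=\sum_{a_0\in A_0}\langle a_0|\rho|a_0\rangle$.

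For the "moreover" statement, I would compare $\mu^{(\Theta,\mc T,\rho)}$ against the measure $\mu^{\X}$ of Equation~\eqref{symbolic dynamics measure} associated to the process $\X^{(\Theta,\mc T,\rho)}$. From Equation~\eqref{cs probabilities 2} with all $A_k$ singletons, the joint pmf of $(X^{(\Theta,\mc T,\rho)}_0,\dots,X^{(\Theta,\mc T,\rho)}_n)$ is $p(x_0,\dots,x_n)=\langle a_{x_0}|\rho|a_{x_0}\rangle\prod_{k=1}^n|\langle a_{x_k}|U|a_{x_{k-1}}\rangle|^2$. To see this is a Markov process, I would compute the conditional pmf $p(x_{n+1}\mid x_0,\dots,x_n)$ as the ratio $p(x_0,\dots,x_{n+1})/p(x_0,\dots,x_n)$, which telescopes to $|\langle a_{x_{n+1}}|U|a_{x_n}\rangle|^2$ whenever the denominator is positive, and this depends only on $x_n$ and $x_{n+1}$. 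Hence Equation~\eqref{Markov process} holds, the conditional probabilities do not vary with time so Equation~\eqref{Markov process 2} holds as well, and the transition matrix has $(i,j)$-entry $|\langle a_i|U|a_j\rangle|^2$. One should also check this is a genuine column-stochastic matrix: nonnegativity is clear, and $\sum_{i\in\Omega}|\langle a_i|U|a_j\rangle|^2 = \langle a_j|U^*(\sum_i P_i)U|a_j\rangle = \langle a_j|U^*U|a_j\rangle = 1$, using $\sum_i P_i=\1$ and unitarity of $U$.

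The only real subtlety is the bookkeeping when $\Omega$ is countably infinite: the sums defining $\mc T(A)$ converge in the strong operator topology, and one must justify interchanging $\tr$ with the infinite sum and nesting the sums over the chains. This is handled by positivity — every term $\mc T(A_k)\circ\Theta$ maps the positive cone $K$ into itself, so all partial sums are increasing sequences of positive trace-class operators bounded in trace norm by $\tr(\rho)=1$, and the trace of a strong-operator limit of such an increasing net equals the limit of the traces (monotone convergence for traces). Thus I would treat $\Omega$ finite first, where everything is a finite sum, and then remark that the countably-infinite case follows by the same positivity/monotone-convergence argument applied at each stage of the composition. This interchange-of-limits point is the main thing requiring care; the algebraic identity itself is a routine induction.
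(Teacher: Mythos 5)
Your proposal is correct and follows essentially the same route as the paper: a direct unwinding of Equation~\eqref{general entropy definition} using $P_i=|a_i\rangle\langle a_i|$ to collapse the composition into the product of transition probabilities. The extra care you take with the Markov property (computing the conditional pmf as a telescoping ratio), the column-stochasticity check, and the monotone-convergence justification for countably infinite $\Omega$ are all sound and in fact more detailed than the paper's proof, which simply asserts the ``moreover'' part as immediate.
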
 

\begin{proof}
By direct calculation, Equation~\eqref{general entropy definition} simplifies to 
\begin{align*}
\mu^{(\Theta,\mc T, \rho)}(C\left(\begin{smallmatrix} A_{0} & \cdots & A_{n} \\ 0 & \cdots & n\end{smallmatrix}\right))
&= \tau(\mc T(A_{n})\circ \Theta\circ\cdots\circ  \Theta\circ \mc T(A_{0}) \rho) \\
&= \sum_{\substack{a_k\in A_{k}\\ 0\le k\le n}}
\tr(\mc T(\{a_{n}\})\circ \Theta\circ\cdots\circ  \Theta\circ \mc T(\{a_{0}\}) \rho) \\
&= \sum_{\substack{a_k\in A_{k}\\ 0\le k\le n}}
\tr(P_{a_n}U\cdots U P_{a_0} \rho P_{a_0}U^*\cdots U^* P_{a_n}) \\
&= \sum_{\substack{a_k\in A_{k}\\ 0\le k\le n}}
\tr(|a_n\rangle\langle a_n|U\cdots U |a_0\rangle\langle a_0| \rho |a_0\rangle\langle a_0|U^*\cdots U^* |a_n\rangle\langle a_n|) \\
&= \sum_{\substack{a_k\in A_{k}\\ 0\le k\le n}} 
\langle a_0|\rho|a_0\rangle\prod_{k=1}^n |\langle a_k|U|a_{k-1}\rangle|^2,
\end{align*}

\noindent where the second to last equality follows from writing $P_{a_k}=|a_k\rangle\langle a_k|$, for all $0\le k\le n$ and the last equality follows since $\overline{\langle a_{k-1}|U^*|a_k\rangle}=\langle a_k|U|a_{k-1}\rangle$, for all $1\le k\le n$. 
It is immediately clear that $\X^{(\Theta,\mc T,\rho)}$ is a stationary Markov process governed by the transition matrix $P$. 
\end{proof} 
 
It is worth noting that Equation~\eqref{cs probabilities 2} is a simplification of the probabilities in \cite[Equations (27)-(29)]{SZ94} for L\"uders-von Neumann coherent states instruments. 

\begin{corollary}\label{markov entropy rate 3}
Let $H$, $(X,K)$, $\tau$ and $(\Omega,\mc P(\Omega))$ be as in Example~\ref{HSM1}. 
Let $(P_i)_{i\in\Omega}$ be a family of orthogonal, rank-1 projections on $H$ such that $\sum_{i\in\Omega} P_i=\1$ and $P_i=|a_i\rangle\langle a_i|$ for some $a_i\in H$ and each $i\in\Omega$, $\mc T$ the coherent states instrument for $(P_i)_{i\in\Omega}$, $U$ a unitary operator on $H$, $\Theta$ the unitary transformation of $U$, $\rho\in K$ a state and $P$ the transition matrix defined in Lemma~\ref{cs probabilities}. 
Then 
\begin{equation*}
h^{SZ}(\Theta, \mc T, \rho, \mc A)=\lim_{n\ra\infty}
\sum_{y\in \Omega} (P^n\mu)_y\sum_{x\in \Omega}\eta(|\langle a_x|U|a_y\rangle|^2),
\end{equation*}

\noindent where $\mu=p_{X^{(\Theta,\mc T,\rho)}_0}$ and $\mc A$ is the atomic partition of $\Omega$. 
Moreover, whenever $\mu=(\mu_y)_{y\in\Omega}$ is $P$-invariant, we have $h^{SZ}(\Theta, \mc T, \rho, \mc A)= \sum_{y\in \Omega} \mu_y\sum_{x\in \Omega}\eta(|\langle a_x|U|a_y\rangle|^2)$. 
\end{corollary}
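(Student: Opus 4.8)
The plan is to reduce Corollary~\ref{markov entropy rate 3} to an application of Theorem~\ref{markov entropy rate} via the Markov property established in Lemma~\ref{cs probabilities}. First I would invoke Remark~\ref{entropy equivalence 3} (or directly the definition in Equation~\eqref{SZ entropy}) to write $h^{SZ}(\Theta,\mc T,\rho,\mc A) = H(\X^{(\Theta,\mc T,\rho)}_{\mc A})$, and then observe that since $\mc A$ is the atomic partition of the discrete phase space $\Omega$, the process $\X^{(\Theta,\mc T,\rho)}_{\mc A}$ is identified with $\X^{(\Theta,\mc T,\rho)}$ itself (each value is a singleton $\{i\}$, $i\in\Omega$). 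Thus $h^{SZ}(\Theta,\mc T,\rho,\mc A) = H(\X^{(\Theta,\mc T,\rho)})$, the entropy rate of the stochastic process.

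Next I would apply Lemma~\ref{cs probabilities}, which tells us that $\X^{(\Theta,\mc T,\rho)}$ is a Markov process governed by the transition matrix $P$ with $(x,y)$-entry $|\langle a_x|U|a_y\rangle|^2$, and whose initial distribution is $p_{X^{(\Theta,\mc T,\rho)}_0}$. One should check that $P$ is indeed a column-stochastic transition matrix in the sense of Subsection~\ref{information theoretic}: the entries are nonnegative and, for each fixed $y$, $\sum_{x\in\Omega}|\langle a_x|U|a_y\rangle|^2 = \|U a_y\|^2 = \|a_y\|^2 = 1$ since $\{a_x\}_{x\in\Omega}$ is an orthonormal basis (the $P_x$ being orthogonal rank-one projections summing to $\1$) and $U$ is unitary; this also shows $\mu = p_{X^{(\Theta,\mc T,\rho)}_0}$ is a genuine probability vector. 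One small point to note is that Lemma~\ref{cs probabilities} shows $\X^{(\Theta,\mc T,\rho)}$ satisfies the time-homogeneous Markov property of Equation~\eqref{Markov process 2}, so Theorem~\ref{markov entropy rate} applies with this $P$ and $\mu$.

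With these identifications in place, Theorem~\ref{markov entropy rate} yields directly
\begin{equation*}
H(\X^{(\Theta,\mc T,\rho)}) = \lim_{n\ra\infty} \sum_{y\in\Omega} (P^n\mu)_y \sum_{x\in\Omega} \eta(p_{x,y}) = \lim_{n\ra\infty} \sum_{y\in\Omega} (P^n\mu)_y \sum_{x\in\Omega} \eta(|\langle a_x|U|a_y\rangle|^2),
\end{equation*}
whenever the limit exists, which is the first assertion. For the moreover statement, if $\mu$ is $P$-invariant then $P^n\mu = \mu$ for all $n$, so the limit trivially exists and collapses to $\sum_{y\in\Omega}\mu_y\sum_{x\in\Omega}\eta(|\langle a_x|U|a_y\rangle|^2)$; this is exactly the "moreover" clause of Theorem~\ref{markov entropy rate}.

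Honestly, there is no real obstacle here: the corollary is essentially a transcription of Theorem~\ref{markov entropy rate} once Lemma~\ref{cs probabilities} has done the work of exhibiting the Markov structure and the explicit transition probabilities. The only thing requiring a moment's care is confirming that $P$ as defined is a legitimate transition matrix and that $p_{X^{(\Theta,\mc T,\rho)}_0}$ is a probability vector — both of which follow from unitarity of $U$ and orthonormality of $\{a_i\}$ — and then matching notation ($p_{x,y} = |\langle a_x|U|a_y\rangle|^2$, $\mu = p_{X^{(\Theta,\mc T,\rho)}_0}$) so that Theorem~\ref{markov entropy rate} can be cited verbatim.
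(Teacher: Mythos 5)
Your proposal is correct and follows exactly the same route as the paper, which simply cites Lemma~\ref{cs probabilities} and Theorem~\ref{markov entropy rate}; your additional checks (that $P$ is column-stochastic via unitarity of $U$ and orthonormality of the $a_i$, and that $\X^{(\Theta,\mc T,\rho)}_{\mc A}$ is identified with $\X^{(\Theta,\mc T,\rho)}$) are exactly the implicit details the paper leaves to the reader.
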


\begin{proof}
This follows immediately from Lemma~\ref{cs probabilities} and Theorem~\ref{markov entropy rate}. 
\end{proof}

In \cite[Section IV]{SZ94} the authors require that the state $\rho$ is invariant in the sense that 
\begin{equation}\label{SZ invariance}
\Theta(\mc T(\Omega)\rho)=\rho
\end{equation}

\noindent when defining SZ entropy for coherent states instruments. 
This seems to be due to the fact that, in \cite[Proposition 2(B)]{SZ94}, the authors show that, under Assumption~\eqref{SZ invariance}, for a general coherent states instrument, the stochastic process $X^{(\Theta,\mc T,\rho)}$ is stationary and hence, by the ``moreover" part of Corollary~\ref{markov entropy rate 3}, $h^{SZ}(\Theta, \mc T, \rho, \mc A)= \sum_{y\in \Omega} \mu_y\sum_{x\in \Omega}\eta(|\langle a_x|U|a_y\rangle|^2)$. 
We find Assumption~\eqref{SZ invariance} restrictive and do not adopt it here. 
It is also worth mentioning that another invariance condition often imposed on quantum dynamical systems is that $\Theta(\rho)=\rho$. 
For instance, for AFL entropy in \cite[Page 76]{AF94} which is formulated for general $C^*$-algebras, the authors require that a state $\omega$ satisfies $\omega\circ\Theta=\omega$ which is equivalent to $\Theta(\rho)=\rho$ in the Hilbert space quantum mechanics picture whenever $\omega$ is defined by $\omega(\cdot)=\tr(\rho \cdot)$. 
Also, given a Hilbert space $H$, a unitary operator $U$ on $H$ and a norm-1 eigenvector, $x\in H$, of $U$,  the pure (or vector) state $\rho=|x\rangle\langle x|$ satisfies $\Theta(\rho)=\rho$, where $\Theta$ is the unitary transformation of $U$. 
There has been a lot of interest in finding these pure, invariant states in the literature for unitary quantum random walks (see e.g. \cite{KLS13, EK14a}). 
Therefore $\Theta(\rho)=\rho$ seems another natural definition of invariance. 
However, we will show in Proposition~\ref{CS entropy for Hadamard} that $\Theta(\rho)=\rho$ does not imply that $X^{(\Theta,\mc T,\rho)}$ is an invariant stochastic process. 

The following result states that SZ entropy is not linear in the time interval between successive measurements which answers an open problem posed in \cite[page 5692 Question~(2)]{SZ94}. 
This result is in contrast to KS entropy which is linear in time (see Proposition~\ref{KS entropy is linear}). 
Moreover, since entropy rate is nonlinear in time (see Proposition~\ref{nonzero entropy finite system}), the result gives further evidence that measurements of a deterministic quantum system produce properties that are probabilistic in nature. 

\begin{theorem}\label{SZ entropy is nonlinear}
Let $(X,K)$ be as in Example~\ref{HSM1}. 
Let $(\Omega,\mc P(\Omega))$ be a discrete phase space with $|\Omega|=N$ for some $N\in\N$, $\mc T$ a L\"uders-von Neumann instrument, $\Theta$ a unitary transformation and $\rho\in K$ a state. 
Then $h^{SZ}_{\text{dyn}}(\Theta^n, \mc T, \rho)\le N$ for all $n\in\N$. 
Therefore, if $h^{SZ}_{\text{dyn}}(\Theta, \mc T, \rho)\ne 0$, then $h^{SZ}_{\text{dyn}}(\Theta^n, \mc T, \rho)\ne nh^{SZ}_{\text{dyn}}(\Theta, \mc T, \rho)$ for all sufficiently large $n\in\N$. 
\end{theorem}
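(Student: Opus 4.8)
The plan is to bound $h^{SZ}_{\text{dyn}}(\Theta^n,\mc T,\rho)$ by a constant independent of $n$, and then observe that a linear-in-$n$ quantity cannot stay bounded unless it is identically zero. By Lemma~\ref{lvn gives 0 meas entropy}, since $\mc T$ is a L\"uders-von Neumann instrument on a discrete phase space with $|\Omega|=N$, we have $h^{SZ}_{\text{meas}}(\mc T,\rho,\mc C)=0$ for every partition $\mc C\in\pa(\Omega)$ (all such partitions automatically have finite entropy since $|\Omega|<\infty$). Hence $h^{SZ}_{\text{dyn}}(\Theta^n,\mc T,\rho,\mc C)=h^{SZ}(\Theta^n,\mc T,\rho,\mc C)$ for every $\mc C$, and so $h^{SZ}_{\text{dyn}}(\Theta^n,\mc T,\rho)=\sup_{\mc C\in\pa(\Omega)}h^{SZ}(\Theta^n,\mc T,\rho,\mc C)$.

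Next I would bound each $h^{SZ}(\Theta^n,\mc T,\rho,\mc C)$ directly from its definition in Equation~\eqref{SZ entropy}. For any partition $\mc C\in\pa(\Omega)$, the number of cells is $|\mc C|\le N$, so for each fixed $m\in\N$ the join $\vee_{k=0}^{m-1}\mc C_{X_k}$ (equivalently, the partition of $\Omega^*$ into $\mc C$-cylinder sets of length $m$) has at most $N^m$ cells; by Equation~\eqref{upper bound} its entropy $H(X_0^{\mc C},\ldots,X_{m-1}^{\mc C})\le \ln(N^m)=m\ln N$. Dividing by $m$ and letting $m\to\infty$ gives $h^{SZ}(\Theta^n,\mc T,\rho,\mc C)\le \ln N$, hence $h^{SZ}_{\text{dyn}}(\Theta^n,\mc T,\rho)\le \ln N\le N$. (The slightly lossy bound $\ln N\le N$ is just to match the stated conclusion; one could also keep $\ln N$.) This holds for every $n\in\N$.

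For the final assertion, suppose $h^{SZ}_{\text{dyn}}(\Theta,\mc T,\rho)=c\ne 0$; note $c\ge 0$ always, so $c>0$. If we had $h^{SZ}_{\text{dyn}}(\Theta^n,\mc T,\rho)=nc$ for all large $n$, then since $nc\to\infty$ we could choose $n$ with $nc>N$, contradicting $h^{SZ}_{\text{dyn}}(\Theta^n,\mc T,\rho)\le N$. Therefore $h^{SZ}_{\text{dyn}}(\Theta^n,\mc T,\rho)\ne nh^{SZ}_{\text{dyn}}(\Theta,\mc T,\rho)$ for all sufficiently large $n$.

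The only genuinely non-routine point is confirming that the measurement entropy truly vanishes here — i.e. that Lemma~\ref{lvn gives 0 meas entropy} applies to all partitions in the supremum defining $h^{SZ}_{\text{dyn}}(\Theta^n,\mc T,\rho)$ — but this is immediate because $|\Omega|=N<\infty$ forces $H(\widehat{\mc C})<\infty$ for every $\mc C\in\pa(\Omega)$. Everything else is the cardinality bound $H(\mc C)\le\ln|\mc C|$ together with the subadditivity of cell counts under joins, so I expect no real obstacle; the main subtlety is just being careful that the supremum over partitions in the definition of $h^{SZ}_{\text{dyn}}$ commutes appropriately with the bound, which it does since the bound $\ln N$ is uniform in $\mc C$.
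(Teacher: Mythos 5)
Your proof is correct, and its core — the cardinality bound $H(X_0^{\mc C},\ldots,X_{m-1}^{\mc C})\le m\ln N$ on length-$m$ cylinder partitions, giving $h^{SZ}(\Theta^n,\mc T,\rho,\mc C)\le\ln N$ uniformly in $n$, followed by the observation that a sequence bounded by $N$ cannot equal $nc$ with $c\ne 0$ for all large $n$ — is exactly the paper's argument. The one structural difference is how the supremum over partitions in the definition of $h^{SZ}_{\text{dyn}}(\Theta^n,\mc T,\rho)$ is handled: the paper bounds only the atomic partition $\mc A$ and then invokes the fact that $\widehat{\mc A}$ is a generating partition for the symbolic dynamics (via the Kolmogorov--Sinai Theorem, as in Corollary~\ref{KS theorem 2}) to identify $h^{SZ}_{\text{dyn}}(\Theta^n,\mc T,\rho)$ with $h^{SZ}_{\text{dyn}}(\Theta^n,\mc T,\rho,\mc A)$, whereas you bound every partition $\mc C\in\pa(\Omega)$ at once by the same $\ln N$, using $|\mc C|\le N$. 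Your route is slightly more elementary, since it bypasses the generating-partition machinery entirely and only needs the uniform bound together with Lemma~\ref{lvn gives 0 meas entropy} (which, as you correctly note, applies to every partition because $|\Omega|<\infty$ forces $H(\widehat{\mc C})<\infty$); the paper's route localizes the whole computation to the single partition $\mc A$. Both arguments are complete.
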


\begin{proof}
Let $(\Omega^*,\mc P(\Omega)^*, \mu^{(\Theta^n,\mc T, \rho)}, s)$ be the symbolic dynamics of $(\Theta^n ,\mc T, \rho)$ for each $n\in \N$ and let $\mc A$ be the atomic partition of $\Omega$. 
Using Equation~\eqref{upper bound}, we have that $H_{\mu^{(\Theta^n,\mc T, \rho)}}(\vee_{k=0}^{m-1} s^{-k}(\widehat{\mc A}))\le \ln(|\vee_{k=0}^{m-1} s^{-k}(\widehat{\mc A})|)\le m\ln N$, for all $n,m\in\N$. 
Therefore $h^{SZ}_{\text{dyn}}(\Theta^n, \mc T, \rho, \mc A)\le \ln N$, for each $n\in\N$.  
If $h^{SZ}_{\text{dyn}}(\Theta, \mc T, \rho, \mc A)=k\ne 0$, then, since $\widehat{\mc A}$ is a generating partition for $(\Omega^*,\mc P(\Omega)^*, \mu^{(\Theta^n,\mc T, \rho)}, s)$, we have, for all $n> \frac{\ln N}{k}$, that 
$$
h^{SZ}_{\text{dyn}}(\Theta^n,\mc T,\rho)=
h^{SZ}_{\text{dyn}}(\Theta^n, \mc T, \rho, \mc A)\le \ln N< nh^{SZ}_{\text{dyn}}(\Theta, \mc T, \rho, \mc A)=
nh^{SZ}_{\text{dyn}}(\Theta, \mc T, \rho).
$$ 
\end{proof}

In \cite{SS17} the authors establish a class of instruments which have positive dynamical SZ entropy and we give further such examples in Section~\ref{UQRW dynamical entropy}. 
Therefore Proposition~\ref{SZ entropy is nonlinear} does establish the nonlinearity of dynamical SZ entropy in time. 
We illustrate this fact in Section~\ref{UQRW dynamical entropy} by calculating the SZ entropy of the Hadamard walk with a L\"uders-von Neumann instrument given by a family of rank-2 projections (Theorem~\ref{coarser instrument}). 
The following section is dedicated to defining unitary quantum random walks and, in particular, the Hadamard walk.


\section{Unitary Quantum Random Walks}\label{Hadamard walk section}

The unitary quantum random walk (UQRW) is one of the many adaptations of the classical random walk to the quantum domain and, in particular, is the adaptation of classical random walks for closed quantum systems.  
We will define the UQRW on a finite or countably infinite vertex set $V$. 
To consider a collection of vertices in the quantum domain, Hilbert space quantum mechanics is used (see Example~\ref{HSM1}) and we consider the \textbf{position space}, $H_P:=\ell_2(V)$, with an orthonormal basis, $\{|v\rangle\}_{v\in V}$, indexed by $V$. 
To add internal degrees of freedom to the vertices, the \textbf{coin space} $H_C$ is used, which is an at most countably dimensional Hilbert space. 
In general, a UQRW is given by the unitary transformation over the tensored Hilbert space $H=H_C\ten H_P$. 
The most common UQRWs are the so-call coined UQRWs. 
To define these we must first fix an orthonormal basis, $\{|c,v\rangle\}_{(c,v)\in C\times V}$ on $H$, sometimes referred to as the computational basis, for some index set $C$, where $|c, v\rangle = |c\rangle\ten |v\rangle$. 
We say that a UQRW is \textbf{coined} if it is the unitary transformation of an operator $U$ of the form 
\begin{equation}\label{UQRW}
U=S(\sum_{v\in V} U_v\ten |v\rangle\langle v|),
\end{equation}

\noindent where $U_v$ is a unitary operator on $H_C$ for each $v\in V$, and $S$ is a permutation operator which is referred to as the \textbf{shift operator}. 
By a ``permutation operator", we mean that $S$ has the form  
\begin{equation}\label{shift}
S=\sum_{(c,v)\in C\times V} |\sigma(c,v)\rangle \langle c, v|,
\end{equation}

\noindent for some permutation $\sigma$ of $C\times V$. 
For each $v\in V$, the unitary operator $U_v$, referred to as the coin operator at $v$, changes the coin state at the vertex $v$ in a deterministic way while the shift operator $S$ moves the random walker from one site to another. 
In the sequel we will only consider coined UQRWs and we will drop the adjective. 
We say that a UQRW is \textbf{space homogeneous} if there exists a unitary operator $W$ on $H_C$ such that $W=U_v$ for every $v\in V$. 
For a space homogeneous UQRW the form of $U$ in Equation~\eqref{UQRW} simplifies to 
\begin{equation*}
U=S(W\ten \1_{H_P}).
\end{equation*}
 
We say that a shift operator $S$ is \textbf{coin preserving} if, for each $c\in C$, there exists a permutation $\sigma_c$ of $V$ such that 
\begin{equation*}
S=\sum_{(c,v) \in C\times V} |c,\sigma_c(v)\rangle\langle c, v|.
\end{equation*}

\noindent Furthermore, we say that a UQRW is coin preserving if its shift operator is coin preserving. 
Notice that a coin preserving shift operator moves the random walker from site to site without affecting the internal state of the walker. 

Let $U$ be a unitary operator of the form Equation~\eqref{UQRW} with shift operator, $S$, given by Equation~\eqref{shift}. 
To draw a connection to classical random walks it is helpful to visualize a random walker on the directed graph $G=(V,E)$ where $E$ is the edge set determined by the shift operator $S$. 
That is to say, for all $u,v\in V$, $(u,v)\in E$ if and only if there exists $c_1,c_2\in C$ such that $\sigma(c_1,u)=(c_2,v)$, where $\sigma$ is the map appearing in Equation~\eqref{shift}, or, equivalently, $P_v SP_u\ne 0$, where $P_v=\1_{H_C}\ten |v\rangle\langle v|$; i.e. $P_v$ is the projection from $H$ to $H_C\ten \text{span}(v)$. 


We are specifically interested in the Hadamard walk, which has been studied extensively in the literature (e.g. \cite{ABNVW01, Kempe03, Portugal13}) and is defined below. 
Consider the vertex set $V=\{0,\ldots, N-1\}$, for some $N\in\N$ with $N\ge 2$, and set $H_P=\C^N$. 
Let $H_C=\C^2$, with orthonormal basis $\{|R\rangle, |L\rangle\}$. 
Define the (coin preserving) integer shift operator by 
\begin{equation*}
S=\sum_{n=0}^{N-1} |R,n+1\rangle\langle R,n| + |L,n-1\rangle\langle L,n|,
\end{equation*}
 
\noindent where addition on the integers is done modulo $N$. 
Throughout the rest of the paper addition (on $V$) will be done modulo $N$. 
Notice that $|R\rangle$ now corresponds to a shift right on the integers and $|L\rangle$ corresponds to a shift left on the integers.  
In this case the directed graph $G=(V,E)$ has edge set which is given by  $E=\{(n,n+1),(n,n-1)\}_{n=0}^{N-1}$. 
The unitary operator 
\begin{equation*}
h:=\frac{1}{\sqrt 2} \begin{bmatrix} 1 & 1 \\ 1 & -1\end{bmatrix},
\end{equation*}

\noindent on $H_C$ is referred to as the Hadamard matrix (or Hadamard coin/gate). 
The \textbf{Hadamard walk on \bm{$V$}} is the map $\Theta:X\ra X$, where $(X,K)$ is the state space defined in Example~\ref{HSM1}, given by 
\begin{equation}\label{Hadamard walk}
\Theta(\rho)=U\rho U^*, \text{ for each }\rho\in X, 
\text{ where }U=S(h\ten \1_{H_P}).
\end{equation}

\noindent It is clear that the Hadamard walk is coin preserving and space homogeneous. 

\begin{remark}
The Hadamard walk can easily be extended to $V=\Z$ as opposed to the finite cycle $V=\{0,\ldots, N-1\}$ and it is often viewed in this manner. (e.g. \cite[Section 5.1]{Portugal13}) 
\end{remark}


\section{SZ Entropy of the Hadamard walk}\label{UQRW dynamical entropy}

Let $H=H_C\ten H_P$, $C=\{R,L\}$ and take the phase space to be $(C\times V,\mc P(C\times V))$. 
Take $(X,K)$ be the state space $(S_1^{sa}(H),S_1^+(H))$ as in Example~\ref{HSM1}. 
Define the family of orthogonal, rank-1 projections $(P_e)_{e\in C\times V}$ by 
\begin{equation}\label{cs projections}
P_e=|c,v\rangle\langle c,v|, \text{ whenever }e=(c,v)\in C\times V.
\end{equation} 

\noindent Let $\mc T$ be the coherent states instrument governed by the family $(P_e)_{e\in C\times V}$. 
The next proposition states that, for a unitary transformation $\Theta$ and a state $\rho\in K$, $\Theta(\rho)=\rho$ does not imply that the associated Markov chain is stationary. 
The result shows that this natural definition of invariance for $\rho$ is not sufficient for stationarity, whereas Assumption~\eqref{SZ invariance}, imposed by the authors of \cite{SZ94}, does guarantee stationarity. 

\begin{Prop}\label{CS entropy for Hadamard}
Let $\Theta$ be the Hadamard walk on $V$ with $|V|=N\ge 2$. 
Let $\mc T$ be the coherent states instrument given by the family of orthogonal projections $(P_e)_{e\in C\times V}$ and $\mc A$ be the atomic partition on $C\times V$. 
Let $|x\rangle=\frac{1}{\sqrt{N(4+2\sqrt{2})}}((1+\sqrt 2)|R\rangle +|L\rangle)\ten \sum_{v\in V} |v\rangle$ (which is a unit norm eigenvector for the unitary matrix, $U$, of the Hadamard walk) and $\rho=|x\rangle\langle x|$. 
Then the pmf, $p_{X^{(\Theta,\mc T,\rho)}_0}$, of $X^{(\Theta,\mc T,\rho)}_0$ is not $P$-invariant, where $P$ is the transition matrix defined in Lemma~\ref{cs probabilities}. 
Furthermore, the dynamical SZ entropy is equal to $h^{SZ}_{\text{dyn}}(\Theta, \mc T,\rho)=\ln 2$.
\end{Prop}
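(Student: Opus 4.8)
The plan is to make everything explicit through the transition matrix $P$ of the Markov process $\X^{(\Theta,\mc T,\rho)}$ furnished by Lemma~\ref{cs probabilities}, and then to read off the entropy value from Corollary~\ref{markov entropy rate 3}.

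First I would compute the action of $U=S(h\ten\1_{H_P})$ on the computational basis. From $h|R\rangle=\tfrac1{\sqrt2}(|R\rangle+|L\rangle)$, $h|L\rangle=\tfrac1{\sqrt2}(|R\rangle-|L\rangle)$ and $S|R,v\rangle=|R,v+1\rangle$, $S|L,v\rangle=|L,v-1\rangle$, one gets $U|R,v\rangle=\tfrac1{\sqrt2}(|R,v+1\rangle+|L,v-1\rangle)$ and $U|L,v\rangle=\tfrac1{\sqrt2}(|R,v+1\rangle-|L,v-1\rangle)$. Writing $a_{(c,v)}=|c,v\rangle$, Lemma~\ref{cs probabilities} then shows the $((c,v),(c',v'))$-entry of $P$ is $|\langle c,v|U|c',v'\rangle|^2=\tfrac12$ precisely when $(c,v)\in\{(R,v'+1),(L,v'-1)\}$ and is $0$ otherwise; thus from any state $(c',v')$ there are exactly two out-transitions, each of probability $\tfrac12$, and they do not depend on the coin label $c'$. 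As a by-product, since $h$ fixes the coin vector $(1+\sqrt2)|R\rangle+|L\rangle$ and $S$ fixes every vector of the form $w\ten\sum_{v\in V}|v\rangle$, one has $U|x\rangle=|x\rangle$, which verifies the parenthetical assertion that $|x\rangle$ is a unit eigenvector of $U$ (equivalently $\Theta(\rho)=\rho$).

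Next I would compute the initial pmf. By Lemma~\ref{cs probabilities}, $\mu(c,v):=p_{X^{(\Theta,\mc T,\rho)}_0}(c,v)=\langle c,v|\rho|c,v\rangle=|\langle c,v|x\rangle|^2$, which equals $\tfrac{(1+\sqrt2)^2}{N(4+2\sqrt2)}=\tfrac{3+2\sqrt2}{N(4+2\sqrt2)}$ for $c=R$ and $\tfrac1{N(4+2\sqrt2)}$ for $c=L$, independently of $v$. Applying $P$ and using the first step, $(P\mu)(R,v)=\tfrac12(\mu(R,v-1)+\mu(L,v-1))=\tfrac12\cdot\tfrac{4+2\sqrt2}{N(4+2\sqrt2)}=\tfrac1{2N}$ for every $v$, and likewise $(P\mu)(L,v)=\tfrac12(\mu(R,v+1)+\mu(L,v+1))=\tfrac1{2N}$; hence $P\mu$ is the uniform distribution on $C\times V$, whereas $\mu$ is not uniform since $\tfrac{3+2\sqrt2}{4+2\sqrt2}\ne\tfrac12$. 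Therefore $P\mu\ne\mu$, which is the first claim.

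For the entropy, observe that $C\times V$ is finite so every partition has finite entropy, and since $\mc T$ is a L\"uders-von Neumann (in fact coherent states) instrument, Lemma~\ref{lvn gives 0 meas entropy} gives $h^{SZ}_{\text{meas}}(\mc T,\rho,\mc C)=0$; hence $h^{SZ}_{\text{dyn}}(\Theta,\mc T,\rho,\mc C)=h^{SZ}(\Theta,\mc T,\rho,\mc C)=h^{KS}(s,\widehat{\mc C})$ for every partition $\mc C$ by Remark~\ref{entropy equivalence 3}. Exactly as in the proof of Theorem~\ref{SZ entropy is nonlinear}, $\widehat{\mc A}$ is a generating partition for the symbolic dynamics (Equation~\eqref{C cylinder sets}) of finite entropy, so the Kolmogorov-Sinai Theorem forces the supremum defining $h^{SZ}_{\text{dyn}}(\Theta,\mc T,\rho)$ to be attained at $\mc A$, i.e. $h^{SZ}_{\text{dyn}}(\Theta,\mc T,\rho)=h^{SZ}(\Theta,\mc T,\rho,\mc A)$. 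Finally, Corollary~\ref{markov entropy rate 3} writes this as $\lim_{n\to\infty}\sum_{y\in C\times V}(P^n\mu)_y\sum_{x\in C\times V}\eta(|\langle a_x|U|a_y\rangle|^2)$; by the first step the inner sum is $\eta(\tfrac12)+\eta(\tfrac12)=\ln2$ for every $y$, and since each $P^n\mu$ is a probability vector this expression equals $\ln2$ for every $n$, so the limit is $\ln2$. The only mildly delicate point — the step I would be most careful with — is justifying that the supremum over partitions is realized at the atomic one, but this is precisely the generating-partition argument already used for Theorem~\ref{SZ entropy is nonlinear}; the remainder is bookkeeping with the explicit $P$ and $\mu$.
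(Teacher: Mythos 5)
Your proposal is correct and follows essentially the same route as the paper: compute the initial pmf and the transition probabilities $|\langle c,v|U|c',v'\rangle|^2$ explicitly, observe that $P\mu$ is uniform while $\mu$ is not (hence not $P$-invariant), and then obtain $\ln 2$ from Corollary~\ref{markov entropy rate 3} together with Lemma~\ref{lvn gives 0 meas entropy} and the generating-partition argument for $\widehat{\mc A}$. Your observation that the inner sum $\sum_x\eta(|\langle a_x|U|a_y\rangle|^2)=\ln 2$ for every $y$, so the limit is $\ln 2$ irrespective of invariance, is a slightly cleaner way to finish than the paper's appeal to the invariance of the uniform distribution, but it is the same argument in substance.
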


\begin{proof}
For each $(c,v)\in C\times V$, 
\begin{equation}\label{equation0}
p_{X^{(\Theta,\mc T,\rho)}_0}(c,v)=\langle c,v|\rho|c,v\rangle = \frac{1}{N(4+2\sqrt{2})}((3+2\sqrt 2)\delta_{c,R}+\delta_{c,L}).
\end{equation}

\noindent Also, for each $e=(c,v), f=(d,u)\in C\times V$, a straightforward calculation yields  
\begin{equation}\label{U transition probs 2}
\begin{split}
|\langle e|U|f\rangle|^2&= 
\begin{cases} \frac{1}{2} & c=R \text{ and } u=v-1\\
\frac{1}{2} & c=L \text{ and } u=v+1 \\
0 & \text{else} \end{cases} \\
&=\frac{1}{2} \delta_{u, v-(-1)^{\delta_{c,L}}}. 
\end{split}
\end{equation}

Recall that $|\langle e|U|f\rangle|^2$ is the $(e,f)$-entry of $P$, for each $e,f\in C\times V$. 
Thus, for each $e=(c,v)\in C\times V$,  
\begin{align*}
(Pp_{X^{(\Theta,\mc T,\rho)}_0})_e &=\sum_{f\in C\times V} p_{X^{(\Theta,\mc T,\rho)}_0}(f) |\langle e|U|f\rangle|^2 \\
&= \frac{1}{2}(p_{X^{(\Theta,\mc T,\rho)}_0}(R, v-(-1)^{\delta_{c, L}}) +p_{X^{(\Theta,\mc T,\rho)}_0}(L, v-(-1)^{\delta_{c, L}})) 
\quad\text{by \eqref{U transition probs 2}} \\
&=\frac{1}{2}(\frac{3+2\sqrt 2}{N(4+2\sqrt 2)}+ \frac{1}{N(4+2\sqrt 2)}) =\frac{1}{2N}.\quad\text{by \eqref{equation0}} 
\end{align*}

Therefore $Pp_{X^{(\Theta,\mc T,\rho)}_0}\ne p_{X^{(\Theta,\mc T,\rho)}_0}$ and thus $\X^{(\Theta,\mc T,\rho)}$ is not stationary. 
Continuing to find the dynamical SZ entropy, we see that $Pp_{X^{(\Theta,\mc T,\rho)}_0}$ is the uniform distribution, $\mu$, on $C\times V$, which is invariant with respect to $P$. 
Thus Corollary~\ref{markov entropy rate 3} and Lemma~\ref{lvn gives 0 meas entropy} imply that 
\begin{align*}
h^{SZ}_{\text{dyn}}(\Theta, \mc T,\rho, \mc A)=\sum_{f\in C\times V} \mu_f \sum_{e\in C\times V} \eta(|\langle e|U|f\rangle|^2) 
= \sum_{f\in C\times V} \frac{1}{2N} 2\eta(\frac{1}{2})=\ln 2,
\end{align*}

\noindent which is equal to $h^{SZ}_{\text{dyn}}(\Theta, \mc T,\rho)$ because $\widehat{\mc A}$ is a generating partition for $(\Omega^*, \mc P(\Omega)^*, \mu^{(\Theta,\mc T, \rho)}, s)$ by Corollary~\ref{KS theorem 2}. 
\end{proof}

As the UQRW is a quantum analogue of the classical random walk, it is natural to consider measurements of the position space only. 
There are two options for how to go about this. 
One option is to take the phase space to be $(C\times V, \mc P(C\times V))$, the coherent states instrument $\mc T$ to be given by the family $(P_e)_{e\in C\times V}$, defined in Equation~\eqref{cs projections}, and calculate the dynamical SZ entropy with respect to the partition 
\begin{equation}\label{partition of V}
\mc C_V=\{C_v\}_{v\in V},\text{ where }C_v:= \{|R,v\rangle, |L,v\rangle\},\text{ for each }v\in V. 
\end{equation} 

\noindent On the other hand we could take the phase space to be $(V,\mc P(V))$, define the projections 
\begin{equation}\label{projections of V}
P_v=\1_{H_C}\ten |v\rangle\langle v|,\text{ for each }v\in V, 
\end{equation} 

\noindent and calculate the dynamical SZ entropy of the L\"uders-von Neumann instrument $\mc V$, governed by the family $(P_v)_{v\in V}$, with respect to the atomic partition of $V$. 
We will calculate the entropies for both these scenarios (with the same initial state) on the Hadamard walk, $\Theta$, and its square, $\Theta^2$. 
We will see that the two interpretations do not yield the same entropy. 
This is further evidence to the sensitivity of a closed quantum system to measurement. 
Furthermore, Theorem~\ref{coarser instrument} provides a concrete example illustrating the fact that dynamical SZ entropy is not linear in time. 

\begin{Prop}\label{Coarser Partition}
	Let $\Theta$ be the the Hadamard walk on $V$ with $|V|=N\ge 3$. 
	Let $\mc T$ be the coherent states instrument given by the family of orthogonal projections $(P_e)_{e\in C\times V}$ given in Equation~\eqref{cs projections}, $\rho=\frac{\1_H}{2N}$ and $\mc C_V$ the partition given in Equation~\eqref{partition of V}. 
	Then $h^{SZ}_{\text{dyn}}(\Theta, \mc T, \rho, \mc C_V)=\ln 2$ and $h^{SZ}_{\text{dyn}}(\Theta^2, \mc T, \rho, \mc C_V)=\frac{3}{2}\ln 2$.  
\end{Prop}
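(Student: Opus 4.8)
The plan is to identify, after coarsening by $\mc C_V$, the process $\X^{(\Theta,\mc T,\rho)}$ (respectively $\X^{(\Theta^2,\mc T,\rho)}$) with the stationary unbiased random walk on the $N$-cycle governed by the transition matrix $P$ of Proposition~\ref{nonzero entropy finite system} (respectively by $P^2$), and then to invoke that proposition. To begin, since $\mc T$ is a coherent states instrument it is in particular a L\"uders--von Neumann instrument, and $|C\times V|=2N<\infty$ forces $H(\widehat{\mc C_V})<\infty$; hence Lemma~\ref{lvn gives 0 meas entropy} gives $h^{SZ}_{\text{meas}}(\mc T,\rho,\mc C_V)=0$, so that $h^{SZ}_{\text{dyn}}(\Theta^k,\mc T,\rho,\mc C_V)=h^{SZ}(\Theta^k,\mc T,\rho,\mc C_V)=H(\X^{(\Theta^k,\mc T,\rho)}_{\mc C_V})$ for $k=1,2$.

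The heart of the proof is a transition-probability computation. For $k=1$ the quantities $|\langle e|U|f\rangle|^2$ are already recorded in Equation~\eqref{U transition probs 2}. For $k=2$ I would compute $U^2|R,v\rangle=\tfrac12\bigl(|R,v+2\rangle+|R,v\rangle+|L,v\rangle-|L,v-2\rangle\bigr)$ and $U^2|L,v\rangle=\tfrac12\bigl(|R,v+2\rangle-|R,v\rangle+|L,v\rangle+|L,v-2\rangle\bigr)$, so that $|\langle (d,u)|U^2|(c,v)\rangle|^2=\tfrac14$ precisely when $(d,u)\in\{(R,v+2),(R,v),(L,v),(L,v-2)\}$ and is $0$ otherwise. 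Two features, valid for both $k$, drive everything: $|\langle e|U^k|f\rangle|^2$ does not depend on the coin label of the source $f$; and, for fixed source vertex $v$ and target vertex $u$, summing $|\langle e|U^k|f\rangle|^2$ over the coin label of the target $e$ yields the $N$-cycle transition probability $p_{u,v}$ (with $p_{v\pm1,v}=\tfrac12$) when $k=1$, and the $(u,v)$-entry $p^{(2)}_{u,v}$ of $P^2$ (with $p^{(2)}_{v\pm2,v}=\tfrac14$ and $p^{(2)}_{v,v}=\tfrac12$) when $k=2$. The hypothesis $N\ge 3$ enters here, ensuring that $v,v\pm1$ (and, for $k=2$, $v,v\pm2$) are pairwise distinct in $V$ so that no amplitudes are forced to superpose on a single vertex.

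Now substitute into Equation~\eqref{cs probabilities 2}, applied to $\Theta^k$ (the unitary transformation of $U^k$), for a $\mc C_V$-cylinder $C\left(\begin{smallmatrix}C_{v_0}&\cdots&C_{v_n}\\0&\cdots&n\end{smallmatrix}\right)$. Since $\rho=\tfrac{\1_H}{2N}$ has all diagonal entries equal to $\tfrac1{2N}$, and since by the first feature each factor $|\langle e_k|U^k|e_{k-1}\rangle|^2$ depends on the coin labels only through $e_k$, the sum over the coin labels $(c_0,\dots,c_n)$ factors as a product over $k$ and, by the second feature, collapses to
\begin{equation*}
\mu^{(\Theta^k,\mc T,\rho)}\!\left(C\left(\begin{smallmatrix}C_{v_0}&\cdots&C_{v_n}\\0&\cdots&n\end{smallmatrix}\right)\right)=\frac1N\prod_{j=1}^n p^{(k)}_{v_j,v_{j-1}},
\end{equation*}
where $p^{(1)}=p$. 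This is exactly the symbolic-dynamics measure of the stationary Markov chain on $V$ with transition matrix $P$ (resp. $P^2$) and uniform initial distribution, the uniform distribution being $P$-invariant and hence $P^2$-invariant. Therefore $H(\X^{(\Theta^k,\mc T,\rho)}_{\mc C_V})$ equals the entropy rate of that chain, which by Theorem~\ref{markov entropy rate} and the computations in the proof of Proposition~\ref{nonzero entropy finite system} is $\ln 2$ for $k=1$ and $\tfrac32\ln 2$ for $k=2$; together with the first paragraph this gives the proposition.

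The step I expect to be most delicate is the amplitude computation for $U^2$ together with the ensuing coin-sum factorization: one must verify that $|\langle e_k|U^k|e_{k-1}\rangle|^2$ really is blind to the source coin $c_{k-1}$, so that $\sum_{c_1,\dots,c_n}\prod_k(\cdots)=\prod_k\sum_{c_k}(\cdots)$, and, in the $k=2$ case, that the two coin branches reaching a common target vertex add to $\tfrac14+\tfrac14=\tfrac12$ rather than interfering destructively — this is precisely where the lone minus sign in the expansion of $U^2$ must be tracked correctly. The remainder is routine bookkeeping combined with the already-established results.
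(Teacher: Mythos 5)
Your proposal is correct and follows essentially the same route as the paper's own proof: compute the $U$ and $U^2$ amplitudes, observe that $|\langle e|U^k|f\rangle|^2$ is independent of the source coin so the coin sums factor, identify the coarsened process with the stationary unbiased $N$-cycle walk governed by $P$ (resp.\ $P^2$), and invoke Proposition~\ref{nonzero entropy finite system}; your $U^2$ expansion and the $\tfrac14+\tfrac14=\tfrac12$ collapse onto the vertex $v$ match Equations~\eqref{U2 transitions}--\eqref{U2 transition probs} exactly. One caveat you share with the paper: your parenthetical that $N\ge 3$ makes $v,v\pm 2$ pairwise distinct fails at $N=4$ (where $v+2\equiv v-2$), and Proposition~\ref{nonzero entropy finite system} is in any case only stated for odd $N$, so the $\Theta^2$ value $\tfrac32\ln 2$ really requires $N=3$ or $N\ge 5$ --- a gap inherited from the paper's own statement rather than introduced by you.
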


\begin{proof}
Notice that $p_{X^{(\Theta,\mc T,\rho)}_0}(c,v)=\langle c,v|\rho|c, v\rangle= \frac{1}{2N}$ for all $(c,v)\in C\times V$ and recall that the transition matrix $P$, which governs $\Theta$ with respect to the coherent states instrument $\mc T$, has $(e,f)$-entry given by Equation~\eqref{U transition probs 2}, for every $e,f\in C\times V$. In the following, it will be more convenient to rewrite Equation~\eqref{U transition probs 2} viewing $f$ as the fixed index. 
In this manner, for each $f=(c,v)\in C\times V$, we have  
\begin{equation}\label{U transitions}
U|f\rangle = \frac{1}{\sqrt{2}} (|R,v+1\rangle + (-1)^{\delta_{c, L}} |L,v-1\rangle), 
\end{equation} 

\noindent and hence 
\begin{equation}\label{U transition probs}
|\langle e|U|f\rangle|^2= 
\frac{1}{2} (\delta_{e,(R,v+1)}+\delta_{e,(L,v-1)}). 
\end{equation}

\noindent Also, we have 
\begin{align*}
\mu^{(\Theta, \mc T,\rho)}(C\left(\begin{smallmatrix} C_{v_0} & \cdots & 
C_{v_n} \\ 0 & \cdots & n\end{smallmatrix}\right))&= 
\sum_{\substack{c_k\in \{R,L\}\\ 0\le k\le n}} 
\langle c_0,v_0|\rho|c_0, v_0\rangle\prod_{k=1}^n |\langle c_k, v_k|U|c_{k-1}, v_{k-1} \rangle|^2 
\quad\text{by \eqref{cs probabilities 2}} \\ 
&= \sum_{\substack{c_k\in \{R,L\}\\ 0\le k\le n}} \frac{1}{2N}\prod_{k=1}^n \left(\frac{\delta_{v_k,v_{k-1}+1}\delta_{c_k, R}+ \delta_{v_k,v_{k-1}-1}\delta_{c_k, L}}{2}\right) 
\quad\text{by \eqref{U transition probs}} \\ 
&= \frac{1}{N} \prod_{k=1}^n \left(\frac{\delta_{v_k,v_{k-1}+1} + \delta_{v_k,v_{k-1}-1}}{2}\right),
\end{align*} 

\noindent for all $v_0,\ldots, v_n\in V$, which are exactly the probabilities $p_{\X}(v_0,\ldots,v_n)$ of a stationary Markov chain $\X$ which is governed by the transition matrix, $Q$, for the unbiased random walk on the $N$-cycle $V$.  
Therefore $h^{SZ}_{\text{dyn}}(\Theta, \mc T, \rho, \mc C_V)=H(Q)=\ln 2$, where the second equality follows from Proposition~\ref{nonzero entropy finite system}. 


Next we show that $h^{SZ}_{\text{dyn}}(\Theta^2, \mc T, \rho, \mc C_V)=\frac{3}{2}\ln 2$. 
For all $f=(c,v)\in C\times V$, we have 
\begin{equation}\label{U2 transitions}
\begin{split}
U^2|f\rangle &=
\frac{1}{\sqrt 2}U(|R,v+1\rangle + (-1)^{\delta_{c, L}} |L,v-1\rangle)
\quad\text{by \eqref{U transitions}} \\
&= \frac{1}{2}((-1)^{\delta_{R,c}}|L,v-2\rangle + (-1)^{\delta_{L,c}}|R,v\rangle + |L,v\rangle + |R,v+2\rangle)
\end{split}
\end{equation}

\noindent and hence 
\begin{equation}\label{U2 transition probs}
|\langle e|U^2|f \rangle|^2= \begin{cases} 
\frac{1}{4} & e=(R,v)\text{ or }(L,v) \\ 
\frac{1}{4} & e=(R,v+2) \\
\frac{1}{4} & e=(L,v-2) \\
0 & \text{else} \end{cases}.
\end{equation} 

\noindent Notice that $|\langle e|U^2|f \rangle|^2$ in Equation~\eqref{U2 transition probs} does not depend on the coin space component of $f$.  
Therefore  
\begin{align*}
&\mu^{(\Theta^2, \mc T,\rho)}(C\left(\begin{smallmatrix} C_{v_0} & \cdots & 
C_{v_n} \\ 0 & \cdots & n\end{smallmatrix}\right)) \\
&=\sum_{\substack{c_k\in \{R,L\}\\ 0\le k\le n}} 
\langle c_0,v_0|\rho|c_0, v_0\rangle\prod_{k=1}^n |\langle c_k, v_k|U^2|c_{k-1}, v_{k-1} \rangle|^2 
\quad\text{by \eqref{cs probabilities 2}} \\
&=\sum_{\substack{c_k\in \{R,L\}\\ 0\le k\le n}} 
\frac{1}{2N}\prod_{k=1}^n \left(\frac{ 
\delta_{c_k, L}\delta_{v_k, v_{k-1}-2}+ \delta_{c_k, L}\delta_{v_k, v_{k-1}}+
\delta_{c_k, R}\delta_{v_k, v_{k-1}}+\delta_{c_k, R}\delta_{v_k, v_{k-1}+2}}{4}\right) 
\quad\text{by \eqref{U2 transition probs}} \\
&= \frac{1}{N}\prod_{k=1}^n \left(\frac{1}{4} \delta_{v_k, v_{k-1}-2}+
\frac{1}{2} \delta_{v_k, v_{k-1}}+
\frac{1}{4} \delta_{v_k, v_{k-1}+2}\right),
\end{align*}

\noindent for all $v_0,\ldots, v_n\in V$, which are exactly the probabilities $p_{\Y}(v_0,\ldots,v_n)$ of a stationary Markov chain $\Y$ which is governed by the transition matrix $Q^2$. 
Therefore $h^{SZ}_{\text{dyn}}(\Theta^2 , \mc T, \rho, \mc C_V)=H(Q^2)=\frac{3}{2}\ln 2$, where the second equality follows from Proposition~\ref{nonzero entropy finite system}. 
\end{proof}

\begin{theorem}\label{coarser instrument}
	Let $\Theta$ be the Hadamard walk on $V$ with $|V|=N\ge 3$. 
	Let $\mc V$ be the L\"uders-von Neumann instrument given by the family of orthogonal rank-2 projections $(P_v)_{v\in V}$ defined in Equation~\eqref{projections of V} and $\rho=\frac{\1_H}{2N}$. 
	Then $h^{SZ}_{\text{dyn}}(\Theta, \mc V, \rho)=\ln 2$ and $h^{SZ}_{\text{dyn}}(\Theta^2, \mc V, \rho)=\frac{4}{3}\ln 2$.  
\end{theorem}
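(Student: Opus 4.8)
The plan is to mirror the computation in Proposition~\ref{Coarser Partition}: evaluate $\mu^{(\Theta^n,\mc V,\rho)}$ on the cylinder sets over the atomic partition $\mc A$ of $V$ for $n=1,2$, identify the resulting process $\X^{(\Theta^n,\mc V,\rho)}_{\mc A}$, and compute its entropy rate. Since $\mc V$ is a L\"uders--von Neumann instrument and $V$ is finite, Lemma~\ref{lvn gives 0 meas entropy} gives $h^{SZ}_{\text{meas}}(\mc V,\rho,\mc C)=0$ for every $\mc C\in\pa(V)$, hence $h^{SZ}_{\text{dyn}}(\Theta^n,\mc V,\rho,\mc C)=h^{SZ}(\Theta^n,\mc V,\rho,\mc C)$; and because $\widehat{\mc A}$ is a generating partition of finite entropy for the symbolic dynamics, Corollary~\ref{KS theorem 2} together with Remark~\ref{entropy equivalence 3} gives $h^{SZ}_{\text{dyn}}(\Theta^n,\mc V,\rho)=h^{SZ}(\Theta^n,\mc V,\rho,\mc A)=H(\X^{(\Theta^n,\mc V,\rho)}_{\mc A})$. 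So everything comes down to two entropy-rate computations.

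Expanding Equation~\eqref{general entropy definition} with $\mc V(\{v\})\sigma=P_v\sigma P_v$, $\Theta(\sigma)=U\sigma U^{*}$ and $\rho=\1_H/2N$, and writing $T^{(n)}_{w\leftarrow v}:=P_wU^nP_v$ as a $2\times 2$ operator on the coin space $\C^2\cong H_C\ten\mathrm{span}|v\rangle$, one obtains
\begin{equation*}
\mu^{(\Theta^n,\mc V,\rho)}(C\left(\begin{smallmatrix} v_0 & \cdots & v_m \\ 0 & \cdots & m\end{smallmatrix}\right))=\frac1{2N}\,\tr(MM^{*}),\qquad M:=T^{(n)}_{v_m\leftarrow v_{m-1}}\cdots T^{(n)}_{v_1\leftarrow v_0}.
\end{equation*}
For $n=1$, $U|R,v\rangle=\tfrac1{\sqrt2}(|R,v+1\rangle+|L,v-1\rangle)$ and $U|L,v\rangle=\tfrac1{\sqrt2}(|R,v+1\rangle-|L,v-1\rangle)$ give $T^{(1)}_{v+1\leftarrow v}=|R\rangle\langle h_1|$, $T^{(1)}_{v-1\leftarrow v}=|L\rangle\langle h_2|$ (and $0$ otherwise), where $|h_1\rangle=\tfrac1{\sqrt2}(|R\rangle+|L\rangle)$ and $|h_2\rangle=\tfrac1{\sqrt2}(|R\rangle-|L\rangle)$. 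Since every inner product $|\langle h_i|R\rangle|,|\langle h_i|L\rangle|$ equals $1/\sqrt2$, telescoping the product of these rank-one operators gives $\tr(MM^{*})=2^{-(m-1)}$ along every path whose consecutive vertices differ by $\pm1$ and $0$ otherwise; thus $\mu^{(\Theta,\mc V,\rho)}=1/(N2^m)$ on each such cylinder. For $N\ge3$ there are exactly $N2^m$ such cylinders (they biject with $(v_0,\ep_1,\ldots,\ep_m)\in V\times\{+,-\}^m$, as $+1\not\equiv-1\pmod N$), so $H(X_0,\ldots,X_m)=\ln N+m\ln2$, whence $h^{SZ}_{\text{dyn}}(\Theta,\mc V,\rho)=\ln2$; equivalently $\X^{(\Theta,\mc V,\rho)}_{\mc A}$ is the uniformly started symmetric walk on the $N$-cycle, and this is Proposition~\ref{nonzero entropy finite system}.

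The theorem's content is the case $n=2$, where interference makes $\X^{(\Theta^2,\mc V,\rho)}_{\mc A}$ a genuine hidden-Markov process rather than a Markov chain. A direct calculation gives
\begin{equation*}
T^{(2)}_{v\leftarrow v}=\tfrac12\left(\begin{smallmatrix}1&-1\\ 1&1\end{smallmatrix}\right),\qquad T^{(2)}_{v+2\leftarrow v}=\tfrac1{\sqrt2}|R\rangle\langle h_1|,\qquad T^{(2)}_{v-2\leftarrow v}=-\tfrac1{\sqrt2}|L\rangle\langle h_2|
\end{equation*}
(and $0$ otherwise). Tracking $\sigma_k:=M_kM_k^{*}$ along a path ($\sigma_0=\1_{\C^2}$, $\sigma_k=T^{(2)}\sigma_{k-1}(T^{(2)})^{*}$, $\mu=\tr(\sigma_m)/2N$), the key observation is that $\sigma_k$ is always a nonnegative multiple of $\1_{\C^2}$ (before the first $\pm2$ step) or of a rank-one projection $|\psi\rangle\langle\psi|$ with $\psi$ in the four-element orbit $\{|R\rangle,|h_1\rangle,|L\rangle,|h_2\rangle\}$ of $T^{(2)}_{v\leftarrow v}$ (a rotation by $\pi/4$, up to the scalar $1/\sqrt2$): a ``stay'' step advances $\psi$ cyclically along this orbit and multiplies $\tr\sigma$ by $\tfrac12$, while a $\pm2$ step resets $\psi$ to $|R\rangle$ or $|L\rangle$ and multiplies $\tr\sigma$ by a factor in $\{0,\tfrac14,\tfrac12\}$ (equal to $\tfrac12|\langle h_1|\psi\rangle|^2$ or $\tfrac12|\langle h_2|\psi\rangle|^2$ in the rank-one case, and $\tfrac14$ from $\1_{\C^2}$). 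Hence $\mu=\tfrac1N\prod_{k=1}^m r_k$ with $r_k=\tr(\sigma_k)/\tr(\sigma_{k-1})$, and the weights $r(s,\ep)$ assemble into a genuine Markov chain on the five coin states $\{\1,R,h_1,L,h_2\}$, emitting $\ep\in\{0,+,-\}$ on each edge; the two vanishing weights $h_1\xrightarrow{-}\cdot$ and $h_2\xrightarrow{+}\cdot$ (forced by $\langle h_1|h_2\rangle=0$) are precisely the destructive interference. Since this emission is edge-injective and, for $N\ge3$ with $N\ne4$ (so that $0,+2,-2$ are distinct modulo $N$), the admissible cylinders over the $N$-cycle biject with $(v_0,\ep_1,\ldots,\ep_m)$, we get $H(\X^{(\Theta^2,\mc V,\rho)}_{\mc A})=\lim_m\tfrac1{m+1}(\ln N+H(\ep_1,\ldots,\ep_m))=$ the entropy rate of that Markov chain. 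Its recurrent class is $\{R,h_1,L,h_2\}$, where the transition matrix is $\left(\begin{smallmatrix}1/4&1/2&1/4&0\\ 1/2&0&1/2&0\\ 1/4&0&1/4&1/2\\ 1/2&0&1/2&0\end{smallmatrix}\right)$, with stationary vector $(\tfrac13,\tfrac16,\tfrac13,\tfrac16)$ and row-entropies $(\tfrac32\ln2,\ln2,\tfrac32\ln2,\ln2)$, giving entropy rate $\tfrac13\cdot\tfrac32\ln2+\tfrac16\ln2+\tfrac13\cdot\tfrac32\ln2+\tfrac16\ln2=\tfrac43\ln2$. Therefore $h^{SZ}_{\text{dyn}}(\Theta^2,\mc V,\rho)=\tfrac43\ln2$.

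The main obstacle is this $n=2$ analysis: recognizing that the cylinder weights factor through a finite Markov chain on coin states, keeping the phase bookkeeping of the rotation orbit straight (so that the zero weights land exactly at $h_1\xrightarrow{-}$ and $h_2\xrightarrow{+}$), and verifying that the transient sojourn in the state $\1$ contributes only $O(1)$, so that the Ces\`aro limit equals the stationary entropy rate. One must also treat the exceptional value $N=4$ separately: there $+2\equiv-2\pmod N$ forces $T^{(2)}_{v+2\leftarrow v}$ to be a scaled unitary, every step then multiplies $\tr\sigma$ by $\tfrac12$, and the entropy degenerates to $\ln2$ — so the statement is to be read for $N\ge3$ with $N\ne4$ (in particular for all odd $N\ge3$, as in Proposition~\ref{nonzero entropy finite system}).
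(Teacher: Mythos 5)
Your proposal is correct, and for the $\Theta^2$ computation it takes a genuinely different (and more streamlined) route than the paper. Both arguments rest on the same underlying facts: after a $\pm2$ jump the coin collapses to $|R\rangle$ or $|L\rangle$, repeated ``stay'' steps rotate the coin through the orbit $R\to h_1\to L\to h_2\to R$ while halving the weight, and the orthogonality $\langle h_1|h_2\rangle=0$ kills exactly two transitions --- this is precisely the content of the paper's Equations~\eqref{equation 11}--\eqref{equation 12'}, \eqref{equation 14} and \eqref{equation 17}. The difference is in how the limiting frequencies are extracted. The paper classifies paths by the parity of the terminal constant run ($L^n_{\text{c}}, L^n_{\text{e}}, L^n_{\text{o}}$), derives the conditional laws \eqref{equation 5}--\eqref{equation 6}, and then solves the recursions \eqref{a' recursion}--\eqref{b recursion} with a monotonicity argument to get $e=\tfrac23$, $o=\tfrac13$. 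You instead package the coin dynamics as an explicit irreducible aperiodic Markov chain on $\{R,h_1,L,h_2\}$ (plus the transient state $\1$) with an edge-injective emission alphabet, so the answer drops out of the stationary vector $(\tfrac13,\tfrac16,\tfrac13,\tfrac16)$; note that $\pi_R+\pi_L=\tfrac23$ and $\pi_{h_1}+\pi_{h_2}=\tfrac13$ recover the paper's $e$ and $o$, and your row entropies $\tfrac32\ln2$ and $\ln2$ are the paper's two conditional entropies. Your version buys a standard convergence theorem in place of the ad hoc recursion, at the price of having to justify the hidden-Markov reduction (which you do, via the invariance of the five-element set of coin states and the independence of $v_0$ from the emission sequence).

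One further point in your favor: your observation about $N=4$ is a genuine correction to the statement, not a defect of your proof. When $N=4$ one has $v+2\equiv v-2\pmod N$, the combined jump operator $T^{(2)}_{v+2\leftarrow v}+T^{(2)}_{v-2\leftarrow v}=\tfrac1{\sqrt2}(|R\rangle\langle h_1|-|L\rangle\langle h_2|)$ is $\tfrac1{\sqrt2}$ times a unitary, every cylinder of length $m+1$ over an admissible path has measure $\tfrac1{N2^m}$, and $h^{SZ}_{\text{dyn}}(\Theta^2,\mc V,\rho)=\ln2\ne\tfrac43\ln2$. The paper's case analysis leading to \eqref{equation 5} and \eqref{equation 6} silently assumes that $v_n$, $v_n+2$ and $v_n-2$ are three distinct vertices (for $N=4$ the probabilities $\tfrac14$ assigned to $v_n+2$ and $v_n-2$ must be merged into a single $\tfrac12$), so the hypothesis of Theorem~\ref{coarser instrument} should read $N\ge3$ with $N\ne4$ --- in particular it is correct for all odd $N\ge3$ and all $N\ge5$.
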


\begin{proof} 
Notice that from Equation~\eqref{general entropy definition}, for each $m\in\N$, $n\in\N_0$ and $v_0,\ldots, v_n\in V$, we have   
\begin{equation}\label{coarser instrument probabilities}
\begin{split}
\mu^{(\Theta^m,\mc V,\rho)}(C\left(\begin{smallmatrix} v_0 & \cdots & 
v_n \\ 0 & \cdots & n\end{smallmatrix}\right))&:=
\tau(\mc V(v_n)\circ \Theta^m\circ\cdots\circ \Theta^m\circ \mc V(v_0) \rho) \\
&= \tr( P_{v_n}U^m\cdots U^mP_{v_0}\rho P_{v_0}(U^m)^*\cdots (U^m)^* P_{v_n}).
\end{split}
\end{equation}

\noindent Also, notice that $\rho=\frac{1}{2N}\sum_{v\in V} (|R,v\rangle\langle R,v|+|L,v\rangle\langle L,v|)$ and so, for each $m\in \N$, Equation~\eqref{coarser instrument probabilities} becomes 
\begin{align}\nonumber
\mu^{(\Theta^m,\mc V,\rho)}(C\left(\begin{smallmatrix} v_0 & \cdots & 
v_n \\ 0 & \cdots & n\end{smallmatrix}\right))
&= \sum_{c\in \{R,L\}} \frac{1}{2N}\tr( P_{v_n}U^m\cdots U^mP_{v_0}|c,v_0\rangle\langle c,v_0| P_{v_0}(U^m)^*\cdots (U^m)^* P_{v_n}) \\ \label{coarser instrument probabilities 2}
&= \sum_{c,d \in \{R,L\}} \frac{1}{2N} |\langle d,v_n| U^m P_{v_{n-1}}\cdots P_{v_1}U^m|c,v_0\rangle|^2. 
\end{align}
 
Let $\mc A$ be the atomic partition of $V$. 
We first show that $h^{SZ}_{\text{dyn}}(\Theta, \mc V, \rho, \mc A)=\ln 2$. 
Notice that for $(c,v)\in C\times V$, $UP_v |c,v\rangle = U|c,v\rangle$ and is given by Equation~\eqref{U transitions}.  
Thus, by direct calculation, we have that 
\begin{align*}
\mu^{(\Theta,\mc V,\rho)}&(C\left(\begin{smallmatrix} v_0 & \cdots & 
v_n \\ 0 & \cdots & n\end{smallmatrix}\right))\\
&=
\sum_{c_0,c_n \in \{R,L\}} \frac{1}{2N} |\langle c_n,v_n| U P_{v_{n-1}}\cdots P_{v_1}U|c_0,v_0\rangle|^2 
\quad\text{by \eqref{coarser instrument probabilities 2}} \\
&= \frac{1}{2N}\sum_{c_0, c_n \in \{R,L\}} |\langle c_n,v_n| U P_{v_{n-1}}\cdots P_{v_1} (\frac{1}{\sqrt 2} (|R,v_0+1\rangle +(-1)^{\delta_{c_0,L}} |L,v_0-1\rangle))|^2 \\
&=  \frac{1}{2N}\sum_{c_0, c_1, c_n \in \{R,L\}} \frac{1}{2}(\delta_{v_1,v_{0}+1}\delta_{c_1, R} + \delta_{v_1,v_0-1}\delta_{c_1, L}) |\langle c_n,v_n| U P_{v_{n-1}}\cdots P_{v_2}U|c_1,v_1\rangle|^2 \\
&= \cdots \\
&= \sum_{\substack{c_k\in \{R,L\}\\ 0\le k\le n}} \frac{1}{2N}\prod_{k=1}^n \left(\frac{\delta_{v_k,v_{k-1}+1}\delta_{c_k, R}+ \delta_{v_k,v_{k-1}-1}\delta_{c_k, L}}{2}\right) \\ 
&= \frac{1}{N} \prod_{k=1}^n \left(\frac{\delta_{v_k,v_{k-1}+1} + \delta_{v_k,v_{k-1}-1}}{2}\right),
\end{align*}

\noindent for all $v_0,\ldots, v_n\in V$, which are exactly the probabilities $p_{\X}(v_0,\ldots,v_n)$ of a stationary Markov chain $\X$ which is governed by the transition matrix, $Q$, for the unbiased random walk on the $N$-cycle $V$. 
Therefore $h^{SZ}_{\text{dyn}}(\Theta, \mc V, \rho, \mc A)=H(Q)=\ln 2$, where the second equality follows from Proposition~\ref{nonzero entropy finite system}. 
Moreover, since $\widehat{\mc A}$ is a generating partition for $(V^*,\mc P(V)^*,\mu^{(\Theta,\mc V,\rho)},s)$, we have that  $h^{SZ}_{\text{dyn}}(\Theta, \mc V, \rho)=\ln 2$ by Corollary~\ref{KS theorem 2}.


Next we show that $h^{SZ}_{\text{dyn}}(\Theta^2, \mc V, \rho, \mc A)=\frac{4}{3}\ln 2$ using path counting techniques. 
To that end, for each $n\in\N_0$ and $(n+1)$-tuple $v= (v_0, \ldots, v_n)\in V^{n+1}$, we set $$l_{v}:=|\{ k : k<n \text{ such that } v_{k}=v_{k+1}=\cdots=v_n\}|.$$ 
Then, we define the sets 
$$L^n_{\text{c}}:=\{\bar v\in V^{n+1}: \bar v=(v,\ldots, v)\text{ for some }v\in V\},$$
$$L^n_{\text{e}}:=\{v\in V^{n+1}: l_{v} \text{ is even}\}\backslash L^n_{\text{c}}\text{ and } 
L^n_{\text{o}}:=\{v\in V^{n+1}: l_{v} \text{ is odd}\}\backslash L^n_{\text{c}}.$$
 
\noindent For each $n\in \N_0$ and $v=(v_0,\ldots, v_n)\in V^{n+1}$, we will identify $v$ with the cylinder set $C\left(\begin{smallmatrix} v_0 & \cdots & 
v_n \\ 0 & \cdots & n\end{smallmatrix}\right)$ and consider $\mc L^n:=\{L^n_{\text{c}}, L^n_{\text{e}}, L^n_{\text{o}}\}$ as a partition of $V^*$.


With this correspondence, we will show that the conditional probabilities, $p_{\X^{(\Theta^2,\mc V,\rho)}}(v_{n+1}|v)$, for $v_{n+1}$ given $v=(v_0,\ldots, v_n)$ are dependent upon which set $L\in\mc L^n$ that $v$ belongs to. 
First, we will determine the change of coin state that occurs after measuring the walker at the same site a number of times in a row. 
We claim that the resulting coin state, after $n$ measurements at a site $v$, depends only on the initial coin state $c\in\{R,L\}$ and the congruence class of $n$ modulo 4. 
Specifically, for all $n\in\N_0$, $\bar v=(v,\ldots, v)\in V^{n+1}$ and $c\in\{R,L\}$, we claim the following: 
if $n\equiv 0 \mod 4$, then 
\begin{equation}\label{equation 11}
\underbrace{P_{v}U^2\cdots P_{v} U^2}_{n\text{ times}}  |c,v\rangle= a|c,v\rangle\text{ for some $a\in\C$ with $|a|=\frac{1}{2^{\lfloor\frac{n+1}{2}\rfloor}}$,}
\end{equation}

\noindent if $n\equiv 1\mod 4$, then 
\begin{equation}\label{equation 12}
P_{v}U^2\cdots P_{v} U^2 |c,v\rangle
=a|L+ (-1)^{\delta_{c,L}} R, v\rangle, \text{ for some $a\in \C$ with $|a|=\frac{1}{2^{\lfloor\frac{n+1}{2}\rfloor}}$,}
\end{equation}

\noindent if $n\equiv 2\mod 4$, then 
\begin{equation}\label{equation 11'}
P_{v}U^2\cdots P_{v} U^2 |c,v\rangle= a|c^\perp,v\rangle\text{ for some $a\in\C$ with $|a|=\frac{1}{2^{\lfloor\frac{n+1}{2}\rfloor}}$,}
\end{equation}

\noindent where we set $R^\perp=L$ and $L^\perp=R$, and, if $n\equiv 3\mod 4$, then 
\begin{equation}\label{equation 12'}
P_{v}U^2\cdots P_{v} U^2 |c,v\rangle
=a|L- (-1)^{\delta_{c,L}} R, v\rangle, \text{ for some $a\in \C$ with $|a|=\frac{1}{2^{\lfloor\frac{n+1}{2}\rfloor}}$,}
\end{equation}
\noindent where we used the abbreviation $|L\pm R,v\rangle:=|L,v\rangle \pm |R,v\rangle$. 
We will prove the claims by induction on $n$.


The base case, $n=0$, is trivial. 
For the inductive step we will handle the different congruence classes of $n$ separately. 
To this end, let $m\in\N$ with $m\ge 1$ and suppose that for all $n<m$ Equations~\eqref{equation 11}-\eqref{equation 12'} hold for all $\bar v\in V^{n+1}$ and their respective values of $n$. 
Fix $\bar v\in V^{m+1}$ and $c\in\{R, L\}$. 
If $m\equiv 1\mod 4$, then 
\begin{equation*}
\begin{split}
P_{v}U^2\cdots P_v U^2  |c,v\rangle&= P_{v}U^2a|c,v\rangle \text{ for some $a\in\C$ with $|a|=\frac{1}{2^{\lfloor\frac{m}{2}\rfloor}}$ by \eqref{equation 11}}\\
&=\frac{a}{2}|L+ (-1)^{\delta_{c,L}} R, v\rangle\text{ by \eqref{U2 transitions},}
\end{split}
\end{equation*}

\noindent and Equation~\eqref{equation 12} is satisfied since $\frac{1}{2\cdot 2^{\lfloor\frac{m}{2}\rfloor}}=\frac{1}{2^{\lfloor\frac{m+1}{2}\rfloor}}$. 
If $m\equiv 2\mod 4$, then 
\begin{align*}
P_{v}U^2\cdots P_{v} U^2  |c,v\rangle
&= P_{v}U^2a|L+ (-1)^{\delta_{c,L}} R,v\rangle \text{ for some $a\in\C$ with $|a|=\frac{1}{2^{\lfloor\frac{m}{2}\rfloor}}$ by \eqref{equation 12}}\\
&=(-1)^{\delta_{c,L}}a|c^\perp, v\rangle,
\end{align*}

\noindent where the second equality holds because  
\begin{equation}\label{equation 3}
\begin{split}
&U^2 |R+L, v\rangle  = \sqrt{2} U|R,v+1\rangle 
= |L, v\rangle + 
|R, v+2\rangle\text{ and } \\
&U^2 |L-R, v \rangle = -\sqrt{2} U |L,v-1\rangle 
= |L, v-2 \rangle - |R,v\rangle, 
\end{split}
\end{equation}

\noindent for all $v\in V$. 
Thus Equation~\eqref{equation 11'} is satisfied since $\frac{1}{ 2^{\lfloor\frac{m}{2}\rfloor}}=\frac{1}{2^{\lfloor\frac{m+1}{2}\rfloor}}$. 
If $m\equiv 3\mod 4$, then 
\begin{equation*}
\begin{split}
P_{v}U^2\cdots P_{v} U^2 |c,v\rangle&= P_{v}U^2a|c^\perp,v\rangle \text{ for some $a\in\C$ with $|a|=\frac{1}{2^{\lfloor\frac{m}{2}\rfloor}}$ by \eqref{equation 11'}}\\
&=\frac{a}{2}|L- (-1)^{\delta_{c,L}} R, v\rangle\text{ by \eqref{U2 transitions},}
\end{split}
\end{equation*}

\noindent where we used the fact that $(-1)^{\delta_{c^\perp,L}}=-(-1)^{\delta_{c,L}}$ in the second equality. 
Hence Equation~\eqref{equation 12'} is satisfied since $\frac{1}{2\cdot 2^{\lfloor\frac{m}{2}\rfloor}}=\frac{1}{2^{\lfloor\frac{m+1}{2}\rfloor}}$. 
If $m\equiv 0\mod 4$, then 
\begin{align*}
P_{v}U^2\cdots P_{v} U^2 |c,v\rangle
&= P_{v}U^2a|L- (-1)^{\delta_{c,L}} R,v\rangle \text{ for some $a\in\C$ with $|a|=\frac{1}{2^{\lfloor\frac{m}{2}\rfloor}}$ by \eqref{equation 12'}}\\
&=-(-1)^{\delta_{c,L}}a|c, v\rangle\text{ by \eqref{equation 3}}
\end{align*}

\noindent and hence Equation~\eqref{equation 11} is satisfied since $\frac{1}{ 2^{\lfloor\frac{m}{2}\rfloor}}=\frac{1}{2^{\lfloor\frac{m+1}{2}\rfloor}}$. 
Therefore the induction is complete and the claims are verified.


Next we claim that for all $v=(v_0,\ldots, v_n)\in V^{n+1}\bs L^n_{\text{c}}$ such that $p_{\X^{(\Theta^2,\mc V,\rho)}}(v)\ne 0$ there exists some $\psi\in H_C$ such that for all $v_{n+1}\in V$ the conditional pmf of $\X^{(\Theta^2,\mc V,\rho)}$ is given by 
\begin{equation}\label{equation 15'}
p_{\X^{(\Theta^2,\mc V,\rho)}}(v_{n+1}|v_0,\ldots,v_n)
= \frac{\sum_{d\in\{R,L\}}|\langle d,v_{n+1}| U^2|\psi,v_n\rangle |^2}{\|\psi\|^2}. 
\end{equation}

Indeed, for all $v\in V^{n+1}$ with $p_{\X^{(\Theta^2,\mc V,\rho)}}(v)\ne 0$, we have 
\begin{align}\nonumber 
p_{\X^{(\Theta^2,\mc V,\rho)}}(v_{n+1}|v_0,\ldots,v_n)&=
\frac{\mu^{(\Theta^2,\mc V,\rho)}(C\left(\begin{smallmatrix} v_0 & \cdots & 
	v_{n+1} \\ 0 & \cdots & n+1\end{smallmatrix}\right))}{\mu^{(\Theta^2,\mc V,\rho)}(C\left(\begin{smallmatrix} v_0 & \cdots & 
	v_n \\ 0 & \cdots & n\end{smallmatrix}\right))} \\ \nonumber
&=\frac{
	\sum_{c,d \in \{R,L\}}  |\langle d,v_{n+1}| U^2 P_{v_n}\cdots P_{v_1}U^2|c,v_0\rangle|^2}{\sum_{c,d \in \{R,L\}}  |\langle d,v_{n}| U^2 P_{v_{n-1}}\cdots P_{v_1}U^2|c,v_0\rangle|^2}
\text{ by \eqref{coarser instrument probabilities 2}}  
\\ \label{eqn 15}
&=  \sum_{c\in\{R,L\}} q_c \frac{
	\sum_{d \in \{R,L\}}  |\langle d,v_{n+1}| U^2 P_{v_n}\cdots P_{v_1}U^2|c,v_0\rangle|^2}{\sum_{d \in \{R,L\}}  |\langle d,v_{n}| U^2 P_{v_{n-1}}\cdots P_{v_1}U^2|c,v_0\rangle|^2},
\end{align}

\noindent where, for each $c\in\{R,L\}$, we set 
\begin{equation}\label{ac probs}
q_c:= \frac{\sum_{d \in \{R,L\}}  |\langle d,v_n| U^2 P_{v_{n-1}}\cdots P_{v_1}U^2|c,v_0\rangle|^2}{\sum_{c', d \in \{R,L\}}  |\langle d,v_n| U^2 P_{v_{n-1}}\cdots P_{v_1}U^2|c',v_0\rangle|^2}.
\end{equation} 

\noindent Notice that if $q_c=0$  in Equation~\eqref{ac probs} then the denominator on the right hand side of Equation~\eqref{eqn 15} is also 0. In this case, we will use the convention that their product is defined and equal to 0. 
For each $c\in\{R,L\}$, we define $\psi_c\in H_C$ to be the unique element satisfying the equation 
\begin{equation*}\label{n coin state}
P_{v_n}U^2\cdots P_{v_1}U^2|c,v_0\rangle=|\psi_c,v_n\rangle. 
\end{equation*}

\noindent Then Equation~\eqref{eqn 15} simplifies to 
\begin{equation}\label{equation 15}
p_{\X^{(\Theta^2,\mc V,\rho)}}(v_{n+1}|v_0,\ldots,v_n)
= \sum_{c\in\{R,L\}} q_c \frac{\sum_{d\in\{R,L\}}|\langle d,v_{n+1}| U^2|\psi_c,v_n\rangle |^2}{\|\psi_c\|^2}, 
\end{equation}

\noindent where equality in the denominator follows by Parseval's identity. 

Notice that Equation~\eqref{U2 transitions} gives  
\begin{equation}\label{equation 13}
\begin{split}
&\Ran(P_{v+2}U^2 P_v)=\text{span} \{|R,v+2\rangle\}\text{ and }\\ &\Ran(P_{v-2}U^2 P_v)=\text{span} \{|L,v-2\rangle\},\text{ for all $v\in V$.} 
\end{split} 
\end{equation}

\noindent Moreover, Equation~\eqref{equation 13} implies that, for any operator $A\in B(H)$,
\begin{equation*}
\begin{split}
&\Ran(P_{v+2}U^2 P_vA)\subseteq\text{span} \{|R,v+2\rangle\}\text{ and }\\ &\Ran(P_{v-2}U^2 P_vA)\subseteq\text{span} \{|L,v-2\rangle\},\text{ for all $v\in V$.} 
\end{split} 
\end{equation*}

\noindent Hence, if $v\in V^{n+1}\bs L^n_{\text{c}}$, then for each $c\in\{R,L\}$ we have  
\begin{equation}\label{equation 14}
P_{v_{n-l_v}}U^2\cdots P_{v_1} U^2 |c,v_0\rangle=a_c|d,v_{n-l_v}\rangle, \text{ for some }a_c\in \C,
\end{equation} 
where $d=R$ whenever $v_{n-l_v}=v_{n-l_v-1}+2$ and $d=L$ when $v_{n-l_v}=v_{n-l_v-1}-2$. 
Thus $d$ does not depend on the initial coin state $c$. 
For each $d\in\{R,L\}$, we also have that 
\begin{equation}\label{equation 16}
P_{v_n}U^2\cdots P_{v_{n-l_v+1}} U^2  |d,v_{n-l_v}\rangle
=a|\psi, v_n\rangle, \text{ for some $a\in \C$ with $|a|=\frac{1}{2^{\lfloor\frac{l_v+1}{2}\rfloor}}$,} 
\end{equation}

\noindent where $\psi$ is the coin state given by Equations~\eqref{equation 11}-\eqref{equation 12'} depending on the congruence class of $l_v$ modulo 4 and we used the fact that $v_{n-l_v}=v_{n-l_v+1}=\cdots=v_n$ by definition of $l_v$. 
We combine Equations~\eqref{equation 14} and \eqref{equation 16} to get that, for each $c\in\{R,L\}$,  
\begin{equation}\label{equation 17}
P_{v_{n}}U^2\cdots P_{v_1} U^2 |c,v_0\rangle=a_c'|\psi, v_n\rangle, 
\end{equation}

\noindent where $a_c'=a_c\cdot a$ with $a_c$ and $a$ coming from Equations~\eqref{equation 14} and \eqref{equation 16}, respectively. 
Since $q_R+q_L=1$ and both $\psi_R$ and $\psi_L$ in Equation~\eqref{equation 15} are equal to $\psi$ which appears in Equation~\eqref{equation 17}, we see that Equation~\eqref{equation 15} simplifies to Equation~\eqref{equation 15'} as claimed. 
%


Next we claim that, for $n\in\N_0$, if $v=(v_0,\ldots, v_n)\in L^n_{\text{o}}$ and $p_{\X^{(\Theta^2,\mc V,\rho)}}(v)\ne 0$, then 
\begin{equation}\label{equation 6}
p_{\X^{(\Theta^2,\mc V,\rho)}}(v_{n+1}|v_0,\ldots,v_n)
= \begin{cases}
\frac{1}{2} & \text{if }v_{n+1}=v_n \\
\frac{1}{2} & \text{if }v_{n+1}\text{ is exactly one of }v_n\pm 2 \\
0 & \text{else}
\end{cases},
\end{equation}

\noindent where the exactly one value of $v_{n+1}\in\{v_n-2,v_n+2\}$ with nonzero conditional probability depends on the given sequence $(v_0,\ldots,v_n)$ in the following manner:
\begin{tabular}{l r l}
	$\cdott$ If & & \\
	& (i) & $v_{n-l_v}=v_{n-l_v-1}+2 \text{ and } l_v= 1\text{ mod } 4$, or \\
	& (ii) & $v_{n-l_v}=v_{n-l_v-1}-2 \text{ and } l_v= 3\text{ mod } 4$, \\ \multicolumn{3}{l}{\hspace{0.55cm}then $v_{n+1}=v_n+2$.}
\end{tabular} \\
\begin{tabular}{l r l}
	$\cdott$ If & & \\
	& (iii) & $v_{n-l_v}=v_{n-l_v-1}-2 \text{ and } l_v= 1\text{ mod } 4$, or \\
	& (iv) & $v_{n-l_v}=v_{n-l_v-1}+2 \text{ and } l_v= 3\text{ mod } 4$, \\ \multicolumn{3}{l}{\hspace{0.55cm}then $v_{n+1}=v_n-2$.}
\end{tabular}

\noindent In addition we claim that, for $n\in\N_0$ and $v=(v_0,\ldots, v_n)\in V^{n+1}$, if $v\in L^n_{\text{e}}\cup L^n_{\text{c}}$ and $p_{\X^{(\Theta^2,\mc V,\rho)}}(v)\ne 0$, then
\begin{equation}\label{equation 5}
p_{\X^{(\Theta^2,\mc V,\rho)}}(v_{n+1}|v_0,\ldots,v_n)
=\begin{cases}
\frac{1}{2} & \text{if }v_{n+1}=v_n \\
\frac{1}{4} & \text{if }v_{n+1}=v_n+ 2 \\
\frac{1}{4} & \text{if }v_{n+1}=v_n-2 \\
0 & \text{else}
\end{cases}
\end{equation}


In order to see Equation~\eqref{equation 6}, let $v\in L^n_{\text{o}}$ with $p_{\X^{(\Theta^2,\mc V,\rho)}}(v)\ne 0$ and suppose $v$ satisfies the conditions for Case~(i); i.e. $v_{n-l_v}=v_{n-l_v-1}+2$ and $l_v\equiv 1\mod 4$. 
Since $v_{n-l_v}=v_{n-l_v-1}+2$, the coin state, $d$, on the right hand side of Equation~\eqref{equation 14} is $d=R$. 
Using this, the fact that $l_v\equiv 1\mod 4$ and Equation~\eqref{equation 12}, we see that the coin state, $\psi$, on the right hand sides of Equations~\eqref{equation 16} and \eqref{equation 17} is given by $\psi=R+L$. 
Plugging into Equation~\eqref{equation 15'} and using \eqref{equation 3}, we have 
\begin{equation*}
p_{\X^{(\Theta^2,\mc V,\rho)}}(v_{n+1}|v_0,\ldots,v_n)
= \begin{cases}
\frac{1}{2} & \text{if }v_{n+1}=v_n \\
\frac{1}{2} & \text{if }v_{n+1}=v_n+ 2 \\
0 & \text{else}
\end{cases}
\end{equation*}

\noindent in this case. The other three cases can be done similarly and thus we obtain that Equation~\eqref{equation 6} is satisfied for all $v\in L^n_{\text{o}}$.


Next, for the proof of Equation~\eqref{equation 5}, let $v\in L^n_{\text{e}}$ with $p_{\X^{(\Theta^2,\mc V,\rho)}}(v)\ne 0$. 
By Equations~\eqref{equation 11} and \eqref{equation 11'}, we can see that the coin state $\psi$ in Equation~\eqref{equation 17} is given by $\psi=c$, for some $c\in\{R,L\}$. 
Plugging this value of $\psi$ into Equation~\eqref{equation 15'} and using 
\eqref{U2 transition probs} we can see that the conditional pmf, $p_{\X^{(\Theta^2,\mc V,\rho)}}(v_{n+1}|v_0,\ldots,v_n)$, of $\X^{(\Theta^2,\mc V,\rho)}$ is given by Equation~\eqref{equation 5}, for all $v\in L^n_{\text{e}}$.


It remains only to show that Equation~\eqref{equation 5} is valid for all $\bar v=(v,\ldots, v)\in L^n_{\text{c}}$. 
Since the modulus of $a$ in Equations~\eqref{equation 11}-\eqref{equation 12'} is independent of $c\in\{R,L\}$, we have $q_c=\frac{1}{2}$ in Equation~\eqref{ac probs} for both values of $c$. 
Note that by Equations~\eqref{equation 11}-\eqref{equation 12'} we have that the vector $\psi_c\in H_C$ which appears in Equation~\eqref{equation 15} is given by 
$$ \psi_c= \begin{cases}
c & \text{if }n\equiv 0\text{ mod }4 \\
L+(-1)^{\delta_{c,L}}R & \text{if }n\equiv 1\text{ mod }4 \\
c^\perp & \text{if }n\equiv 2\text{ mod }4 \\
L-(-1)^{\delta_{c,L}}R & \text{if }n\equiv 3\text{ mod }4
\end{cases}$$
Thus if $n$ is even then by Equations~\eqref{equation 15} and \eqref{U2 transition probs} we obtain immediately Equation~\eqref{equation 5}. 
If $n$ is odd we examine the cases $n\equiv 1\mod 4$ and $n\equiv 3\mod 4$ separately. 
If $n\equiv 1\mod 4$ then $\psi_R=L+R$, $\psi_L=L-R$ and Equations~\eqref{equation 15} and \eqref{equation 3} give Equation~\eqref{equation 5}. 
The case of $n\equiv 3\mod 4$ can be verified similarly.


We are now set to show $h^{SZ}_{\text{dyn}}(\Theta^2, \mc V, \rho, \mc A)=\frac{4}{3}\ln 2$.  
By direct calculation, we have that 
\begin{align}\nonumber
H_{\mu^{(\Theta^2,\mc V,\rho)}}&(X^{(\Theta^2,\mc V,\rho)}_{n+1}|(X^{(\Theta^2,\mc V,\rho)}_{0},\ldots, X^{(\Theta^2,\mc V,\rho)}_{n})) \\ \nonumber
&= \sum_{\substack{v_k\in V\\ 0\le k\le n}} p_{\X^{(\Theta^2,\mc V,\rho)}}(v_0,\ldots,v_n)
\sum_{v_{n+1}\in V} \eta(p_{\X^{(\Theta^2,\mc V,\rho)}}(v_{n+1}|v_0,\ldots,v_n))
\quad\text{by \eqref{conditional entropy}} \\
\nonumber 
&= \mu^{(\Theta^2,\mc V,\rho)}(L^n_{\text e}\cup L^n_{\text{c}})(2\eta(\frac{1}{4})+\eta(\frac{1}{2})) +
\mu^{(\Theta^2,\mc V,\rho)}(L^n_{\text o})(2\eta(\frac{1}{2})) 
\text{ by \eqref{equation 6} and \eqref{equation 5}} \\ \label{equation 8}
&= \mu^{(\Theta^2,\mc V,\rho)}(L^n_{\text e}\cup L^n_{\text{c}})\frac{3}{2}\ln 2 + \mu^{(\Theta^2,\mc V,\rho)}(L^n_{\text o})\ln 2,
\end{align}

\noindent where $$\mu^{(\Theta^2,\mc V,\rho)}(L^n_{x})= \sum_{(v_0,\ldots, v_n)\in L^n_{x}} p_{\X^{(\Theta^2,\mc V,\rho)}}(v_0,\ldots,v_n)\text{ for each $x\in \{\text{c}, \text{e}, \text{o}\}.$}$$ 
It remains only to solve for $$\lim_{n\ra\infty}\mu^{(\Theta^2,\mc V,\rho)}(L^n_{x})\text{ for each $x\in \{\text{c},\text{e},\text{o}\}$.}$$ 
Notice that, by definition of $L^n_{\text{c}}$, $$\mu^{(\Theta^2,\mc V,\rho)}(L^n_{\text{c}})= \sum_{v\in V} p_{\X^{(\Theta^2,\mc V,\rho)}}(\underbrace{v, \ldots,v}_{n+1\text{ times}}), \text{ for each $n\in \N_0.$}$$ 
Using Equation~\eqref{equation 5} $n$ times and Equation~\eqref{coarser instrument probabilities 2} to see that $p_{\X^{(\Theta^2,\mc V,\rho)}}(v)=\frac{1}{N}$, we obtain that $p_{\X^{(\Theta^2,\mc V,\rho)}}(\underbrace{v, \ldots,v}_{n+1\text{ times}})=\frac{1}{2^nN}$ and hence 
\begin{equation}\label{fixed probabilities}
\mu^{(\Theta^2,\mc V,\rho)}(L^n_{\text{c}}) =\frac{1}{2^n},\text{ for each $n\in\N_0$.} 
\end{equation}  

For ease of notation, set $e_n=\mu^{(\Theta^2,\mc V,\rho)}(L^n_{\text e})$, $o_n=\mu^{(\Theta^2,\mc V,\rho)}(L^n_{\text{o}})$ and $c_n=\mu^{(\Theta^2,\mc V,\rho)}(L^n_{\text{c}})$, for each $n\in \N_0$. 
Notice that $e_0=o_0=0$, $c_0=1$ and $c_n=\frac{1}{2^n}$, for all $n\in\N$, by Equation~\eqref{fixed probabilities}. 
For each $v=(v_0,\ldots, v_{n-1})\in V^n$ and $v_n\in V$ let $v\circ v_n=(v_0,\ldots, v_n)\in V^{n+1}$. 
Suppose $v\in L^{n-1}_{\text{o}}$. If $v_n=v_{n-1}$ then $l_v+1=l_{v\circ v_n}$ and if $v_n\ne v_{n-1}$ then $l_{v\circ v_n}=0$ and thus $p_{\X^{(\Theta^2,\mc V,\rho)}}(v\circ v_n\in L^n_{\text{e}}|v\in L^n_{\text{o}})=1$ and $p_{\X^{(\Theta^2,\mc V,\rho)}}(v\circ v_n\in L^n_{x}|v\in L^n_{\text{o}})=0$ for $x\in\{\text{o, c}\}$. 
Suppose $v\in L^{n-1}_{\text{e}}\cup L^{n-1}_{\text{c}}$. Then $v\circ v_n\in L^n_{\text{e}}$ exactly when $v_n\ne v_{n-1}$. Therefore Equation~\eqref{equation 5} gives $p_{\X^{(\Theta^2,\mc V,\rho)}}(v\circ v_n\in L^n_{\text{e}}|v\in L^n_{x})=\frac{1}{2}$ for $x\in\{\text{e, c}\}$. 
Therefore 
\begin{equation}\label{a' recursion}
e_n=o_{n-1} +\frac{1}{2}(e_{n-1}+c_{n-1}),\text{ for all $n\in\N$.}
\end{equation}

\noindent Equation~\eqref{equation 5} also gives that $p_{\X^{(\Theta^2,\mc V,\rho)}}(v\circ v_n\in L^n_{\text{o}}|v\in L^n_{\text{e}})=\frac{1}{2}$ and since $p_{\X^{(\Theta^2,\mc V,\rho)}}(v\circ v_n\in L^n_{\text{o}}|v\in L^n_{\text{c}})=0$, we have 
\begin{equation}\label{b recursion}
o_n= \frac{1}{2}e_{n-1},\text{ for all }n\in\N.
\end{equation}

\noindent Therefore 
\begin{equation}\label{a recursion}
e_n=\frac{1}{2}(e_{n-1}+e_{n-2}+c_{n-1}), \text{ for all }n\ge 2.
\end{equation}

\noindent We claim that the limits $e:=\lim_{n\ra\infty}e_n$ and $o:=\lim_{n\ra\infty}o_n$ both exist. It is enough, by Equation~\eqref{b recursion}, to show that the limit $e$ exists. 
To see this we show that $(e_n)_{n\in\N_0}$ is increasing and bounded. 
We show that $(e_n)_{n\in\N_0}$ is increasing by induction. Since $e_0=0$ and $e_1=\frac{1}{2}$ (by Equation~\eqref{a' recursion}), the base case is done. Next, fix $n\in\N$ with $n\ge 2$, suppose that $e_{m-1}< e_m$ for all $m\in\{1,\ldots, n-1\}$. 
Then, by Equation~\eqref{a recursion}, it is enough to show that $e_{n-2}+c_{n-1}> e_{n-1}$. We see that, 
\begin{align*}
e_{n-1}= \frac{1}{2}(e_{n-2}+e_{n-3}+c_{n-2})< e_{n-2}+c_{n-1},
\end{align*}

\noindent where the inequality follows by the inductive hypothesis and the fact that $\frac{1}{2}c_{n-2}=c_{n-1}$. 
Therefore $(e_n)_{n\in\N_0}$ is increasing and trivially bounded by 1 and both the limits $e$ and $o$ exist. 
Furthermore, $1=e_n+o_n+c_n$, for all $n\in\N_0$, because $\mc L^n$ is a partition of $V^*$ and $\lim_{n\ra\infty}c_n=0$ by Equation~\eqref{fixed probabilities}. 
Hence $1=e+o=\frac{3e}{2}$, $e=\frac{2}{3}$ and $o=\frac{1}{3}$. 
Taking the limit in Equation~\eqref{equation 8}, we see that 
\begin{equation*}
\lim_{n\ra\infty}(e_n+c_n)\frac{3}{2}\ln 2 +
o_n\ln 2= \frac{4}{3}\ln 2.
\end{equation*}

Therefore $h^{SZ}_{\text{dyn}}(\Theta^2,\mc V, \rho) =h^{SZ}_{\text{dyn}}(\Theta^2,\mc V, \rho, \mc A)= \frac{4}{3}\ln 2$ as desired. 
\end{proof}

The fact that $h^{SZ}_{\text{dyn}}(\Theta^2,\mc V, \rho, \mc A)\ne 
h^{SZ}_{\text{dyn}}(\Theta^2 , \mc T, \rho, \mc C_V)$ provides further evidence of the sensitivity of quantum systems to measurement.

%
%

\bibliographystyle{amsplain}

\bibliography{../references}

\end{document}